\pdfoutput=1
\documentclass[11pt]{article}

\usepackage[T1]{fontenc}
\usepackage[utf8]{inputenc}
\usepackage{lmodern}
\usepackage[a4paper,margin=1in]{geometry}
\usepackage{microtype}
\usepackage{amsmath,amssymb,amsthm,mathtools}
\usepackage{aliascnt}
\usepackage{booktabs,longtable,array,multirow}
\usepackage{graphicx}
\usepackage{float}
\usepackage{enumitem}
\usepackage{xcolor}
\usepackage{natbib}
\usepackage{url}
\usepackage{xspace}
\usepackage{hyperref}
\usepackage[nameinlink,capitalise,noabbrev]{cleveref}
\usepackage{listings}
\usepackage{tikz}
\usetikzlibrary{arrows.meta,positioning,shapes.geometric,fit,calc}

\hypersetup{
  colorlinks=true,
  linkcolor=black,
  citecolor=black,
  urlcolor=black,
  pdftitle={Axient: Debt-Free Finality for Leveraged Binary Event Markets},
  pdfauthor={Maksym Nechepurenko},
  pdfsubject={A robust physically backed margin-layer mechanism with pre-finality debt extinguishment for leveraged binary event markets},
  pdfkeywords={prediction markets, event contracts, leverage, robust optimization, margin lending, conditional tokens, oracle disputes, liquidation, decentralized finance}
}
\newtheorem{definition}{Definition}[section]
\newaliascnt{assumption}{definition}
\newtheorem{assumption}[assumption]{Assumption}
\aliascntresetthe{assumption}
\newaliascnt{proposition}{definition}
\newtheorem{proposition}[proposition]{Proposition}
\aliascntresetthe{proposition}
\newaliascnt{theorem}{definition}
\newtheorem{theorem}[theorem]{Theorem}
\aliascntresetthe{theorem}
\newaliascnt{corollary}{definition}
\newtheorem{corollary}[corollary]{Corollary}
\aliascntresetthe{corollary}
\newaliascnt{lemma}{definition}
\newtheorem{lemma}[lemma]{Lemma}
\aliascntresetthe{lemma}
\newaliascnt{remark}{definition}

\aliascntresetthe{remark}
\newaliascnt{example}{definition}

\aliascntresetthe{example}

\crefname{definition}{definition}{definitions}
\Crefname{definition}{Definition}{Definitions}
\crefname{assumption}{assumption}{assumptions}
\Crefname{assumption}{Assumption}{Assumptions}
\crefname{proposition}{proposition}{propositions}
\Crefname{proposition}{Proposition}{Propositions}
\crefname{theorem}{theorem}{theorems}
\Crefname{theorem}{Theorem}{Theorems}
\crefname{corollary}{corollary}{corollaries}
\Crefname{corollary}{Corollary}{Corollaries}
\crefname{lemma}{lemma}{lemmas}
\Crefname{lemma}{Lemma}{Lemmas}
\crefname{remark}{remark}{remarks}
\Crefname{remark}{Remark}{Remarks}
\crefname{example}{example}{examples}
\Crefname{example}{Example}{Examples}

\newcommand{\Axient}{\textnormal{\textsc{Axient}}\xspace}

\newcommand{\E}{\mathbb{E}}
\newcommand{\Prob}{\mathbb{P}}
\newcommand{\F}{\mathcal{F}}

\newcommand{\Sset}{\mathcal{S}}
\newcommand{\Qset}{\mathcal{Q}}
\newcommand{\Uop}{\mathcal{U}^{\mathrm{op}}}
\newcommand{\Ustress}{\mathcal{U}^{\mathrm{stress}}}
\newcommand{\Ufail}{\mathcal{U}^{\mathrm{fail}}}
\newcommand{\dd}{\mathrm{d}}
\newcommand{\yes}{\mathrm{YES}}
\newcommand{\no}{\mathrm{NO}}
\newcommand{\hf}{\mathrm{HF}}
\newcommand{\state}[1]{\texttt{#1}}
\newcommand{\pospart}[1]{\left[#1\right]^+}

\lstdefinestyle{axientcode}{
  basicstyle=\ttfamily\small,
  frame=single,
  breaklines=true,
  columns=fullflexible,
  showstringspaces=false,
  keywordstyle=\bfseries,
  commentstyle=\itshape,
  xleftmargin=0.5em,
  xrightmargin=0.5em
}
\title{\textbf{Axient: Debt-Free Finality for Leveraged Binary Event Markets}}
\author{Maksym Nechepurenko\thanks{Founder and Director of Research, ForesightFlow, the Research Department of Devnull FZCO, Dubai, United Arab Emirates. Email: \href{mailto:maksym@devnull.ae}{maksym@devnull.ae}. Research profile: \url{https://www.foresightflow.org/}.}}
\date{July 13, 2026}

\begin{document}
\maketitle

\begin{abstract}
Leveraged event positions combine two risks that ordinary spot prediction markets keep separate: a loan must be repaid while the financed outcome claim can become non-tradable before its oracle payout is final. A post-trade accounting identity alone is not enough to control that risk, because the amount implied by a visible order book can differ from the amount that eventually matches, settles, and becomes available for debt repayment. This paper specifies \Axient, a physically backed margin layer for binary event markets that separates \emph{leverage maturity} from \emph{claim maturity} and makes the hard-flat decision under explicit execution uncertainty.

The model is defined on a filtered probability space and distinguishes four economic objects: quoted book proceeds, matched proceeds, settled proceeds, and final redemption. At a hard-flat decision time $u$, an execution policy is evaluated over a registered operating uncertainty set. The protocol sells the smallest admissible quantity whose \emph{lower settled-proceeds envelope} covers an upper bound on debt at the settlement horizon plus an explicit buffer. The realized minimum sale is then computed from confirmed fills as an audit quantity; it is not used circularly to choose an order before execution. We prove a robust ex-ante debt-clearing theorem, pathwise debt-extinguishment and debt-free-finality invariants, maximality of residual spot exposure on the realized path, payout-vector and dispute-duration invariance of lender credit-principal exposure, and a necessary-condition result showing that positive debt cannot be outcome-invariant when the held token may pay zero and no external collateral exists.

The paper also derives an exact book-dependent leverage envelope, relegating the familiar scalar recovery-ratio formula to a linear-execution benchmark; formalizes aggregate hard-flat capacity so that multiple positions cannot reuse the same top-of-book liquidity; and gives scenario-conditional reserve bounds using shared execution curves rather than position-by-position double counting. An impossibility theorem identifies the boundary: no backend-only mechanism with leverage above one can guarantee zero shortfall if actual market closure, signer control, settlement, or executable liquidity may fall outside the registered operating set.

No external dataset is required. A deterministic verifier evaluates finite step books, partial fills, settlement delay, robust envelopes, adversarial book transformations, shared-book liquidation, reserve allocation, zero liquidity, and payout vectors in $\{0,\tfrac12,1\}$. The operating set and the broader registered stress set are author-specified in this paper; empirical calibration and out-of-sample validation of these sets are reserved for a separate study. The contribution is a conditional mechanism-design result and reference implementation boundary, not a production-safety claim. \Axient removes terminal outcome and dispute duration from the lender's loan channel only after confirmed debt extinction; venue, custody, oracle, chain, and operational risks remain explicit.
\end{abstract}

\noindent\textbf{Keywords:} prediction markets; event contracts; leverage; margin lending; robust optimization; optimal execution; conditional tokens; oracle disputes; liquidation; decentralized finance.\\
\textbf{JEL Classification:} G13, G14, G18, G23.

\clearpage
\noindent\textit{Research and implementation disclosure.} \Axient is an active research-and-development initiative of the author.\footnote{A non-archival project page is maintained at \url{https://axient.app}.} This creates an interest in the mechanism's success. The paper therefore separates pathwise invariants, robust-set guarantees, scenario-conditional statements, implementation assumptions, and unvalidated deployment claims. It does not claim production safety or deployability beyond the conditions stated in each result.

\section{Introduction}
\label{sec:introduction}

\subsection{Motivation}

Prediction markets and perpetual-futures-style trading systems are converging, but their balance-sheet clocks are not. A conventional crypto perpetual assumes that the reference asset remains continuously tradable and that forced liquidation can, in principle, convert collateral into cash throughout the life of the position. A binary event claim is structurally different. Its quoted price lies in $[0,1]$; the venue may stop accepting orders before the event is final; settlement may pass through match, on-chain confirmation, proposal, challenge, dispute, and redemption; and the held outcome token may ultimately pay zero.

The first paper in the ForesightFlow Event-Linked Perpetuals programme formalized the non-portability of ordinary perpetual mechanics to bounded event underlyings and evaluated a resolution-aware engine on observed Polymarket paths \citep{nechepurenko2026resolutionaware}. Two negative findings motivate the present design. First, terminal jumps were economically large relative to ordinary leveraged collateral. Second, a staged halt reduced in-flight final-hour liquidations but did not reduce terminal-jump bad debt: the dominant loss channel lived in the financed balance sheet, not in the halt. The companion taxonomy, manipulation, and supply-side papers extend the design space, show how leverage changes manipulation and informed-trading rents, and document the importance and observability limits of hybrid-CLOB liquidity \citep{nechepurenko2026taxonomy,nechepurenko2026manipulation,nechepurenko2026fillside}.

This paper asks a narrower question:

\begin{quote}
\emph{Can a margin layer over an existing spot event venue provide meaningful pre-event leverage while terminating the lender's event-outcome exposure before the claim enters non-tradable finality states, and can the required sale be selected ex ante under changing order books and asynchronous settlement?}
\end{quote}

The first half of the answer is an accounting architecture: extinguish debt before finality and leave only a fully funded residual claim. The second half is an execution-control problem: the sale quantity must be selected before the future fills and settlement are known. The revision developed here treats these as distinct results rather than using a realized post-settlement curve as if it were an ex-ante order instruction.

\subsection{Two clocks and four execution objects}

A leveraged event position has two maturities:
\begin{align}
\text{leverage maturity} &= \sigma, \\
\text{claim maturity} &= \tau_f,
\end{align}
where $\sigma$ is the confirmed time at which the debt has been extinguished and $\tau_f\geq\sigma$ is the time at which the payout vector becomes final. A third time $\tau_r\geq\tau_f$ records confirmed redemption into cash. The interval $[\sigma,\tau_f]$ may contain delay, challenge, dispute, override, or void; the interval $[\tau_f,\tau_r]$ may contain redemption latency. Neither interval carries the extinguished loan.

The execution layer distinguishes four objects:
\begin{enumerate}[label=(\roman*)]
  \item the proceeds implied by the book observed when the hard-flat decision is made;
  \item the proceeds recorded by the matcher;
  \item the proceeds confirmed by the settlement layer and available for debt repayment; and
  \item the cash received when a final outcome token is redeemed.
\end{enumerate}
The first is a quote, the second is an execution record, the third is an accounting asset, and the fourth is final user liquidity. Conflating them creates a circular hard-flat rule and hides settlement risk.

\subsection{Canonical mechanism}

A user contributes collateral $C$ and acquires actual YES or NO outcome tokens using $C$ plus a separately accounted loan. Before the underlying venue becomes non-tradable, the mechanism enters reduce-only and then hard-flat. At decision time $u$, it selects the smallest admissible sale quantity whose conservative lower envelope of \emph{settled} proceeds covers the maximum debt expected over the settlement horizon plus an explicit buffer. The execution policy may use IOC orders, retries, RFQ, or another venue path; only confirmed proceeds reduce debt. Once debt is zero, any residual tokens become ordinary fully funded spot claims.

This operation is \emph{auto-deleverage to spot}. It does not require a separate public derivatives CLOB, a new outcome token, or a new final-resolution oracle. It does require control over the financed asset while debt is positive, authority to submit risk-reducing orders, an accounting-grade settlement signal, a double-entry ledger, and a policy for the case in which the operating uncertainty set is breached.

\subsection{What is and is not proved}

The paper separates three levels of statement.

\paragraph{Pathwise accounting invariants.}
If settled proceeds are applied to the loan and debt becomes zero, later payout and dispute duration do not recreate the extinguished receivable. These statements are exact but conditional.

\paragraph{Ex-ante robust execution guarantee.}
If the actual execution path belongs to a registered operating uncertainty set, the venue remains tradable for the registered horizon, the controller retains signing and collateral authority, and the lower settled-proceeds envelope is respected, then the planned sale clears debt by the horizon. This is stronger than the pathwise invariant because the sale is selected before execution.

\paragraph{Failure boundary.}
No backend-only mechanism with $L>1$ can guarantee zero shortfall over paths that include premature closure, complete disappearance of executable liquidity, loss of liquidation authority, or settlement failure outside the registered horizon, unless independent collateral or a third-party guarantee covers the gap.

Thus the claim is not that every position is always safe. The claim is that the mechanism defines a precise operating region in which terminal payout and dispute duration are removed from the lender's loan channel, and a precise failure region in which external capital or recovery rules are required.

\subsection{Contributions}

The paper makes eleven contributions.

\begin{enumerate}[label=\arabic*.]
  \item \textbf{A physically backed two-clock instrument.} Leveraged exposure is represented by real outcome tokens and a separate loan; leverage maturity is separated from claim and redemption maturity.

  \item \textbf{A stochastic execution setting.} The venue is modeled on a filtered probability space with scheduled and actual close times and separate decision, match, settlement, finality, and redemption times.

  \item \textbf{Three proceeds curves and one redemption value.} Quoted, matched, and settled proceeds are defined separately. Only settled proceeds can extinguish debt; final redemption is a later user-cash event.

  \item \textbf{An ex-ante robust auto-deleverage operator.} The planned sale is the minimum quantity certified by a lower settled-proceeds envelope and an upper debt-service envelope over a registered operating uncertainty set.

  \item \textbf{A realized audit minimum.} After settlement, the smallest quantity that actually sufficed is computed from confirmed fills. It supports residual-exposure maximality and overshoot analysis without circularly determining the earlier order.

  \item \textbf{Robust debt-clearing and debt-free-finality results.} Under stated control, execution, and settlement assumptions, debt is cleared by the registered horizon; after confirmed extinction, lender credit-principal exposure is invariant to payout vector and dispute duration.

  \item \textbf{A necessary-condition result.} If the held token can pay zero and no independent collateral or guarantee covers the loan, positive debt surviving into finality is incompatible with outcome-invariant lender safety.

  \item \textbf{An exact book-dependent leverage envelope.} Entry cost, robust exit proceeds, debt growth, fees, and buffers determine admissible leverage. The scalar recovery-ratio formula is retained only as a linear benchmark corollary, not a production sizing rule.

  \item \textbf{Aggregate liquidity discipline.} Multiple positions in the same risk bucket cannot each reuse the same top-of-book depth. We define shared-book feasibility, a lowest-$\Gamma_i$ deterministic execution priority, and reserve/OI limits from aggregate rather than position-by-position proceeds.

  \item \textbf{Operating-set and failure-set separation.} Robust guarantees are attached to a registered operating set; premature close, zero liquidity, signer failure, and settlement failure belong to an explicit failure set covered only by restrictions, guarantees, or reserves.

  \item \textbf{A deterministic verifier and implementation capability map.} The accompanying code checks the formal invariants on finite order books, robust scenarios, shared-book liquidation, and dispute states without external data, while the architecture section maps the assumptions to current hybrid-venue capabilities and trust boundaries.
\end{enumerate}

\subsection{Relationship to the ForesightFlow programme}

This is a standalone mechanism-design study, not an additional numbered paper in the four-paper Event-Linked Perpetuals series. It uses the series as foundation and addresses one of its negative findings through a different instrument architecture. The earlier framework carried synthetic leverage toward event resolution; \Axient instead terminates the loan before finality and preserves only fully funded residual directionality.

\subsection{Scope}

The base specification covers binary YES/NO markets, isolated margin, physically backed positions, and auto-deleverage to spot. It excludes cross-margin, portfolio offsets, multi-outcome markets, endogenous liquidity equilibrium, public permissionless lending, and production claims about any named venue. Current venue documentation is used only to map implementation capabilities; the formal results remain venue-agnostic.

No external dataset is used in this revision. Numerical examples are deterministic verification cases, not empirical calibration. A later empirical study can estimate the uncertainty set, settlement-horizon quantiles, and aggregate depth limits from venue data.

\subsection{Roadmap}

\Cref{sec:related} positions the work. \Cref{sec:setting} defines the stochastic venue, control assumptions, and maturity clocks. \Cref{sec:execution} separates quoted, matched, and settled proceeds. \Cref{sec:auto} defines the robust planned sale and realized audit minimum. \Cref{sec:safety} proves the principal invariants. \Cref{sec:risk} derives exact leverage and margin constraints. \Cref{sec:aggregate} treats shared liquidity and open-interest limits. \Cref{sec:finality} formalizes dispute and redemption. \Cref{sec:impossibility} separates operating and failure sets and gives reserve bounds. \Cref{sec:deterministic} reports deterministic verification. \Cref{sec:architecture,sec:security} map the model to implementation and trust. \Cref{sec:limitations} states what remains unproved.

\section{Related Work}
\label{sec:related}

The paper sits at the intersection of event-contract design, optimal execution, robust optimization, secured lending, and oracle-mediated settlement. The relevant literatures solve different parts of the problem; none, to our knowledge, combines pre-finality debt extinction with a robust settled-proceeds controller for physically backed event positions.

\subsection{Event-linked leverage and bounded-event microstructure}

The direct foundation is the ForesightFlow Event-Linked Perpetuals programme. The first paper establishes the bounded-support, terminal-collapse, and resolution-zone risks of synthetic event perpetuals and reports that halt-side mechanics do not remove terminal-jump bad debt \citep{nechepurenko2026resolutionaware}. The taxonomy paper identifies which structural properties carry to conditional, spread, basket, volatility, liquidity, rolling, and funding-only variants \citep{nechepurenko2026taxonomy}. The manipulation paper separates market-price manipulation from real-world outcome manipulation and shows how leverage changes both informed-trading rents and manipulation incentives \citep{nechepurenko2026manipulation}. The supply-side paper documents fill-side concentration and the structural limits of address-level quote-lifecycle attribution on a hybrid event CLOB \citep{nechepurenko2026fillside}.

The present paper changes the instrument rather than further tuning the synthetic engine. Its central object is a loan secured operationally by actual outcome tokens during the tradable interval, followed by debt extinction and residual spot ownership before oracle finality.

Prediction-market prices need not equal objective probabilities because risk preferences, capital constraints, fees, information, and market structure enter the quoted price \citep{manski2006interpreting,wolfers2004prediction}. The model therefore treats price as an execution input rather than a calibrated probability. Conditional-token systems provide the relevant physical payoff primitive: collateral is split into complementary claims and later redeemed against a payout vector \citep{gnosis2020conditional}. Optimistic-oracle systems add a proposal and challenge interval between the real-world event and final payout \citep{uma2026oracle}.

\subsection{Optimal execution and order-book uncertainty}

Large-order execution is classically a dynamic control problem rather than a static mark-to-market calculation. Bertsimas and Lo derive adaptive execution strategies under stochastic prices and impact \citep{bertsimas1998execution}, while Almgren and Chriss characterize the frontier between expected execution cost and execution risk \citep{almgren2001execution}. Limit-order-book models make the same point at finer granularity: order arrivals, cancellations, queue position, and price changes make future proceeds random even when the current book is fully observed \citep{cont2010orderbook}.

\Axient does not claim to solve the general optimal-execution problem. Its objective is narrower: certify a minimum quantity sufficient to repay debt over a finite hard-flat horizon. The distinction between quoted, matched, and settled proceeds is the mechanism-design counterpart of execution shortfall. The controller is conservative: it works with a lower settled-proceeds envelope rather than an expected-price objective.

\subsection{Robust and stochastic optimization}

Robust optimization replaces an assumed probability distribution with an explicit uncertainty set and asks for feasibility across every path in that set \citep{bertsimas2004price}. Stochastic programming instead uses a probability model, recourse, and distributional objectives or constraints \citep{shapiro2014stochastic}. Both are natural frameworks for hard-flat control.

This revision uses a robust-set statement because no venue-specific empirical distribution is required. At a decision time $u$, the operator registers an operating set containing admissible book changes, fees, partial fills, retries, and settlement delays. The lower settled-proceeds envelope is the worst outcome inside that set. Later empirical work may replace or complement it with a conditional quantile or chance constraint. The paper deliberately avoids adding an arbitrary Gaussian error to a cumulative order-book curve: such a representation can violate non-negativity, quantity monotonicity, lot structure, and the discrete failure states that dominate event-venue risk.

\subsection{Risk measures, margin, and secured lending}

Coherent-risk-measure and CVaR frameworks provide useful language for aggregating loss distributions and stress scenarios \citep{artzner1999coherent,rockafellar2000optimization}. The paper uses them as design references but does not claim that its maintenance buffer is a uniquely optimal coherent risk measure. The central credit result is more elementary: a claim that may pay zero cannot support outcome-invariant positive debt without external collateral or a guarantee.

The secured-lending analogy is incomplete unless collateral control is enforceable. A bookkeeping entry that labels outcome tokens as collateral does not prevent a borrower from withdrawing them, refusing to sign a liquidation order, or invoking an emergency path. Accordingly, the formal model includes execution-authority and collateral-nonescape assumptions. The implementation can satisfy them through a controlled subaccount, a venue-recognized margin vault, or another enforceable lien; these variants differ materially in trust.

\subsection{Hybrid venues and asynchronous settlement}

Hybrid event venues commonly combine off-chain matching with on-chain custody and settlement. Public PredictStreet documentation, for example, describes REST and WebSocket order-book access, EIP-712 signed orders, a match-to-settlement lifecycle, per-user outcome-token vaults, scheduled market-close fields, oracle challenges, delayed/void states, and asynchronous redemption \citep{predictstreet2026marketoverview,predictstreet2026websocket,predictstreet2026orderlifecycle,predictstreet2026vaults,predictstreet2026challenges,predictstreet2026voiddelay,predictstreet2026redeem}. Those primitives are sufficient to motivate the abstraction but do not automatically create a lender lien or delegated liquidation right.

The formal paper therefore does not treat a named platform as part of a theorem. A venue capability matrix in \Cref{sec:architecture} records which public interfaces support observation, which require a controlled signer, and which would require venue-side contract changes. This separation prevents a conditional mathematical result from being mistaken for a claim that a particular public API already supplies every enforcement primitive.

\subsection{Boundary with synthetic event perpetuals}

A synthetic event perpetual keeps a bilateral or pooled derivative obligation alive through a period in which the underlying event may no longer trade. A physically backed margin layer instead owns the outcome claim and can extinguish its loan before finality. The two structures may expose a similar user PnL before hard-flat, but their terminal balance sheets differ. The present paper analyzes only the latter and does not claim that its results automatically transfer to a cash-settled synthetic venue.

\section{Stochastic Setting, Instrument, and Control Assumptions}
\label{sec:setting}

\subsection{Probability space and information}

Work on a filtered probability space
\begin{equation}
(\Omega,\F,(\F_t)_{t\geq0},\Prob),
\label{eq:filtered-space}
\end{equation}
where $\F_t$ contains the order-book observations, venue status, matched trades, settlement confirmations, account balances, oracle messages, and chain events available to the mechanism by time $t$. All decision rules are adapted to $(\F_t)$. No assumption is made that order-book evolution is Gaussian, stationary, or independent of the mechanism's own trades.

\subsection{Binary event and payout vector}

Let $E$ be a binary event represented by complementary outcome claims $\yes$ and $\no$. The outcome-token system ultimately publishes a final payout vector
\begin{equation}
\boldsymbol{\pi}=(\pi_{\yes},\pi_{\no})\in[0,1]^2,
\qquad
\pi_{\yes}+\pi_{\no}=1.
\label{eq:payout-vector}
\end{equation}
The canonical states are $(1,0)$ and $(0,1)$. A void or invalid state may use another vector, including $(\tfrac12,\tfrac12)$, but the mechanism never hardcodes a neutral value. The authoritative input is the final vector actually recognized by the outcome-token contract or economically equivalent settlement system.

A user selects a held token $h\in\{\yes,\no\}$. Long-YES is $h=\yes$; short-YES is implemented as long-NO, $h=\no$, rather than as unsecured negative YES inventory.

\begin{definition}[Fully funded event claim]
A quantity $q\geq0$ of held token $h$ is fully funded if the outcome-token system has already locked the collateral required to pay $q\pi_h$ for every admissible payout vector and the holder has no borrowing obligation associated with that quantity.
\end{definition}

\begin{assumption}[Outcome collateralization]
\label{ass:collateralization}
Final redemption of outcome tokens is paid from collateral economically separate from the \Axient lender's loan receivable.
\end{assumption}

\subsection{Venue abstraction}

The underlying venue supplies:
\begin{enumerate}[label=(V\arabic*)]
  \item a market catalogue and observable market state;
  \item a quoted execution surface while the market is open;
  \item an order path that may include submission, matching, and asynchronous settlement;
  \item custody or accounting of acquired outcome tokens;
  \item a final payout vector after oracle finality; and
  \item redemption of outcome tokens into collateral.
\end{enumerate}
The venue may be a hybrid CLOB, RFQ system, AMM, or another mechanism. Order matching need not be on-chain. The model only requires that settlement confirmation and final payout are observable.

\subsection{Decision, execution, and finality times}

Let $t_0$ be the confirmed position-open time. The mechanism distinguishes:
\begin{itemize}
  \item $\sigma_R$: reduce-only transition;
  \item $u$: hard-flat decision and execution-start time;
  \item $\nu$: time of the last match used by the debt-clearing execution policy;
  \item $\sigma$: confirmed time at which settled proceeds have been applied and debt is zero;
  \item $T_c^{\mathrm{sched}}$: venue-published scheduled close time;
  \item $T_c^{\mathrm{act}}=\inf\{t:\text{risk-reducing orders are no longer accepted}\}$: actual close time;
  \item $\tau_p$: first provisional oracle proposal;
  \item $\tau_f$: final payout-vector time after any challenge or delay; and
  \item $\tau_r$: confirmed redemption time for the residual outcome token.
\end{itemize}

The intended successful ordering is
\begin{equation}
 t_0<\sigma_R\leq u\leq\nu\leq\sigma<T_c^{\mathrm{act}}\leq\tau_p\leq\tau_f\leq\tau_r.
\label{eq:successful-ordering}
\end{equation}
All times except the scheduled close may be random. If debt is never extinguished, set $\sigma=\infty$. If redemption is never confirmed, set $\tau_r=\infty$. A published $T_c^{\mathrm{sched}}$ is a planning input, not a guarantee that $T_c^{\mathrm{act}}$ cannot occur earlier.

\begin{definition}[Leverage maturity]
Leverage maturity is $\sigma$, the first confirmed time at which debt is zero after applying settled hard-flat proceeds. It is not order submission, match, or scheduled close.
\end{definition}

\begin{definition}[Claim and cash maturity]
Claim maturity is $\tau_f$, when the payout vector becomes final. Cash maturity is $\tau_r$, when redemption is confirmed. In general $\sigma\leq\tau_f\leq\tau_r$.
\end{definition}

\begin{figure}[H]
\centering
\resizebox{0.99\textwidth}{!}{%
\begin{tikzpicture}[
  node distance=0.65cm and 0.75cm,
  every node/.style={font=\small},
  box/.style={draw, rounded corners, minimum height=0.72cm, align=center, inner xsep=6pt},
  arrow/.style={-{Latex[length=2mm]}, thick},
  edge label/.style={font=\scriptsize, fill=white, inner sep=1pt}
]
\node[box] (open) {\state{OPEN}\\leveraged};
\node[box, right=of open] (reduce) {\state{REDUCE\_ONLY}};
\node[box, right=of reduce] (decision) {decision $u$\\robust sale cap};
\node[box, right=of decision] (match) {match $\nu$\\not final};
\node[box, right=of match] (settled) {settlement $\sigma$\\$D=0$};
\node[box, below=of settled] (pending) {\state{PENDING\_FINALITY}\\$\tau_p$};
\node[box, left=of pending] (dispute) {\state{DELAYED / DISPUTED}};
\node[box, right=of pending] (final) {final vector $\tau_f$};
\node[box, right=of final] (redeem) {redemption $\tau_r$};

\draw[arrow] (open) -- node[edge label, above] {$\sigma_R$} (reduce);
\draw[arrow] (reduce) -- (decision);
\draw[arrow] (decision) -- (match);
\draw[arrow] (match) -- (settled);
\draw[arrow] (settled) -- (pending);
\draw[arrow] (pending) -- (dispute);
\draw[arrow] (dispute.east) -- (final.west);
\draw[arrow] (pending) -- (final);
\draw[arrow] (final) -- (redeem);
\end{tikzpicture}%
}
\caption{The \Axient clocks. The sale decision precedes match and settlement. The transition into finality states is permitted only after settlement has extinguished debt. Final payout and redemption are later, distinct events.}
\label{fig:lifecycle}
\end{figure}

\subsection{Position construction}

A user contributes collateral $C>0$ and requests leverage $L\geq1$. Gross acquisition budget and initial principal are
\begin{equation}
N=LC,
\qquad
D_0=(L-1)C.
\label{eq:position-budget-debt}
\end{equation}
The confirmed acquisition cost determines token quantity $q_0$ and any residual cash $K_0\geq0$ as defined in \Cref{sec:execution}. Debt accrues until $\sigma$ and never after it.

This is not a peer-to-peer synthetic long-short obligation. The balance sheet contains an identifiable spot asset $q_t$, free cash $K_t$, and loan receivable $D_t$.

\subsection{Execution authority and collateral control}

The ex-ante guarantee requires more than a price model.

\begin{assumption}[Risk-reducing execution authority]
\label{ass:authority}
While $D_t>0$, the \Axient controller or an enforceable on-chain policy can submit, replace, cancel, and settle permitted risk-reducing orders for the financed position without requiring a new discretionary signature from the borrower at the time of hard-flat.
\end{assumption}

\begin{assumption}[Collateral non-escape and loan priority]
\label{ass:lien}
While $D_t>0$, the borrower cannot withdraw, transfer, re-pledge, or otherwise remove the financed outcome tokens and dedicated cash from the execution policy. Settled sale proceeds are applied to the loan before residual value is released.
\end{assumption}

These assumptions may be implemented by a controlled subaccount, an MPC signer with restrictive policy, or a venue-recognized margin vault with a lender lien and delegated liquidation role. A standard self-custodial wallet in which the user alone can sign every sale does not satisfy \Cref{ass:authority} unless a prior enforceable delegation exists.

\begin{assumption}[No rehypothecation after conversion]
\label{ass:no-rehyp}
After debt is repaid and residual tokens are classified as fully funded spot, they are not attached to the extinguished loan or pledged to a new obligation without a new explicit transaction.
\end{assumption}

\subsection{Physical backing and custody are distinct axes}

Physical backing means the user position is represented by real, collateralized outcome tokens. Non-custodial control means the user alone controls the keys. A controlled subaccount can be physically backed but custodial; a user-controlled vault can be non-custodial but unsuitable for automatic hard-flat if the user can refuse the required signature. The paper keeps these properties separate and does not use ``physically backed'' as a synonym for ``permissionless'' or ``trustless.''

\subsection{Canonical policies}

The base mechanism adopts:
\begin{enumerate}[label=(P\arabic*)]
  \item isolated margin by market and risk bucket;
  \item no risk-increasing orders after $\sigma_R$;
  \item robust ex-ante hard-flat at $u<T_c^{\mathrm{sched}}$;
  \item no transition into finality states while ordinary debt remains positive;
  \item auto-deleverage to spot as the default, with full close optional;
  \item no interest after debt extinction;
  \item no final accounting from a provisional proposal;
  \item binary markets only in the base specification;
  \item actual payout-vector ingestion; and
  \item aggregate depth and OI limits that prevent reuse of the same liquidity.
\end{enumerate}

\subsection{Operating, stress, and failure paths}

At each hard-flat decision time $u$, the mechanism freezes a nested uncertainty taxonomy over a finite horizon $\Delta$:
\begin{equation}
\Uop_u(\Delta)
\subseteq
\Ustress_u(\Delta)
\subseteq
\Omega,
\qquad
\Ufail_u(\Delta)
=
\Omega\setminus\Uop_u(\Delta).
\label{eq:operating-stress-failure-sets}
\end{equation}
The \emph{operating set} $\Uop_u(\Delta)$ contains adverse but bounded book changes, partial fills, fees, and settlement delays for which the controller claims the robust debt-clearing certificate. The broader \emph{registered stress set} $\Ustress_u(\Delta)$ adds named adversarial book-response, control-degradation, correlated-close, and settlement-failure scenarios used for validation, reserve sizing, and recovery design. Paths in $\Ustress_u(\Delta)\setminus\Uop_u(\Delta)$ are deliberately tested but are not covered by the robust certificate. The full failure set $\Ufail_u(\Delta)$ contains every path outside the certificate, including paths not enumerated in the finite stress registry.

Formal guarantees apply only to $\Uop$. Stress and reserve statements are conditional on their registered scenario families; they do not retroactively enlarge the operating set. The distinction is central: a mechanism cannot declare a path ``covered'' merely because external capital absorbs the loss after execution fails. In this paper both $\Uop$ and $\Ustress$ are author-specified deterministic registries. Their empirical calibration, target coverage, and out-of-sample validation are separate research tasks.

\section{Quoted, Matched, and Settled Execution Value}
\label{sec:execution}

The credit layer is built on cash that can actually repay debt. A displayed midpoint is informative; a visible bid ladder is an execution quote; a match is an expected transfer; only settled collateral is an accounting asset. This section defines the four objects separately.

\subsection{Confirmed acquisition cost}

Let $A_{t_0}^{\mathrm{set}}(q)$ be the total confirmed collateral cost of acquiring $q\geq0$ units of held token $h$ at entry, including price impact, venue fees, protocol fees charged at entry, and settlement costs allocated to the trade.

\begin{assumption}[Acquisition-cost regularity]
\label{ass:acquisition}
On the admissible entry lot set $\Qset_{t_0}^{\mathrm{buy}}$, $A_{t_0}^{\mathrm{set}}(0)=0$ and $A_{t_0}^{\mathrm{set}}(q)$ is non-decreasing. Convexity is permitted but not required for the finite-lot results.
\end{assumption}

Given gross budget $N=LC$, confirmed quantity and residual cash are
\begin{align}
q_0(L)&=\max\left\{q\in\Qset_{t_0}^{\mathrm{buy}}:A_{t_0}^{\mathrm{set}}(q)\leq LC\right\},
\label{eq:entry-quantity}\\
K_0(L)&=LC-A_{t_0}^{\mathrm{set}}(q_0(L))\geq0.
\label{eq:entry-cash}
\end{align}
A continuous model replaces the maximum with a supremum. The base mechanism activates the loan only after the acquisition has settled; entry settlement failure therefore creates no open financed position.

\subsection{Book-implied proceeds}

At hard-flat decision time $u$, let $\mathcal O_u$ denote the observed book snapshot and let
\begin{equation}
B_u^{\mathrm{book}}(x;\mathcal O_u)
\label{eq:book-proceeds}
\end{equation}
be the net proceeds implied by immediately consuming the visible bid ladder for $x$ units, after fees and deterministic charges known at $u$. This is an $\F_u$-measurable quote. It is not a guarantee because orders can be cancelled, partially filled, repriced, rejected, or fail to settle.

For a finite CLOB, $B_u^{\mathrm{book}}$ is calculated by walking bid levels in price-time order. Fixed fees, minimum charges, and lot constraints may create discontinuities. The analysis therefore requires only monotonicity on a finite admissible set, not global concavity or continuity.

\subsection{Registered adversarial book transformations}

Let the observed bid ladder be $\mathcal L_u=\{(v_j,p_j)\}_{j=1}^{J}$ with $p_1\geq\cdots\geq p_J\geq0$, and let $m_u$ be the reference mid used for stress construction. For $k\in\{0,\ldots,J\}$, depth-haircut fraction $h\in[0,1]$, and spread-expansion factor $s\geq1$, define
\begin{equation}
T_{k,h,s}(\mathcal L_u)
=
\left\{
\left(
(1-h)v_{j+k},
\pospart{m_u-s(m_u-p_{j+k})}
\right)
:\ j=1,\ldots,J-k
\right\},
\label{eq:book-transform}
\end{equation}
with zero-quantity levels removed. The transform removes the best $k$ levels, removes fraction $h$ of quantity from every remaining level, and expands each remaining bid's distance from the reference mid by factor $s$. The identity is $T_{0,0,1}$; $k=J$ or $h=1$ produces an empty executable book. Because the price map is increasing in $p_j$, bid ordering is preserved.

The transform family is a deterministic stress primitive, not an equilibrium model of market-maker behavior. Selected transforms may be included in $\Uop$ only if the operating-set construction and empirical coverage target justify them. More severe transforms belong to $\Ustress\setminus\Uop$ and test the mechanism's failure classification and reserve waterfall. Adaptive market-maker behavior can be richer than any finite $T_{k,h,s}$ family; complete instantaneous withdrawal remains an explicit impossibility path.

\subsection{Execution policy and stochastic proceeds}

Let $\Pi_u$ be an $\F_u$-measurable hard-flat execution policy. It specifies permissible order types, price limits, retry timing, cancellation rules, and a maximum requested sale quantity $x$. For a future path $\omega$ over $[u,u+\Delta]$, define the proceeds associated with running the policy against cumulative cap $x$:

\begin{align}
B_{u,\Delta}^{\mathrm{match},\Pi}(x,\omega)
&=\text{net proceeds recorded by matched fills by }u+\Delta,\\
B_{u,\Delta}^{\mathrm{set},\Pi}(x,\omega)
&=\text{net proceeds confirmed and available by }u+\Delta.
\label{eq:match-set-proceeds}
\end{align}

The second quantity excludes fills that remain pending or fail. Both functions depend on the policy and path even when the current snapshot is fixed. They are capacity curves indexed by the permitted cumulative cap: a live controller may stop at an earlier confirmed prefix once its selected debt or debt-plus-buffer target is reached. In general,
\begin{equation}
B_{u,\Delta}^{\mathrm{set},\Pi}(x,\omega)
\leq
B_{u,\Delta}^{\mathrm{match},\Pi}(x,\omega),
\label{eq:settled-below-matched}
\end{equation}
with strict inequality during pending settlement or after a failed leg.

\begin{assumption}[Settlement observability and finality]
\label{ass:settlement}
The mechanism can identify which matched fills have become final under the venue's settlement rules. A fill reduces debt only after its associated collateral receipt is confirmed and cannot be reversed by an ordinary application retry.
\end{assumption}

Chain reorganization beyond the selected confirmation rule, contract exploit, or venue insolvency is an infrastructure failure rather than ordinary settlement latency.

\subsection{Operating lower envelope}

Fix a registered operating set $\Uop_u(\Delta)$ of execution paths and a policy $\Pi_u$. The robust lower settled-proceeds envelope is
\begin{equation}
\underline B_{u,\Delta}^{\Pi}(x)
=
\inf_{\omega\in\Uop_u(\Delta)}
B_{u,\Delta}^{\mathrm{set},\Pi}(x,\omega).
\label{eq:lower-proceeds-envelope}
\end{equation}
It is computed only on the admissible quantity set $\Qset_u(q)$. The set may encode spread widening, depth cancellation, partial fills, fee changes within stated bounds, retry costs, and settlement delays up to $\Delta$. It must not silently include paths for which the execution policy has no authority or the market has already closed; those belong to $\Ufail$.

The lower envelope is not an expectation. It is a certificate relative to a named uncertainty set. If a probabilistic model is later available, a conditional quantile can be reported separately:
\begin{equation}
\underline B_{u,\Delta}^{(\alpha)}(x)
=
Q_{\alpha}\!\left(
B_{u,\Delta}^{\mathrm{set},\Pi}(x)\mid\F_u
\right),
\label{eq:quantile-proceeds}
\end{equation}
but the deterministic results in this paper do not require estimating it.

\subsection{Realized audit curve}

After the execution sequence has settled, order the confirmed sale fills by their policy execution order. For a realized path $\omega$, let
\begin{equation}
B_{\sigma}^{\mathrm{audit}}(x,\omega)
\label{eq:audit-proceeds}
\end{equation}
be the cumulative settled proceeds attributable to the first $x$ units of the realized fill sequence, with pro-rata allocation inside a marginal fill if the venue supports divisible quantities. This is a post-settlement accounting object. It answers how many of the tokens actually sold were needed to repay the debt; it does not determine the earlier order quantity.

\begin{assumption}[Audit-curve monotonicity]
\label{ass:audit-monotone}
On the realized admissible cumulative-fill set, $B_{\sigma}^{\mathrm{audit}}(0)=0$ and the curve is non-decreasing. No continuity or concavity is required.
\end{assumption}

\subsection{Outstanding debt and debt-service envelope}

Fix hard-flat decision time $u$. Let $P_{u,t}$ be cumulative confirmed cash applied to the loan after $u$, including dedicated account cash and settled sale proceeds. Before extinction, the outstanding balance follows the debt-first accounting equation
\begin{equation}
D_t
=
\pospart{
D_u
+
\int_u^t r_sD_s\,\dd s
+
C_{u,t}^{\mathrm{debt}}
-
P_{u,t}
},
\qquad t\geq u,
\label{eq:debt-growth}
\end{equation}
where $r_t\geq0$ is the borrowing rate and $C_{u,t}^{\mathrm{debt}}$ contains charges contractually added after $u$. The positive-part convention prevents overpayment from manufacturing negative debt. New borrowing is prohibited after reduce-only.

For a realized path and policy, define the cumulative debt-service requirement through $t$ by
\begin{equation}
H_{u,t}^{\Pi}(\omega)
=
P_{u,t}^{\Pi}(\omega)+D_t^{\Pi}(\omega).
\label{eq:debt-service-requirement}
\end{equation}
This is the total cash already applied after $u$ plus the amount still required to extinguish the receivable at $t$. It remains well-defined when repayments occur in several settlement batches and automatically incorporates the effect of payment timing on accrued interest. If debt was extinguished earlier, $D_t=0$ and $H_{u,t}^{\Pi}$ equals cumulative debt service paid through extinction.

With policy $\Pi$ fixed, define the operating-set debt-service upper envelope
\begin{equation}
\overline H_{u,\Delta}
=
\sup_{\omega\in\Uop_u(\Delta)}
H_{u,u+\Delta}^{\Pi}(\omega).
\label{eq:debt-service-upper-envelope}
\end{equation}
A practical implementation may bound this quantity using maximum borrow rates, retry and settlement charges, and the latest permitted payment times. Using $D_u$ alone understates the cash requirement when settlement is delayed. A no-payment accrued balance is a conservative special case when early repayment cannot increase interest or charges.

The account may also hold dedicated cash $K_u\geq0$. Cash that is pending, withdrawable by another party, already counted in $P_{u,t}$, or legally unavailable to the loan is excluded from any additional coverage term.

\subsection{Executable equity and health factor}

For ordinary monitoring at time $t$, let $\underline V_t(q)$ be a conservative liquidation value derived from the current registered execution set. Executable equity and health factor are
\begin{align}
E_t(q)&=K_t+\underline V_t(q)-D_t,
\label{eq:equity}\\
\hf_t(q)&=\frac{K_t+\underline V_t(q)}{D_t+M_t(q)},
\label{eq:health-factor}
\end{align}
where $M_t(q)\geq0$ is a maintenance buffer. A position is inside its maintenance region when $\hf_t>1$. At hard-flat, the controller targets debt extinction rather than preservation of the leveraged state and may consume the maintenance buffer, subject to a separate hard-flat buffer $m_u$.

A generic decomposition is
\begin{equation}
M_t=M_t^{\mathrm{exec}}+M_t^{\mathrm{lat}}+M_t^{\mathrm{conc}}+M_t^{\mathrm{ops}}+M_t^{\mathrm{auth}},
\label{eq:buffer-decomp}
\end{equation}
where the final term reflects execution-authority and collateral-control risk. If authority is not enforceable, $M_t^{\mathrm{auth}}$ cannot repair the mechanism; new leverage should be disabled.

\subsection{Double-entry accounting boundary}

The minimum chart of accounts contains venue cash, outcome tokens, pending settlement, loan receivable, user cash, user spot claim, capital-provider payable, reserves, fees, and operational loss. A matched fill posts only to pending accounts. Settlement moves value into confirmed venue cash or token inventory. A failed settlement is reversed by compensating entries rather than history mutation.

\begin{assumption}[Ledger discipline]
\label{ass:ledger}
Every journal entry is balanced, immutable after posting, and associated with a unique external or internal event identifier. The same economic event cannot post twice under the same identifier.
\end{assumption}

\subsection{Why book and midpoint are insufficient on their own}

For quantity $q$ and midpoint $m_u$, the displayed mark value $qm_u$ can exceed the visible book proceeds $B_u^{\mathrm{book}}(q)$, which can in turn exceed later settled proceeds:
\begin{equation}
qm_u
\geq B_u^{\mathrm{book}}(q)
\quad\text{need not imply}\quad
B_{u,\Delta}^{\mathrm{set},\Pi}(q,\omega)
\geq D_{u+\Delta}.
\label{eq:valuation-chain}
\end{equation}
The first gap is book walk; the second is execution and settlement uncertainty. A credit policy that stops at the midpoint or current snapshot can therefore approve a position that appears overcollateralized yet cannot repay debt on the realized path.

\section{Ex-Ante Robust Auto-Deleverage}
\label{sec:auto}

The hard-flat controller must choose a sale quantity before future fills are known. This section separates the \emph{planned robust sale} from the \emph{realized audit minimum}.

\subsection{Admissible quantities}

Let a position at decision time $u$ contain settled quantity $q$, dedicated cash $K_u$, and debt $D_u>0$. Let
\begin{equation}
\Qset_u(q)=\{0=y_0<y_1<\dots<y_m\leq q\}
\label{eq:admissible-grid}
\end{equation}
be the finite cumulative sale grid generated by token precision, venue lot size, order limits, and risk policy. A continuous model may use $[0,q]$, but the finite formulation matches actual order books and requires weaker regularity.

Let $m_u\geq0$ be a dedicated hard-flat buffer. It is not ordinary maintenance margin; it covers registered residual uncertainty and may remain as user or reserve cash after debt extinction according to policy.

\subsection{Robust feasible set and planned sale}

\begin{definition}[Robust debt-clearing feasible set]
For execution policy $\Pi_u$, horizon $\Delta$, and operating set $\Uop_u(\Delta)$, define
\begin{equation}
\widehat{\mathcal F}_{u,\Delta}(q)
=
\left\{
x\in\Qset_u(q):
K_u+\underline B_{u,\Delta}^{\Pi}(x)
\geq
\overline H_{u,\Delta}+m_u
\right\}.
\label{eq:robust-feasible-set}
\end{equation}
\end{definition}

\begin{definition}[Planned robust sale]
If $\widehat{\mathcal F}_{u,\Delta}(q)\neq\varnothing$, the planned sale cap is
\begin{equation}
\widehat x_u
=
\min\widehat{\mathcal F}_{u,\Delta}(q).
\label{eq:planned-sale}
\end{equation}
If the set is empty, the position is not certified for ordinary hard-flat and enters a named shortfall or backstop path.
\end{definition}

The quantity $\widehat x_u$ is chosen from information available at $u$. It is the minimum quantity certified by the registered uncertainty set, not the minimum quantity that will turn out to have been necessary after settlement.

\begin{proposition}[Existence on a finite lot grid]
\label{prop:planned-existence}
The planned sale exists if and only if
\begin{equation}
K_u+\underline B_{u,\Delta}^{\Pi}(y_m)
\geq
\overline H_{u,\Delta}+m_u.
\label{eq:planned-feasibility}
\end{equation}
When it exists, it is unique as the minimum element of a finite ordered set.
\end{proposition}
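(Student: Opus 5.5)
The plan is to prove the two directions of the equivalence separately, and then uniqueness. Everything happens on the finite chain $\Qset_u(q)=\{0=y_0<y_1<\dots<y_m\leq q\}$. The only substantive input is that the lower envelope $\underline B_{u,\Delta}^{\Pi}$ of \eqref{eq:lower-proceeds-envelope} is non-decreasing in the cap on this grid.

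\emph{Sufficiency.} This direction is immediate. If \eqref{eq:planned-feasibility} holds, then $y_m\in\widehat{\mathcal F}_{u,\Delta}(q)$ by \eqref{eq:robust-feasible-set}, so the feasible set is nonempty. As a nonempty subset of a finite totally ordered set, it has a least element, and that element is $\widehat x_u$. Uniqueness follows because a minimum of a totally ordered set is unique: two minima $x,x'$ satisfy $x\leq x'$ and $x'\leq x$.

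\emph{Necessity.} Here I would first establish a monotonicity lemma. For each fixed path $\omega\in\Uop_u(\Delta)$, the map $x\mapsto B_{u,\Delta}^{\mathrm{set},\Pi}(x,\omega)$ should be non-decreasing on $\Qset_u(q)$. The justification I would give is the capacity-curve interpretation stated after \eqref{eq:match-set-proceeds}. The policy runs against a \emph{cumulative cap} $x$, and a live controller may stop at any earlier confirmed prefix. So on a given path, the confirmed fills available under cap $x$ are a prefix of those available under any larger cap $x'\geq x$. Net settled proceeds are non-negative increments, so they can only grow along a prefix.

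Given the lemma, taking the infimum over $\omega$ preserves the order. If $x\leq x'$, then $\underline B(x)=\inf_\omega B(x,\omega)\leq\inf_\omega B(x',\omega)=\underline B(x')$, and the argument is unaffected if these values are infinite. Now suppose some $x\in\widehat{\mathcal F}_{u,\Delta}(q)$ exists. Then
\[
K_u+\underline B_{u,\Delta}^{\Pi}(y_m)\geq K_u+\underline B_{u,\Delta}^{\Pi}(x)\geq\overline H_{u,\Delta}+m_u,
\]
which is \eqref{eq:planned-feasibility}. Note that $\overline H_{u,\Delta}$ and $m_u$ do not depend on $x$; this is what makes the comparison go through. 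I would also remark that if $\overline H_{u,\Delta}=+\infty$, both sides of the equivalence fail together, so the statement remains correct.

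\emph{Main obstacle.} The hard part is the monotonicity lemma, since the paper imposes explicit monotonicity only on the audit curve (\cref{ass:audit-monotone}) and not on $B^{\mathrm{set},\Pi}$. I would first argue it from the cumulative-cap and prefix-stopping convention. If that convention is judged too informal, I would state it explicitly as a standing regularity condition on admissible policies: raising the cap never removes confirmed fills on a fixed path. It is worth noting in the text that without this condition only the ``if'' direction and the uniqueness claim survive. In that case, replacing $y_m$ by $\max_{y\in\Qset_u(q)}\underline B(y)$ in \eqref{eq:planned-feasibility} would restore the equivalence.
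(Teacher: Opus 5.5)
Your proof is correct and follows the paper's argument. Sufficiency holds because $y_m$ itself is feasible, necessity (which the paper phrases contrapositively) holds because the lower envelope is non-decreasing on the grid, and uniqueness follows from finiteness of the ordered set. The paper does not derive pathwise monotonicity of $B^{\mathrm{set},\Pi}_{u,\Delta}(\cdot,\omega)$ from the cumulative-cap convention; it takes that monotonicity as the hypothesis of \Cref{lem:envelope-monotone}. Your fallback of imposing it as a standing regularity condition is therefore exactly the paper's route, and it is the safer one, since your prefix argument is not airtight when the book may react to the mechanism's own larger orders.
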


\begin{proof}
If the inequality holds, $y_m$ belongs to the feasible set, so the set is non-empty and has a unique minimum. If it fails, monotonicity of the lower envelope on the ordered grid implies no smaller quantity can satisfy the threshold.
\end{proof}

\begin{proposition}[Certification minimality]
\label{prop:certification-minimality}
No $x\in\Qset_u(q)$ with $x<\widehat x_u$ is certified to clear debt under the same operating set, policy, horizon, debt bound, and buffer.
\end{proposition}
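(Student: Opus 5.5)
The plan is to read the statement as an unwinding of \Cref{prop:planned-existence} and the definitions in \eqref{eq:robust-feasible-set}--\eqref{eq:planned-sale}. Fix the execution policy $\Pi_u$, horizon $\Delta$, operating set $\Uop_u(\Delta)$, debt-service bound $\overline H_{u,\Delta}$, and buffer $m_u$. First, make precise what ``certified to clear debt'' means under these fixed data. A quantity $x\in\Qset_u(q)$ is certified exactly when the robust certificate inequality
\[
K_u+\underline B_{u,\Delta}^{\Pi}(x)\geq\overline H_{u,\Delta}+m_u
\]
holds. This is the same as saying $x\in\widehat{\mathcal F}_{u,\Delta}(q)$. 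The certification notion is thus identified with membership in the robust feasible set. It is not identified with realized clearing on some particular path; realized sufficiency is the separate audit object of \eqref{eq:audit-proceeds}.

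Given this identification, the argument is by contradiction. Suppose some $x\in\Qset_u(q)$ with $x<\widehat x_u$ were certified. Then $x\in\widehat{\mathcal F}_{u,\Delta}(q)$, and since $\widehat x_u=\min\widehat{\mathcal F}_{u,\Delta}(q)$ we would get $\widehat x_u\leq x$, contradicting $x<\widehat x_u$. This step needs only that the minimum exists. Existence is part of the hypothesis that $\widehat x_u$ is defined, and \Cref{prop:planned-existence} guarantees it is attained as the minimum of a finite ordered subset of the grid \eqref{eq:admissible-grid}.

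I would add one remark to prevent a misreading: the conclusion concerns certification only. A smaller $x$ may still clear debt on the realized path $\omega$ if that path is more favorable than the worst case in $\Uop_u(\Delta)$. What such an $x$ lacks is a guarantee over every $\omega\in\Uop_u(\Delta)$. Concretely, there is a path, or an infimizing sequence of paths, in the operating set along which $K_u+B_{u,\Delta}^{\mathrm{set},\Pi}(x,\omega)$ falls below, or arbitrarily close to below, $\overline H_{u,\Delta}+m_u$. This follows directly from the definition of the infimum in \eqref{eq:lower-proceeds-envelope}.

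The main obstacle is definitional rather than technical: pinning down ``certified'' so that the proposition is not circular and is not confused with realized sufficiency. To handle it, I would state the certificate inequality explicitly as the definition of certification relative to the frozen tuple $(\Pi_u,\Delta,\Uop_u(\Delta),\overline H_{u,\Delta},m_u)$. I would also note that monotonicity of $\underline B$ is not even needed for this direction, since minimality of the $\min$ suffices.
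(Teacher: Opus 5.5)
Your proposal is correct and follows the paper's proof: you identify certification with membership in $\widehat{\mathcal F}_{u,\Delta}(q)$, conclude from the minimality of $\widehat x_u$ that every smaller admissible quantity fails the robust inequality, and add the same caveat the paper does, namely that nothing is claimed about failure on realized paths. One small sharpening: failure means $K_u+\underline B_{u,\Delta}^{\Pi}(x)<\overline H_{u,\Delta}+m_u$ strictly, so the definition of the infimum already gives a single $\omega\in\Uop_u(\Delta)$ with $K_u+B_{u,\Delta}^{\mathrm{set},\Pi}(x,\omega)<\overline H_{u,\Delta}+m_u$, and your ``arbitrarily close to below'' hedge is unnecessary.
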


\begin{proof}
By definition of the minimum, every smaller admissible quantity fails the robust inequality in \eqref{eq:robust-feasible-set}. The proposition does not claim that every smaller quantity fails on every realized path; it states that the same robust certificate cannot support it.
\end{proof}

\subsection{Realized audit minimum}

Suppose the policy sells a cumulative quantity $x^{\mathrm{sold}}$ and settlement completes at $\sigma<\infty$. Let $D_{\sigma^-}$ be debt immediately before applying the confirmed sale cash. Define the realized feasible set
\begin{equation}
\mathcal F_{\sigma}^{\mathrm{audit}}
=
\left\{
x\in\Qset_{\sigma}(x^{\mathrm{sold}}):
K_u+B_{\sigma}^{\mathrm{audit}}(x)
\geq D_{\sigma^-}
\right\}.
\label{eq:audit-feasible-set}
\end{equation}

\begin{definition}[Realized audit minimum]
If $\mathcal F_{\sigma}^{\mathrm{audit}}\neq\varnothing$, define
\begin{equation}
x_{\sigma}^{\ast}
=
\min\mathcal F_{\sigma}^{\mathrm{audit}}.
\label{eq:audit-minimum}
\end{equation}
\end{definition}

The audit minimum is useful for explaining realized residual exposure, measuring conservative overshoot, and reconciling the controller. It is never substituted for $\widehat x_u$ in the ex-ante rule.

\begin{proposition}[Audit existence]
\label{prop:audit-existence}
The realized audit minimum exists exactly when the settled execution sequence plus dedicated cash covers debt:
\begin{equation}
K_u+B_{\sigma}^{\mathrm{audit}}(x^{\mathrm{sold}})
\geq D_{\sigma^-}.
\label{eq:audit-coverage}
\end{equation}
\end{proposition}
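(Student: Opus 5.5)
The plan mirrors the proof of \Cref{prop:planned-existence}, with the realized audit curve in place of the robust lower envelope and the realized debt $D_{\sigma^-}$ in place of $\overline H_{u,\Delta}+m_u$. The set $\Qset_{\sigma}(x^{\mathrm{sold}})$ is a finite ordered grid of cumulative fill quantities. Its largest element is $x^{\mathrm{sold}}$ itself, since the grid is built from the realized fill sequence up to the total quantity actually sold. The statement is an equivalence, so I will prove the two directions separately.

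For sufficiency, assume \eqref{eq:audit-coverage}. Then $x^{\mathrm{sold}}$ satisfies the defining inequality of \eqref{eq:audit-feasible-set}, so $\mathcal F_{\sigma}^{\mathrm{audit}}$ is non-empty. Because it is a non-empty subset of a finite totally ordered set, it has a unique minimum, and this minimum is $x_\sigma^\ast$ in \eqref{eq:audit-minimum}.

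For necessity, suppose $\mathcal F_{\sigma}^{\mathrm{audit}}\neq\varnothing$ and pick any element $x$. Then $x\leq x^{\mathrm{sold}}$, and by \Cref{ass:audit-monotone} the audit curve is non-decreasing on the realized admissible cumulative-fill set. Hence
\[
K_u+B_\sigma^{\mathrm{audit}}(x^{\mathrm{sold}})\geq K_u+B_\sigma^{\mathrm{audit}}(x)\geq D_{\sigma^-},
\]
which is \eqref{eq:audit-coverage}. Equivalently, if \eqref{eq:audit-coverage} fails, monotonicity rules out every smaller grid point, so the set is empty.

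The main obstacle is bookkeeping rather than analysis. I must make sure that $x^{\mathrm{sold}}$ really belongs to $\Qset_\sigma(x^{\mathrm{sold}})$, and that the audit curve is evaluated on the same grid on which \Cref{ass:audit-monotone} gives monotonicity. I would point out that the audit curve counts only settled proceeds, ordered by policy execution order. Pro-rata allocation inside a marginal fill keeps the cumulative sum non-decreasing. The top grid point therefore records the full confirmed cash of the executed sequence. I would also note that $K_u$ and $D_{\sigma^-}$ are fixed realized quantities that do not depend on $x$, so the comparison is purely in the single variable $x$.
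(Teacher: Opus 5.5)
Your proof is correct and follows the paper's argument, which simply reuses the proof of \Cref{prop:planned-existence}: the top grid point $x^{\mathrm{sold}}$ shows the feasible set is non-empty when \eqref{eq:audit-coverage} holds, and \Cref{ass:audit-monotone} rules out every smaller grid point when it fails. Your explicit check that $x^{\mathrm{sold}}$ is the largest element of $\Qset_{\sigma}(x^{\mathrm{sold}})$ makes visible an identification that the paper leaves implicit, since its generic grid only requires $y_m\leq q$.
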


\begin{proof}
Identical to \Cref{prop:planned-existence}, using the realized finite cumulative-fill grid and \Cref{ass:audit-monotone}.
\end{proof}

\subsection{Relationship between planned and realized quantities}

When the realized path lies in the operating set and the controller stops as soon as confirmed proceeds clear debt, the planned cap bounds the realized requirement.

\begin{proposition}[Planned cap dominates realized requirement]
\label{prop:planned-dominates-realized}
Suppose $\omega\in\Uop_u(\Delta)$, $T_c^{\mathrm{act}}(\omega)>u+\Delta$, \Cref{ass:authority,ass:lien,ass:settlement} hold, no new debt is created, and the controller follows $\Pi_u$ by submitting risk-reducing sales until debt is zero or cumulative settled quantity reaches $\widehat x_u$. Assume every submitted fill used by the policy settles by $u+\Delta$, and unsent or unmatched remainder may be cancelled after confirmed clearance. If $\widehat x_u$ exists, then the audit minimum exists and
\begin{equation}
x_{\sigma}^{\ast}\leq \widehat x_u.
\label{eq:audit-below-plan}
\end{equation}
\end{proposition}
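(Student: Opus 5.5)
The plan is to show two things in order. First, on the realized path $\omega$ the settled execution covers $D_{\sigma^-}$, which gives existence of $x_\sigma^\ast$ through \Cref{prop:audit-existence}. Second, some realized cumulative quantity not exceeding $\widehat x_u$ already belongs to $\mathcal F_\sigma^{\mathrm{audit}}$, so the minimum can be no larger.

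\textbf{Step 1: coverage at $\widehat x_u$.} Since $\omega\in\Uop_u(\Delta)$, the definition \eqref{eq:lower-proceeds-envelope} gives $B_{u,\Delta}^{\mathrm{set},\Pi}(\widehat x_u,\omega)\geq\underline B_{u,\Delta}^{\Pi}(\widehat x_u)$. Similarly, \eqref{eq:debt-service-upper-envelope} gives $H_{u,u+\Delta}^{\Pi}(\omega)\leq\overline H_{u,\Delta}$. Because $\widehat x_u\in\widehat{\mathcal F}_{u,\Delta}(q)$, we get
\[
K_u+B_{u,\Delta}^{\mathrm{set},\Pi}(\widehat x_u,\omega)\geq H_{u,u+\Delta}^{\Pi}(\omega)+m_u\geq P_{u,u+\Delta}+D_{u+\Delta}.
\]
The remaining hypotheses make these envelopes meaningful on $\omega$:
\begin{itemize}
\item \Cref{ass:authority,ass:lien} ensure that the policy actually runs and that its proceeds go to the loan;
\item $T_c^{\mathrm{act}}>u+\Delta$ keeps the book open for the whole horizon;
\item full settlement by $u+\Delta$ together with \Cref{ass:settlement} lets every fill count toward debt reduction;
\item the absence of new debt keeps the accounting equation \eqref{eq:debt-growth} closed.
\end{itemize}
Then \eqref{eq:debt-growth} together with the debt-first application of dedicated cash and settled proceeds yields $D_{u+\Delta}=0$. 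Indeed, total confirmed cash available is at least the accrued requirement, and the positive part forces zero. So $\sigma\leq u+\Delta<\infty$.

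\textbf{Step 2: the stopping rule.} The controller stops at the first of two events: debt reaches zero, or cumulative settled quantity reaches $\widehat x_u$. Hence $x^{\mathrm{sold}}\leq\widehat x_u$. By Step 1 the cap is never exhausted with positive debt, so the run ends with $D_\sigma=0$. Immediately before applying the final confirmed cash, the debt-first equation therefore gives $K_u+B_\sigma^{\mathrm{audit}}(x^{\mathrm{sold}})\geq D_{\sigma^-}$, which is \eqref{eq:audit-coverage}. \Cref{prop:audit-existence} then gives existence. Since $x^{\mathrm{sold}}\in\mathcal F_\sigma^{\mathrm{audit}}$, we obtain $x_\sigma^\ast\leq x^{\mathrm{sold}}\leq\widehat x_u$.

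\textbf{Main obstacle.} The hardest step is reconciling the accounting objects. We must identify the realized settled-proceeds curve truncated at the stop with the audit curve $B_\sigma^{\mathrm{audit}}$. We must also relate $D_{\sigma^-}$, in the $\sigma$-indexed form used by \eqref{eq:audit-feasible-set}, to the debt-service quantity $H$, which already nets out payments made across several settlement batches.

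The approach there is to use the identity $H_{u,t}=P_{u,t}+D_t$. With interest accruing only up to $\sigma$, it shows that the cash needed to extinguish the debt at $\sigma$ equals $H_{u,\sigma}\leq H_{u,u+\Delta}$. The audit curve evaluated at the sold quantity equals the settled sale proceeds, so coverage transfers directly. If multiple batches make $D_{\sigma^-}$ a last-batch residual, it is only smaller than the total requirement, and the inequality still holds. Finally, monotonicity from \Cref{ass:audit-monotone} is needed only for existence, not for the bound itself.
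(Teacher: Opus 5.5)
Your proposal is correct and follows essentially the same route as the paper's proof. On the operating path, $B^{\mathrm{set}}_{u,\Delta}(\widehat x_u,\omega)\ge\underline B^{\Pi}_{u,\Delta}(\widehat x_u)$ and realized debt service is at most $\overline H_{u,\Delta}$, which forces debt clearance no later than the certified cap, so $x_\sigma^\ast\le x^{\mathrm{sold}}\le\widehat x_u$; your explicit reconciliation of $D_{\sigma^-}$ with $H_{u,t}=P_{u,t}+D_t$ across settlement batches fills in bookkeeping that the paper's short proof leaves implicit.
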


\begin{proof}
The robust inequality gives
$K_u+\underline B_{u,\Delta}^{\Pi}(\widehat x_u)\geq\overline H_{u,\Delta}+m_u$.
Because $\omega$ lies in the operating set, the settled-proceeds capacity at the planned cap is at least the lower envelope, while realized debt is at most the upper envelope. Therefore the controller must reach debt clearance no later than the certified cap. The minimum realized sufficient quantity cannot exceed that cap.
\end{proof}

\subsection{Residual-exposure maximality}

For a realized path, any debt-clearing sale $x$ leaves residual quantity $q-x$. The audit minimum therefore preserves the greatest physically funded event exposure compatible with debt repayment.

\begin{proposition}[Residual-exposure maximality]
\label{prop:residual-maximality}
If $x_{\sigma}^{\ast}$ exists, then for every realized debt-clearing quantity $x$,
\begin{equation}
q-x_{\sigma}^{\ast}\geq q-x.
\label{eq:residual-maximality}
\end{equation}
\end{proposition}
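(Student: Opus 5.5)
The argument is a direct consequence of the definition of $x_{\sigma}^{\ast}$ as a minimum. First I would make precise what "realized debt-clearing quantity" means. It is an element $x$ of the realized cumulative-fill grid $\Qset_{\sigma}(x^{\mathrm{sold}})$ such that
\[
K_u+B_{\sigma}^{\mathrm{audit}}(x)\geq D_{\sigma^-}.
\]
This is exactly membership in $\mathcal F_{\sigma}^{\mathrm{audit}}$ from \eqref{eq:audit-feasible-set}. With that identification, the set of realized debt-clearing quantities coincides with $\mathcal F_{\sigma}^{\mathrm{audit}}$.

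The key steps, in order, are as follows.

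\begin{enumerate}[label=(\roman*)]
  \item Existence of $x_{\sigma}^{\ast}$ is assumed. Equivalently, by \Cref{prop:audit-existence}, the coverage inequality \eqref{eq:audit-coverage} holds, so $\mathcal F_{\sigma}^{\mathrm{audit}}$ is a non-empty finite ordered set.
  \item Its minimum $x_{\sigma}^{\ast}$ satisfies $x_{\sigma}^{\ast}\leq x$ for every $x\in\mathcal F_{\sigma}^{\mathrm{audit}}$, by the definition of minimum in \eqref{eq:audit-minimum}.
  \item Subtracting both sides from the fixed settled quantity $q$ reverses the inequality and gives $q-x_{\sigma}^{\ast}\geq q-x$, which is \eqref{eq:residual-maximality}.
\end{enumerate}

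Since $x\leq x^{\mathrm{sold}}\leq q$ (the grid sits inside $\Qset_u(q)$), both residuals are non-negative. Hence they are genuine residual token quantities, and by \Cref{ass:no-rehyp} they are fully funded spot claims once debt is zero.

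The main obstacle is interpretive rather than technical. A "debt-clearing quantity" could be read as referring to some counterfactual sale on a different path, or as a point outside the realized fill grid. Under either reading the realized audit curve would not apply and the comparison would not be meaningful.

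I would resolve this by stating explicitly that the proposition is pathwise. It compares only prefixes of the realized confirmed fill sequence, using the audit curve of \Cref{ass:audit-monotone}. I would also note that monotonicity is not even needed for this step, only minimality. Monotonicity matters only in that it makes $\mathcal F_{\sigma}^{\mathrm{audit}}$ an upper set of the grid, so every quantity above $x_{\sigma}^{\ast}$ also clears debt. Finally, I would remark that this is consistent with \Cref{prop:planned-dominates-realized}: the planned cap $\widehat x_u$ is itself one such clearing quantity on operating paths, so its residual $q-\widehat x_u$ is dominated by $q-x_{\sigma}^{\ast}$.
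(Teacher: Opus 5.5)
Your proof is correct and is essentially the paper's own argument: every realized debt-clearing quantity lies in $\mathcal F_{\sigma}^{\mathrm{audit}}$, so minimality gives $x\geq x_{\sigma}^{\ast}$, and subtracting from $q$ yields the claim. One small caveat on your closing remark: if the controller stops before reaching the cap, $\widehat x_u$ need not be a point of the realized grid $\Qset_{\sigma}(x^{\mathrm{sold}})$, so read the comparison $q-\widehat x_u\leq q-x_{\sigma}^{\ast}$ directly off \eqref{eq:audit-below-plan} rather than as an instance of this proposition.
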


\begin{proof}
Every debt-clearing $x$ belongs to $\mathcal F_{\sigma}^{\mathrm{audit}}$, so $x\geq x_{\sigma}^{\ast}$ by minimality.
\end{proof}

This is a pathwise result. Robust planning may deliberately target more than the eventual audit minimum because it must protect against adverse paths that did not occur.

\subsection{Partial fills, retries, and stopping}

Let the policy generate settled increments $(\Delta x_j,\Delta b_j)$ at times $s_j$. Define the debt payment from the newly available cash by
\begin{equation}
p_j=\min\{D_j,K_j+\Delta b_j\}.
\end{equation}
After each confirmation,
\begin{align}
q_{j+1}&=q_j-\Delta x_j,\\
D_{j+1}&=D_j-p_j,\\
K_{j+1}&=K_j+\Delta b_j-p_j.
\end{align}
Thus already applied cash is not counted again, debt never becomes negative, and any excess remains as confirmed surplus cash. The controller recomputes the remaining requirement using the latest debt and a refreshed uncertainty set. It never treats a match as cash.

The policy stops submitting new sale quantity when confirmed cash is sufficient to extinguish debt. Orders already matched may settle later and create overshoot. The implementation must record such surplus explicitly rather than manufacturing negative debt.

\subsection{Discrete-lot overshoot}

If the minimum executable lot is $\delta_q$ and the maximum net unit proceeds in the marginal lot are $\overline p$, then quantity overshoot is at most $\delta_q$ and cash overshoot is at most $\delta_q\overline p$, excluding already matched asynchronous batches. A separate bound for pending matched quantity is required when cancellation cannot prevent settlement.

\subsection{Illustrative robust decision}

The deterministic registry in \Cref{sec:deterministic} uses one entry position and three operating books. The base visible book would require fewer tokens than the worst operating book. The planned sale is therefore set by the lower envelope, while the realized audit minimum depends on which book path occurs. This deliberately creates conservative overshoot on favorable paths and exact coverage on the worst registered path.

\begin{figure}[H]
\centering
\includegraphics[width=0.88\textwidth]{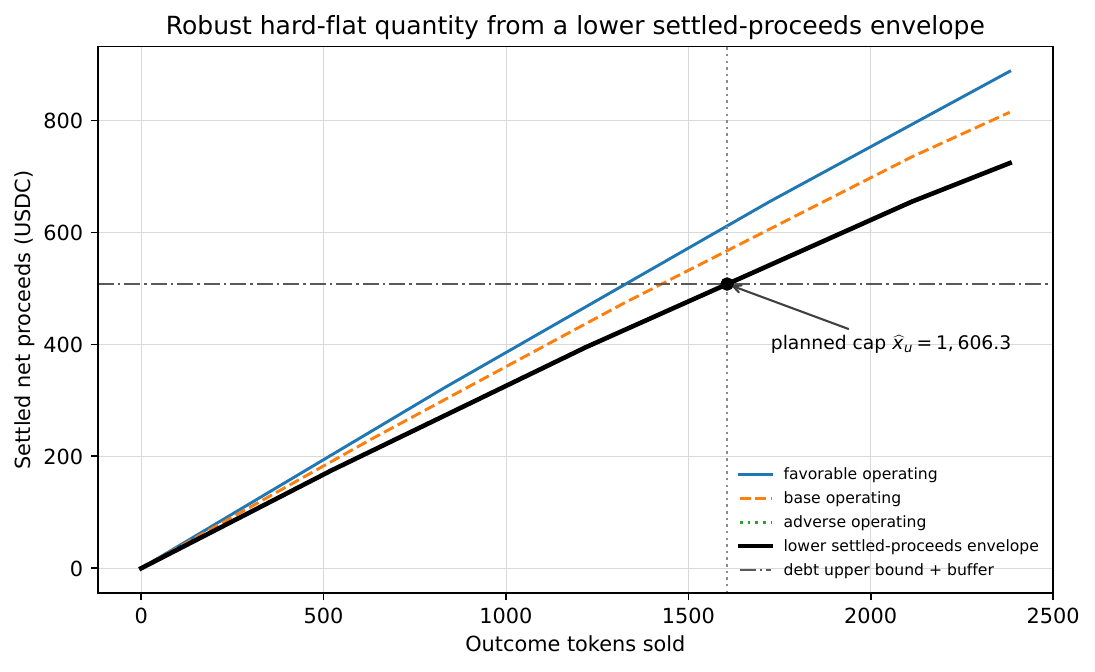}
\caption{Illustrative robust sale selection. Individual settled-proceeds paths differ; the lower envelope determines the ex-ante planned sale $\widehat x_u$. After settlement, the realized audit minimum $x_{\sigma}^{\ast}$ may be smaller.}
\label{fig:robust-proceeds}
\end{figure}

\subsection{Full-close policy}

A user may elect full close rather than residual spot conversion. Full close sells the entire executable quantity and repays debt first. It gives up \Cref{prop:residual-maximality} by choice but does not weaken lender safety. If full sale still cannot repay debt, the position enters the same failure and reserve path as auto-deleverage.

\section{Core Safety Results}
\label{sec:safety}

The principal results separate an ex-ante guarantee inside the registered operating set from pathwise accounting invariants after settlement.

\subsection{Robust ex-ante debt clearing}

\begin{theorem}[Robust ex-ante debt clearing]
\label{thm:robust-clearing}
Fix decision time $u$, horizon $\Delta$, policy $\Pi_u$, and operating set $\Uop_u(\Delta)$. Suppose:
\begin{enumerate}[label=(\roman*)]
  \item the planned sale $\widehat x_u$ in \eqref{eq:planned-sale} exists;
  \item the realized path $\omega$ belongs to $\Uop_u(\Delta)$;
  \item $T_c^{\mathrm{act}}(\omega)>u+\Delta$;
  \item \Cref{ass:authority,ass:lien,ass:settlement} hold throughout the horizon;
  \item the policy submits risk-reducing sales, applies every confirmed net proceed debt-first, and continues until either the debt is zero or cumulative settled quantity reaches $\widehat x_u$; every submitted fill used by the policy settles by $u+\Delta$; and
  \item no new borrowing is created after $u$.
\end{enumerate}
Then by some $\sigma\leq u+\Delta$,
\begin{equation}
D_{\sigma^+}=0
\qquad\text{and}\qquad
Z_{\sigma}\geq 0,
\label{eq:robust-clearing-result}
\end{equation}
where $Z_{\sigma}$ is confirmed dedicated cash remaining after debt repayment. If the policy is configured to continue until confirmed cash covers the realized debt plus $m_u$ (or if the entire certified cap $\widehat x_u$ settles), then at the corresponding completion time $\sigma_m\leq u+\Delta$,
\begin{equation}
D_{\sigma_m^+}=0
\qquad\text{and}\qquad
Z_{\sigma_m}\geq m_u.
\label{eq:robust-clearing-buffer-result}
\end{equation}
\end{theorem}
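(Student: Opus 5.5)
The argument compares the realized path with the two envelopes and then tracks the debt-first ledger recursion. Fix $\omega$ satisfying (ii)--(iii). By \eqref{eq:lower-proceeds-envelope} and $\omega\in\Uop_u(\Delta)$, the settled-proceeds capacity at the planned cap satisfies $B_{u,\Delta}^{\mathrm{set},\Pi}(\widehat x_u,\omega)\geq\underline B_{u,\Delta}^{\Pi}(\widehat x_u)$. By \eqref{eq:debt-service-upper-envelope}, $H_{u,u+\Delta}^{\Pi}(\omega)\leq\overline H_{u,\Delta}$. Feasibility of $\widehat x_u$ in \eqref{eq:robust-feasible-set} then gives
\[
K_u+B_{u,\Delta}^{\mathrm{set},\Pi}(\widehat x_u,\omega)\geq H_{u,u+\Delta}^{\Pi}(\omega)+m_u .
\]
Assumptions (iii), (iv) and \Cref{ass:authority,ass:lien} are used only to certify that the policy is actually executed on $\omega$. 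The market stays open, orders can be signed, and the tokens and cash cannot leave the policy. Hence the settled proceeds of the real execution coincide with the capacity curve $B^{\mathrm{set},\Pi}_{u,\Delta}(\cdot,\omega)$. \Cref{ass:settlement} ensures that only confirmed, irreversible receipts enter $P_{u,t}$.

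Next I argue by contradiction that debt hits zero by $u+\Delta$. Suppose $D_t>0$ for all $t\leq u+\Delta$. Then the stopping rule in (v) never triggers on debt, so the controller keeps selling until the cumulative settled quantity reaches $\widehat x_u$. Every fill settles by $u+\Delta$, so all of $B^{\mathrm{set},\Pi}_{u,\Delta}(\widehat x_u,\omega)$ together with $K_u$ is applied debt-first. While debt is positive, the recursion $p_j=\min\{D_j,K_j+\Delta b_j\}$ applies the whole available cash. Hence
\[
P_{u,u+\Delta}=K_u+B^{\mathrm{set},\Pi}_{u,\Delta}(\widehat x_u,\omega).
\]
By \eqref{eq:debt-service-requirement}, $D_{u+\Delta}=H_{u,u+\Delta}-P_{u,u+\Delta}\leq -m_u\leq 0$, which contradicts positivity. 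Condition (vi) and \eqref{eq:debt-growth} guarantee that no debt beyond what $H$ accounts for appears. Define $\sigma$ as the first confirmation time at which $D=0$. The positive-part convention in \eqref{eq:debt-growth}, together with $D_{j+1}=D_j-p_j\geq0$, gives $D_{\sigma^+}=0$. Confirmed cash after repayment is $Z_\sigma=K_{j+1}=K_j+\Delta b_j-p_j\geq0$, since $p_j\leq K_j+\Delta b_j$.

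For the buffered statement, I would rerun the same argument with target $D+m_u$ in place of $D$. If the policy continues until confirmed cash covers realized debt plus $m_u$, then either that event occurs before the cap, giving $Z_{\sigma_m}\geq m_u$ directly, or the entire cap settles. In the latter case,
\[
Z=K_u+B^{\mathrm{set}}(\widehat x_u,\omega)-H_{u,\sigma_m}\geq m_u
\]
by the displayed inequality. This uses the fact that cumulative service paid through extinction is bounded by $H_{u,u+\Delta}\leq\overline H_{u,\Delta}$: after extinction $D=0$ and $P$ is frozen.

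The main obstacle is making the identification $P+D=H$ rigorous across multiple settlement batches with interest accruing between them. The argument has to show that the realized cumulative cash actually applied equals the full settled proceeds whenever debt has not yet been extinguished. It must also show that $H$ evaluated at $u+\Delta$ dominates the service at any earlier extinction time. I would handle both by monotonicity of $t\mapsto H_{u,t}$ once $D=0$, which holds because no accrual occurs after $\sigma$ (policy (P6)). I would also rely on the fact that the envelope $\overline H$ is taken for the same policy $\Pi_u$ including its stopping rule.
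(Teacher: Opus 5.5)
Your proposal is correct and follows the paper's own argument: bound the realized settled proceeds at $\widehat x_u$ below by $\underline B_{u,\Delta}^{\Pi}(\widehat x_u)$, bound realized debt service above by $\overline H_{u,\Delta}$, and conclude that either debt reaches zero earlier or the full cap settles with confirmed cash covering debt plus $m_u$. Your explicit bookkeeping via $H_{u,t}^{\Pi}=P_{u,t}^{\Pi}+D_t^{\Pi}$ and the debt-first recursion is a somewhat more careful version of the paper's inequality chain, which in the appendix compares against $D_{u+\Delta}(\omega)$; the identity you flag as the main obstacle is simply the definition \eqref{eq:debt-service-requirement}, so no extra work is needed there.
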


\begin{proof}
Because $\omega\in\Uop_u(\Delta)$,
\begin{equation*}
B_{u,\Delta}^{\mathrm{set},\Pi}(\widehat x_u,\omega)
\geq
\underline B_{u,\Delta}^{\Pi}(\widehat x_u).
\end{equation*}
Realized debt by the horizon is at most $\overline H_{u,\Delta}$. Hence, if cumulative settled quantity reaches $\widehat x_u$, confirmed dedicated cash is at least realized debt plus $m_u$. Under the stated policy, either debt reaches zero earlier or the certified cap settles; in both cases debt is extinguished by $u+\Delta$. Execution authority and collateral non-escape make the risk-reducing sale and debt-first application enforceable, while settlement observability prevents matched but unsettled amounts from being counted. If the policy stops at the first debt-zero prefix, only $Z_\sigma\geq0$ is guaranteed. If it continues to the registered debt-plus-buffer target, or the entire cap settles, the stronger bound $Z_{\sigma_m}\geq m_u$ follows.
\end{proof}

The theorem is conditional on the operating set. It does not state that the path remains inside the set or that a venue cannot close early. Those cases are treated in \Cref{sec:impossibility}.

\subsection{Pathwise debt-clearing invariant}

\begin{proposition}[Debt-clearing accounting invariant]
\label{prop:pathwise-clearing}
On any realized path, if confirmed dedicated cash and settled sale proceeds satisfy
\begin{equation}
K_u+B_{\sigma}^{\mathrm{audit}}(x^{\mathrm{sold}})\geq D_{\sigma^-}
\label{eq:pathwise-cover}
\end{equation}
and the application waterfall pays the loan before releasing residual assets, then
\begin{equation}
D_{\sigma^+}=0.
\label{eq:debt-zero}
\end{equation}
\end{proposition}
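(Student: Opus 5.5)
The plan is to prove the invariant directly from the debt-first accounting in the partial-fill recursion of \Cref{sec:auto}, together with the positive-part convention in \eqref{eq:debt-growth}. No probabilistic or robust-set argument is needed. The statement is purely about how a confirmed cash amount is applied to a known receivable.

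\textbf{Step 1: aggregate the confirmed cash available at $\sigma$.} By \Cref{ass:settlement}, only confirmed collateral receipts enter. The amount available to the waterfall immediately before application is therefore dedicated cash $K_u$ plus the cumulative settled proceeds $B_{\sigma}^{\mathrm{audit}}(x^{\mathrm{sold}})$. I define the audit curve on the realized confirmed-fill sequence, so this total counts each settled fill exactly once. \Cref{ass:ledger} forbids double posting under the same event identifier, and the recursion $K_{j+1}=K_j+\Delta b_j-p_j$ never recounts cash already applied.

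\textbf{Step 2: apply the waterfall as a single payment, or telescope the batch recursion.} Treating the application at $\sigma$ as one payment $p=\min\{D_{\sigma^-},\,K_u+B_{\sigma}^{\mathrm{audit}}(x^{\mathrm{sold}})\}$, hypothesis \eqref{eq:pathwise-cover} gives $p=D_{\sigma^-}$. Hence $D_{\sigma^+}=D_{\sigma^-}-p=0$. Equivalently, in \eqref{eq:debt-growth} the bracket becomes non-positive, and the positive part returns zero.

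If repayment occurs in several settlement batches, I will instead telescope the batch recursion $D_{j+1}=D_j-p_j$ with $p_j=\min\{D_j,K_j+\Delta b_j\}$. The claim is that the terminal debt is $\pospart{D_{\sigma^-}-(K_u+B_{\sigma}^{\mathrm{audit}}(x^{\mathrm{sold}}))}$, up to the interest accrued between batches. That accrued interest is already absorbed into the definition of $D_{\sigma^-}$ as the balance immediately before the final application.

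\textbf{Step 3: justify the debt-first ordering.} The loan-priority hypothesis of the statement, matching the second sentence of \Cref{ass:lien}, ensures no residual asset is released before $p$ is paid. The remainder $K_u+B_{\sigma}^{\mathrm{audit}}(x^{\mathrm{sold}})-D_{\sigma^-}\geq0$ then becomes surplus cash, not negative debt.

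\textbf{Main obstacle: the multi-batch case.} The delicate step is making sure $D_{\sigma^-}$ correctly includes interest and charges accrued up to the last application. The aim is that earlier partial payments and later accruals are not double counted against the aggregate cash. I would handle this by induction on batches, with the inductive claim that after batch $j$ either $D_{j+1}=0$ or all cash received so far has been applied. In the latter case the final batch faces exactly the residual balance, and the cover inequality closes it.
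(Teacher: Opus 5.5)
Your single-payment argument in Step 2 is exactly the paper's proof. The debt-first waterfall applies $\min\{D_{\sigma^-},K_u+B_{\sigma}^{\mathrm{audit}}(x^{\mathrm{sold}})\}=D_{\sigma^-}$ to a receivable that cannot be driven below zero, so $D_{\sigma^+}=0$. That already proves the proposition, because the paper defines $D_{\sigma^-}$ as the debt immediately before \emph{the} confirmed sale cash is applied (see \eqref{eq:audit-feasible-set}).

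The multi-batch extension you flag as the main obstacle does not work as written, so drop it or repair it. You reinterpret $D_{\sigma^-}$ as the residual balance just before the \emph{final} batch. At the same time you keep the \emph{cumulative} cash $K_u+B_{\sigma}^{\mathrm{audit}}(x^{\mathrm{sold}})$ on the left of \eqref{eq:pathwise-cover}, and earlier batches have already consumed part of it. Your inductive invariant (either $D_{j+1}=0$ or $K_{j+1}=0$) is correct. But it shows precisely that the last batch has only its own new proceeds available, so ``the cover inequality closes it'' does not follow. For example, take $K_u=0$, $D_u=100$, no accrual, and settled batches of $90$ and $1$: cumulative cash $91$ exceeds the pre-final residual $10$, yet terminal debt is $9$. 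Likewise, your claimed terminal balance $\pospart{D_{\sigma^-}-(K_u+B_{\sigma}^{\mathrm{audit}}(x^{\mathrm{sold}}))}$ subtracts the earlier payments twice.

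A correct batch version can take either of two forms:
\begin{enumerate}[label=(\roman*)]
  \item compare cumulative cash with the cumulative requirement $D_u+\int_u^{\sigma}r_sD_s\,\dd s+C_{u,\sigma}^{\mathrm{debt}}$ along the realized payment path, as in \eqref{eq:debt-growth} (the quantity behind $H_{u,t}^{\Pi}$); or
  \item compare only the unapplied cash $K_j+\Delta b_j$ of the last batch with the residual $D_j$.
\end{enumerate}
In either form, your single-payment step then applies verbatim.
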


\begin{proof}
The waterfall applies at least $D_{\sigma^-}$ to a receivable that cannot be reduced below zero. Hence the post-application balance is zero.
\end{proof}

This proposition is intentionally labeled an invariant rather than an execution theorem. The substantive control result is \Cref{thm:robust-clearing}, which selects the sale before the path is known.

\begin{corollary}[No residual loan lien]
\label{cor:no-lien}
Under \Cref{prop:pathwise-clearing} and \Cref{ass:no-rehyp}, the extinguished loan has no claim on the residual outcome tokens after $\sigma$.
\end{corollary}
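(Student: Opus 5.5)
The corollary follows from two facts used in sequence. First, \Cref{prop:pathwise-clearing} yields $D_{\sigma^+}=0$. Second, \Cref{ass:no-rehyp} blocks any later attachment of the residual tokens to the extinguished loan. The argument should be a short accounting one: show that, in the ledger of \Cref{sec:execution}, the loan receivable has a zero balance after $\sigma$ and that no journal entry can link it to the residual token inventory.

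\textbf{Step 1: the lien is derivative of the receivable.} Under \Cref{ass:lien}, the loan's claim on the financed tokens and dedicated cash is asserted only ``while $D_t>0$''. The claim is therefore not a freestanding property right but a restriction whose scope is the set of times at which debt is positive. Applying \Cref{prop:pathwise-clearing}, the post-application balance is $D_{\sigma^+}=0$. So at $\sigma^+$ the condition that activates the non-escape restriction fails, and the residual quantity $q-x^{\mathrm{sold}}$ satisfies the definition of a fully funded event claim: it carries no associated borrowing obligation.

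\textbf{Step 2: the balance stays zero after $\sigma$.} Under \eqref{eq:debt-growth}, the debt is frozen at zero. If $D_s=0$, the interest term $r_sD_s$ contributes nothing, and policy (P6) prohibits interest after extinction. New borrowing is also prohibited after reduce-only. The one remaining channel is a charge in $C^{\mathrm{debt}}_{u,t}$ posted after $\sigma$, or a reversal of the payment. The payment cannot be reversed: \Cref{ass:settlement} makes the applied proceeds irreversible by ordinary retry, and under \Cref{ass:ledger} a failed settlement could only be booked as a new compensating entry, never as a mutation of history. I will therefore read $C^{\mathrm{debt}}$ as contractually confined to $[u,\sigma]$.

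\textbf{Step 3: rule out re-attachment.} \Cref{ass:no-rehyp} states that once residual tokens are classified as fully funded spot, they are not attached to the extinguished loan without a new explicit transaction. Any such transaction would be a new obligation with its own event identifier, not a claim of the old loan. Combining Steps 1–3, for all $t>\sigma$ the extinguished loan holds neither a positive receivable nor an encumbrance on the residual outcome tokens.

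\textbf{Main obstacle.} The hardest part is Step 2, specifically excluding post-$\sigma$ charges or settlement reversals that would recreate $D_t>0$. I will handle it as above, and I will state explicitly that chain reorganization beyond the confirmation rule falls in the infrastructure-failure category, outside the corollary.
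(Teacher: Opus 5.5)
Your argument is correct and is essentially the paper's own reasoning (the paper states the corollary without a separate proof because it is immediate): \Cref{prop:pathwise-clearing} gives $D_{\sigma^+}=0$, and \Cref{ass:no-rehyp} forbids re-attaching the residual tokens to the extinguished loan, the same two facts the paper combines again in the proof of \Cref{thm:debt-free-finality}. Your Step~2 on the persistence of zero debt is harmless but unnecessary, because the paper already stipulates that debt ``accrues until $\sigma$ and never after it,'' together with policy (P6) and the ban on new borrowing after reduce-only.
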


\subsection{Debt-free finality}

Let $q_{\mathrm{spot}}$ be the residual fully funded token quantity and $Z_{\sigma}\geq0$ the residual dedicated cash after repayment. At final payout time,
\begin{equation}
V_{\tau_f}^{\mathrm{claim}}
=Z_{\sigma}+q_{\mathrm{spot}}\pi_h.
\label{eq:claim-value-final}
\end{equation}

\begin{theorem}[Debt-free finality]
\label{thm:debt-free-finality}
Under \Cref{ass:collateralization,ass:no-rehyp} and confirmed $D_{\sigma^+}=0$, subsequent variation in the final payout vector creates no new credit-principal claim through the extinguished loan:
\begin{equation}
L_{\mathrm{loan}}^{\mathrm{credit}}(\boldsymbol\pi)=0
\qquad
\text{for every admissible }\boldsymbol\pi.
\label{eq:loan-loss-zero}
\end{equation}
The payout changes only the value of the user's residual claim, absent a separate operator guarantee.
\end{theorem}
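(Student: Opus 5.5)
The argument is a pathwise accounting one and does not use the execution envelopes at all. First I would pin down what $L_{\mathrm{loan}}^{\mathrm{credit}}(\boldsymbol\pi)$ means. It is the lender's principal loss through the loan channel: the receivable outstanding at final settlement minus the cash the loan can recover from assets attached to it. Then I would show that, from $\sigma$ onward, both the receivable and its attached collateral are zero and cannot be brought back by anything that happens between $\sigma$ and $\tau_f$. That interval may contain proposal, challenge, dispute, delay, void, or a non-canonical payout vector such as $(\tfrac12,\tfrac12)$.

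\textbf{Step 1: the receivable stays zero after $\sigma$.} I would argue from the debt-first accounting equation \eqref{eq:debt-growth}. Once $D_{\sigma^+}=0$, the accrual term $\int r_sD_s\,\dd s$ has a zero integrand. Policy (P6) stops interest after extinction. The prohibition on new borrowing after reduce-only means no new principal can be booked. Remaining debt charges $C^{\mathrm{debt}}$ are contractual to the open loan, and I would treat them as terminating with it, consistent with (P6). So $D_t=0$ for all $t\geq\sigma$, including $t=\tau_f$ and $t=\tau_r$, and none of these quantities is a function of $\boldsymbol\pi$. \Cref{ass:ledger} ensures that reaching $D_{\sigma^+}=0$ is an immutable, non-duplicated posting. \Cref{ass:settlement} ensures that the proceeds which extinguished the debt cannot be reversed by an ordinary retry. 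Together these keep the receivable from being reinstated by reprocessing.

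\textbf{Step 2: the loan has no claim on the residual assets.} By \Cref{cor:no-lien} and \Cref{ass:no-rehyp}, the residual $q_{\mathrm{spot}}$ and the cash $Z_\sigma$ are not pledged to the extinguished loan or to any new obligation. The finality value \eqref{eq:claim-value-final} therefore accrues entirely to the user's residual claim. By \Cref{ass:collateralization}, the payment $q_{\mathrm{spot}}\pi_h$ is funded from outcome-token collateral that is separate from the lender's receivable. So even $\pi_h=0$ creates no payable or shortfall on the lender's side.

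\textbf{Step 3: conclude.} With receivable zero and no recourse claim, the loss is $L_{\mathrm{loan}}^{\mathrm{credit}}(\boldsymbol\pi)=\max\{D_{\tau_f},0\}=0$ for every $\boldsymbol\pi\in[0,1]^2$ with $\pi_{\yes}+\pi_{\no}=1$. Only the user-side value $Z_\sigma+q_{\mathrm{spot}}\pi_h$ varies. Dispute duration enters only through $\tau_f-\sigma$, and Step 1 shows no accrual over that interval, so dispute duration has no effect on the loan channel either.

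The step I expect to be the main obstacle is Step 1's claim that nothing can \emph{recreate} the receivable. The paper never defines $L_{\mathrm{loan}}^{\mathrm{credit}}$ formally, and a settlement failure or reversal after $\sigma$ could in principle reopen the debt. My first move would be to fix the definition as the lender's principal shortfall on the loan ledger. I would then classify post-confirmation reversals, such as chain reorganizations beyond the confirmation rule, as infrastructure failures outside the hypothesis of ``confirmed $D_{\sigma^+}=0$,'' as the text after \Cref{ass:settlement} already does. The theorem is then stated only for confirmed extinction, and the proof needs nothing further.
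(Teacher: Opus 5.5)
Your proposal is correct and follows the paper's own argument: the receivable is zero at $\sigma^+$, redemption is paid from the separate outcome-token collateral (\Cref{ass:collateralization}), and the residual token is not reattached to the loan (\Cref{ass:no-rehyp}), so no payout vector can change the extinguished receivable. Your Step 1 adds only an explicit check that no accrual, charge, new borrowing, or ordinary reversal can recreate debt after $\sigma$; the paper's three-line proof leaves this inside the hypothesis of confirmed extinction.
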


\begin{proof}
The receivable is zero at $\sigma^+$. Redemption is paid by the outcome-token collateral system, and the residual token is not reattached to the loan. Therefore the payout vector cannot change the extinguished receivable.
\end{proof}

Whenever derivatives exist,
\begin{equation}
\frac{\partial L_{\mathrm{loan}}^{\mathrm{credit}}}{\partial\pi_h}=0,
\qquad
\frac{\partial V_{\tau_f}^{\mathrm{claim}}}{\partial\pi_h}=q_{\mathrm{spot}}.
\label{eq:payout-derivatives}
\end{equation}
The qualifier ``through the extinguished loan'' matters. If the same legal entity also guarantees custody or redemption, it can retain liabilities through those separate roles.

\subsection{Dispute-duration invariance}

Let $\Delta_f=\tau_f-\sigma$ be the interval from debt extinction to final payout. Let $\Delta_r=\tau_r-\tau_f$ be redemption latency.

\begin{theorem}[Loan-channel dispute-duration invariance]
\label{thm:dispute-invariance}
Suppose no new borrowing, delay compensation, or redemption guarantee is attached to the extinguished loan after $\sigma$. Then
\begin{equation}
\frac{\partial L_{\mathrm{loan}}^{\mathrm{credit}}}{\partial\Delta_f}=0,
\qquad
\frac{\partial L_{\mathrm{loan}}^{\mathrm{credit}}}{\partial\Delta_r}=0.
\label{eq:duration-invariance}
\end{equation}
The user may still bear opportunity cost, custody risk, legal uncertainty, and platform-access risk during both intervals.
\end{theorem}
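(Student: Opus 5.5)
The plan is to reduce the statement to the structure already exploited in \Cref{thm:debt-free-finality}. We write the loan-channel credit-principal loss as a functional of the receivable path only. Concretely, define
\begin{equation*}
L_{\mathrm{loan}}^{\mathrm{credit}}=\pospart{D_{\sigma^+}+\textstyle\int_{\sigma}^{\infty} r_sD_s\,\dd s+C_{\sigma,\infty}^{\mathrm{debt}}-P_{\sigma,\infty}},
\end{equation*}
that is, the unrecovered part of the receivable after $\sigma$. The first step is to observe that, by \eqref{eq:debt-growth} and the positive-part convention, once $D_{\sigma^+}=0$ the integrand $r_sD_s$ vanishes on any interval where $D_s=0$.

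The second step is a self-consistency argument showing that $D_t=0$ for all $t\geq\sigma$. No new borrowing after reduce-only means no new principal. By hypothesis, no delay compensation or redemption guarantee is attached to the extinguished loan, so $C_{\sigma,t}^{\mathrm{debt}}=0$. The Gr\"onwall-type relation $D_t\leq\int_\sigma^t r_sD_s\,\dd s$ with $D_\sigma=0$ then forces $D_t\equiv0$. Hence the receivable is identically zero on $[\sigma,\infty)$, and so $L_{\mathrm{loan}}^{\mathrm{credit}}=0$ on every path, irrespective of the values of $\tau_f$ and $\tau_r$.

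The third step is the invariance statement itself. The quantity $L_{\mathrm{loan}}^{\mathrm{credit}}$ is constant, equal to $0$, as a function of $\Delta_f=\tau_f-\sigma$ and $\Delta_r=\tau_r-\tau_f$. These durations enter only through the finality and redemption clocks, which act on the outcome-token collateral system (\Cref{ass:collateralization}) and on the residual spot claim, not on the receivable (\Cref{ass:no-rehyp}, \Cref{cor:no-lien}). Therefore both partial derivatives in \eqref{eq:duration-invariance} exist and vanish, and more strongly the finite differences vanish. We interpret the derivative notation in this sense, so that no differentiability assumption is needed.

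The main obstacle is making precise that nothing else in the ledger can route a duration-dependent charge back into $D$. The accrual rate, fees, and reversal entries from \Cref{ass:ledger} might in principle post after $\sigma$. I would handle this by noting three points. Accrual is proportional to $D$, hence zero. Contractual charges tied to the loan are excluded by hypothesis. Settlement reversals of fills already confirmed are excluded by \Cref{ass:settlement}, so no compensating entry reopens the receivable. Finally, I would remark that the user-side costs listed in the statement sit in other accounts and are therefore outside $L_{\mathrm{loan}}^{\mathrm{credit}}$.
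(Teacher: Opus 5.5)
Your argument is correct and is essentially the paper's own proof, which simply observes that neither interval $\Delta_f$ nor $\Delta_r$ can change a receivable already fully paid at $\sigma$; your explicit loss functional, the Gr\"onwall step giving $D_t\equiv0$ on $[\sigma,\infty)$, and the finite-difference reading of the derivatives make that one-line observation rigorous. One small calibration: the exclusion of generic post-$\sigma$ contractual charges comes from the model's standing convention that debt accrues until $\sigma$ and never after it, not literally from the stated hypothesis, which names only new borrowing, delay compensation, and redemption guarantees.
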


\begin{proof}
Neither interval changes a receivable that was fully paid at $\sigma$. The conclusion does not apply to separate contractual obligations incurred by the operator.
\end{proof}

\begin{figure}[H]
\centering
\includegraphics[width=0.88\textwidth]{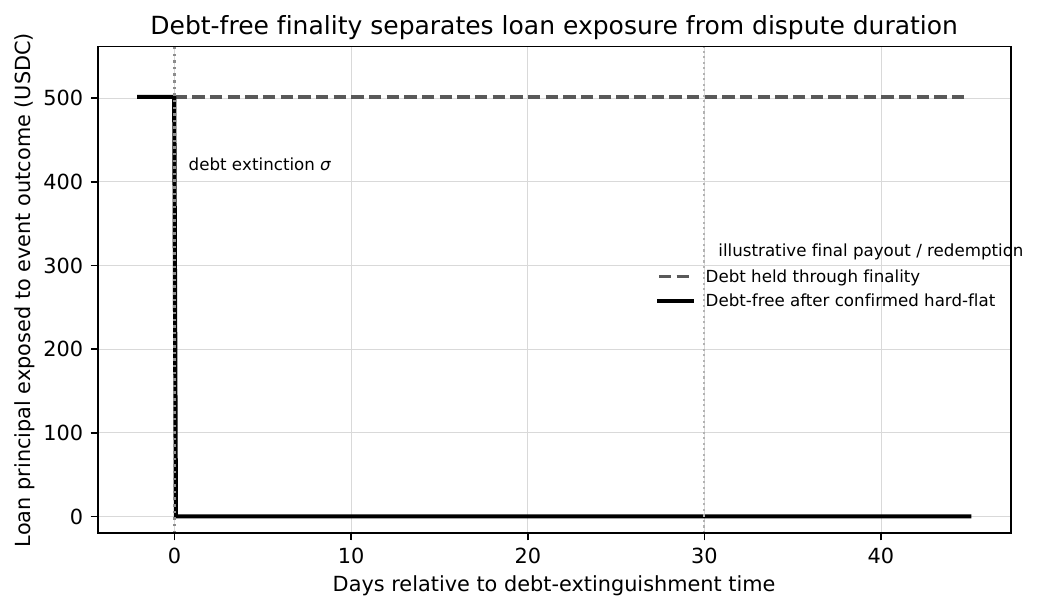}
\caption{Loan-channel exposure through finality. A financed claim held through resolution keeps principal exposed. Under confirmed debt-free conversion, the loan receivable reaches zero at $\sigma$; finality and redemption latency remain user and service-layer risks rather than event-outcome credit risk.}
\label{fig:dispute-invariance}
\end{figure}

\subsection{Positive debt cannot be outcome-invariant without support}

\begin{proposition}[Positive-debt impossibility at an adverse payout]
\label{prop:positive-debt-impossibility}
Suppose the held token admits final payout $\pi_h=0$. At finality, let dedicated cash be $K\geq0$, debt $D>0$, and independent external collateral or guarantee be zero. Then lender shortfall is
\begin{equation}
\ell_{\pi_h=0}=\pospart{D-K}.
\label{eq:adverse-payout-shortfall}
\end{equation}
If $K<D$, zero lender loss across all payout vectors is impossible while debt survives.
\end{proposition}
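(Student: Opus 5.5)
The plan is a direct accounting argument at finality in the adverse state $\pi_h=0$, followed by the existence of that state to obtain the impossibility. First I will fix what resources can repay the lender at $\tau_f$ while debt $D>0$ survives. Under \Cref{ass:lien}, loan priority gives the lender claim to three things: the financed outcome tokens, the dedicated cash $K$, and any independent external collateral or guarantee. The last is zero by hypothesis. By \Cref{ass:collateralization}, the tokens are worth exactly $q\pi_h$ at finality, because redemption is paid from the outcome-token collateral system and not from any other source. No further sale at a positive price is available, since $T_c^{\mathrm{act}}\leq\tau_p\leq\tau_f$ by the ordering \eqref{eq:successful-ordering}. So the maximal recoverable amount is $K+q\pi_h$.

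Second, I will evaluate the waterfall at $\pi_h=0$. Recoverable value is then exactly $K$. Debt-first application, with the positive-part convention of \eqref{eq:debt-growth}, pays $\min\{D,K\}$ toward the receivable. The unpaid remainder is therefore $D-\min\{D,K\}=\pospart{D-K}$, which is \eqref{eq:adverse-payout-shortfall}. I will note that the formula is exact, not merely an upper bound: no other asset of the borrower's isolated position is attached to the loan. This uses isolated margin (P1) and the hypothesis that there is no external support.

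Third, the impossibility follows immediately. If $K<D$, then $\ell_{\pi_h=0}=D-K>0$. The payout vector with $\pi_h=0$ is admissible by assumption, since the canonical states $(1,0)$ and $(0,1)$ include it for either choice of $h$. Hence lender loss is strictly positive at some admissible $\boldsymbol\pi$. Zero loss across all payout vectors therefore fails. This also contrasts with \Cref{thm:debt-free-finality}, where loss is identically zero: loss is now positive at $\pi_h=0$, while it is $\pospart{D-K-q}$ at $\pi_h=1$, which can be zero. So lender loss is not outcome-invariant whenever $q>0$ or $D>K$.

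The main obstacle is justifying that the recoverable value at $\pi_h=0$ is exactly $K$. This means ruling out hidden recovery channels, such as a late sale or the residual value of a void vector. I will handle this by reading the proposition strictly at finality in the state $\pi_h=0$, where token value is zero by the payout definition, and by invoking the stated absence of external collateral.
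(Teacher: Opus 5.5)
Your argument is correct and matches the paper's own proof: at $\pi_h=0$ the token pays nothing, so only the dedicated cash $K$ is available to the loan, which gives $\pospart{D-K}$, and this is strictly positive when $K<D$. There are two small slips outside the core argument. First, citing \eqref{eq:successful-ordering} to rule out a late sale is misplaced, because that ordering presupposes $\sigma<T_c^{\mathrm{act}}$, i.e.\ debt already extinguished, and it is unneeded since a token with final payout $0$ is worthless anyway. Second, your outcome-invariance aside should read ``whenever $q>0$ \emph{and} $D>K$'' rather than ``or'': with $D\le K$ the loss is zero in every state, and with $q=0$ it is the constant $D-K$.
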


\begin{proof}
At $\pi_h=0$, the token pays nothing. Only dedicated cash remains available to the loan.
\end{proof}

\begin{corollary}[Necessary balance-sheet alternatives]
A mechanism seeking zero lender loss for every admissible payout must, before finality, either extinguish the debt, hold independent collateral covering the uncovered amount, obtain an enforceable guarantee, or exclude the zero-payout state by contract.
\end{corollary}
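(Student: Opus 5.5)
The plan is to prove the corollary by contraposition from \Cref{prop:positive-debt-impossibility}. Suppose a mechanism achieves zero lender loss for every admissible payout vector, and suppose that at finality none of the four alternatives holds. Then debt has not been extinguished, so $D>0$. No independent collateral covers the uncovered amount and no enforceable guarantee exists, so external support for $D-K$ is zero. The zero-payout state is not excluded by contract, so $\pi_h=0$ is admissible. The one remaining resource is dedicated cash $K\geq0$.

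I would then split into two cases. If $K<D$, \Cref{prop:positive-debt-impossibility} gives shortfall $\pospart{D-K}>0$ at $\pi_h=0$, which contradicts zero lender loss. If $K\geq D$, the dedicated cash already covers the debt. Under \Cref{ass:lien} and the debt-first waterfall in \eqref{eq:debt-growth}, that cash is applied to the loan before residual value is released. Applying it before finality extinguishes the debt, and then \Cref{prop:pathwise-clearing} gives $D=0$, which is the first alternative. If the cash is instead not applicable to the loan, then by the exclusion convention at the end of the debt-service subsection it does not count as dedicated cash. Either way the case $K\geq D$ with $D>0$ surviving into finality cannot arise. One could also classify such cash as ``independent collateral covering the uncovered amount,'' which by itself satisfies the second alternative.

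The main obstacle is interpretive rather than computational. The corollary's notion of ``covering the uncovered amount'' must be aligned with the quantity $\pospart{D-K}$ in \eqref{eq:adverse-payout-shortfall}. I would state explicitly that independent collateral or a guarantee of size $G$ reduces the shortfall to $\pospart{D-K-G}$. That reduction follows by repeating the one-line proof of \Cref{prop:positive-debt-impossibility} with $K$ replaced by $K+G$, since both resources are, by \Cref{ass:collateralization}, separate from the token's payout. Zero loss at $\pi_h=0$ then requires $G\geq\pospart{D-K}$, which is exactly the covering condition.

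A remaining subtlety is that the loss must be evaluated at finality and not earlier. This is handled by policy (P7): no accounting is finalized from a provisional proposal, so the relevant debt is the debt surviving at $\tau_f$, and that is what forces the alternatives to be put in place before finality.
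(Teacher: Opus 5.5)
Your contrapositive argument from \Cref{prop:positive-debt-impossibility}, sharpened to the shortfall $\pospart{D-K-G}$ at $\pi_h=0$, is correct and matches the paper, which states the corollary as an immediate consequence of that proposition without a separate proof. The only wrinkle is the $K\geq D$ branch. There the uncovered amount $\pospart{D-K}$ is zero, so the second alternative holds trivially. You therefore do not need the claim that this case ``cannot arise,'' and \Cref{ass:lien} cannot really justify it anyway: that assumption concerns settled sale proceeds within one particular mechanism, not an arbitrary one.
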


\subsection{State-machine invariant}

Define
\begin{equation}
\begin{aligned}
\mathcal G=\{&\state{DEBT\_FREE\_SPOT},\state{PENDING\_FINALITY},
\state{DELAYED},\state{DISPUTED},\\
&\state{FINAL},\state{REDEMPTION\_PENDING},\state{REDEEMED}\}.
\end{aligned}
\end{equation}

\begin{theorem}[No-debt finality-state invariant]
\label{thm:state-invariant}
If entry into $\mathcal G$ is guarded by atomic confirmed $D=0$, no transition inside $\mathcal G$ can create borrowing, and every transition uses the same ledger state as its guard, then every reachable state satisfies
\begin{equation}
S_t\in\mathcal G\Longrightarrow D_t=0.
\label{eq:state-invariant}
\end{equation}
\end{theorem}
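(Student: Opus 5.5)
The plan is an induction on the length of the execution trace of the state machine, treating the ledger state $(S_t,D_t)$ as changing only at discrete transition events. The argument never enters the economics; it rests entirely on the three structural hypotheses. The predicate to maintain is
\[
I:\quad S\in\mathcal G\Longrightarrow D=0 .
\]
The base case is the initial state of any position, which is \state{OPEN} or earlier and lies outside $\mathcal G$, so $I$ holds vacuously.

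For the inductive step, I will classify every transition $S\to S'$ with debt update $D\to D'$ into three cases.

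\begin{enumerate}[label=(\alph*)]
  \item \emph{Outside to outside} ($S\notin\mathcal G$, $S'\notin\mathcal G$). The invariant holds vacuously after the step.
  \item \emph{Entry} ($S\notin\mathcal G$, $S'\in\mathcal G$). The guard requires confirmed $D=0$. Because the transition is atomic and reads the same ledger state it writes, no interleaved posting can change $D$ between the guard check and the commit. Hence $D'=D=0$.
  \item \emph{Internal} ($S,S'\in\mathcal G$). By the inductive hypothesis $D=0$. By hypothesis no internal transition creates borrowing. I must also rule out the other sources of an increase in $D$ in \eqref{eq:debt-growth}: interest accrual $\int r_sD_s\,\dd s$ and added charges $C^{\mathrm{debt}}$. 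The accrual term vanishes when $D=0$, and policy (P6) forbids interest after extinction. I will read ``create borrowing'' as including contractual charges posted to the receivable. The positive-part convention keeps $D'\geq0$, so $D'=0$.
\end{enumerate}

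Exits from $\mathcal G$ do not matter, since $I$ constrains only states inside $\mathcal G$.

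I expect the main obstacle to be justifying that $D$ cannot change \emph{between} transitions while $S$ sits inside $\mathcal G$. Two mechanisms could do this: continuous interest accrual, and a late-settling matched batch that reverses or posts a charge. For accrual, I will use the fact that \eqref{eq:debt-growth} with $D_s\equiv0$ and no new charges is a fixed point at $0$. Formally, if $D_t=0$ at entry and $C^{\mathrm{debt}}$ and $P$ are frozen, then $D_t=\pospart{\int r_sD_s\,\dd s}$, whose only solution is $0$ by a Gr\"onwall-type argument. For late or failed settlements, I will invoke \Cref{ass:ledger} and \Cref{ass:settlement}. Every change to $D$ is a balanced journal entry, so any change is itself a ledger event. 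I will model such an event as a transition, which must then satisfy the no-new-borrowing hypothesis. Any compensating reversal that would re-open debt would be a new borrowing posting and is therefore excluded. This reduces continuous-time reachability to the discrete induction above and completes the proof.
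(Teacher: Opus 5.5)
Your proposal is correct and follows the paper's own argument: an induction over state transitions in which the atomic $D=0$ entry guard handles entry into $\mathcal G$, the no-borrow hypothesis preserves $D=0$ along internal transitions, and atomicity rules out a check--use race. Your extra treatment of interest accrual and late or reversed settlements between transitions (the Gr\"onwall fixed point, and modelling every debt change as a journaled transition) makes explicit what the paper leaves implicit through policy (P6) and \Cref{ass:ledger,ass:settlement}, without changing the structure of the proof.
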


\begin{proof}
Induction over state transitions. The entry guard establishes the base case; the no-borrow rule preserves it; atomicity eliminates a check-use race.
\end{proof}

Failure states such as \state{EVENT\_CLOSE\_EXCEPTION} and \state{CREDIT\_SHORTFALL} are deliberately excluded from $\mathcal G$ and may contain positive debt pending reserve or recovery.

\subsection{Ledger conservation}

\begin{proposition}[Balanced-ledger preservation]
\label{prop:ledger-invariant}
Under \Cref{ass:ledger}, if
\begin{equation}
\text{Assets}=\text{Liabilities}+\text{Equity}
\end{equation}
holds before a sequence of journal entries, it holds after each balanced entry in the sequence.
\end{proposition}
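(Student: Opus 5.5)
The plan is an induction over the sequence of posted journal entries. The induction variable is the accounting identity $\text{Assets}-\text{Liabilities}-\text{Equity}=0$.

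I will formalize the ledger as a finite set of accounts $a$, each with a type in $\{\text{asset},\text{liability},\text{equity}\}$. These include venue cash, outcome tokens, pending settlement, and loan receivable as assets; capital-provider payable as a liability; and user cash, user spot claim, reserves, fees, and operational loss under the appropriate side or contra-equity. Each account carries a signed balance $b_a$. A journal entry $e$ is a finite collection of debit and credit legs $(a,d_a,c_a)$. Under the usual sign convention, a debit increases asset balances and decreases liability or equity balances, and a credit does the reverse.

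Balance of $e$, guaranteed by \Cref{ass:ledger}, means $\sum_a d_a=\sum_a c_a$. Define the imbalance functional
\begin{equation*}
\Phi=\sum_{a\in\text{Assets}}b_a-\sum_{a\in\text{Liab}}b_a-\sum_{a\in\text{Eq}}b_a .
\end{equation*}
The key step is the one-line computation that posting $e$ changes $\Phi$ by exactly $\sum_a d_a-\sum_a c_a$. Every debit leg adds $d_a$ to $\Phi$ whatever the account type: it raises an asset or lowers a liability or equity balance. Every credit leg subtracts $c_a$. So a balanced entry leaves $\Phi$ unchanged.

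The induction then runs as follows. The base case is $\Phi=0$ before the sequence, by hypothesis. The step is that $\Phi$ is invariant under each balanced entry, so the identity holds after every prefix of the sequence.

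I will then address the remaining clauses of \Cref{ass:ledger}, which is where I expect the only real care to be needed. Immutability means that correcting a failed settlement is itself a new balanced compensating entry, not a mutation of a prior one. The mutation case is therefore excluded from the sequence rather than needing separate treatment. Unique identifiers rule out double posting, but that matters for economic correctness, not for balance: a duplicate balanced entry would still preserve $\Phi$.

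The main obstacle is definitional rather than mathematical. I need to fix the sign convention and the classification of contra-accounts, such as operational loss and the pending accounts. Once every leg's effect on $\Phi$ is determined solely by whether it is a debit or a credit, the argument goes through. I would state that convention explicitly and note that any consistent classification of the chart of accounts yields the same telescoping argument.
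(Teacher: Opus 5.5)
Your proposal is correct and takes essentially the same route as the paper: induction over the sequence of entries, with the base case given by hypothesis and the inductive step resting on the fact that a balanced entry posts equal total debits and credits, so $\text{Assets}-\text{Liabilities}-\text{Equity}$ is unchanged. Your functional $\Phi$ and sign-convention check spell out the one-line step the paper only asserts, and you are right that idempotency concerns economic correctness rather than balance (the paper mentions it only as a side remark).
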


\begin{proof}
Each entry posts equal total debits and credits; induction preserves the identity. Idempotency prevents duplicate posting under the same event identifier.
\end{proof}

Balanced accounting does not prove authorization, correct valuation, or operator honesty. Those are security properties, not consequences of double-entry arithmetic.

\section{Exact Leverage, Margin, and Hard-Flat Timing}
\label{sec:risk}

The robust controller must constrain leverage before a position is opened. The exact condition is book- and policy-dependent; a scalar price ratio is only a benchmark.

\subsection{Exact book-dependent leverage envelope}

For candidate leverage $L\geq1$, confirmed entry quantity and residual cash are $q_0(L)$ and $K_0(L)$ from \eqref{eq:entry-quantity}--\eqref{eq:entry-cash}. Let $u(L)$ be the planned hard-flat decision time, $\Delta(L)$ the registered execution horizon, and $m(L)$ the hard-flat buffer. Define
\begin{equation}
\Psi(L)
=
K_0(L)
+
\underline B_{u(L),\Delta(L)}^{\Pi}\!\left(q_0(L)\right)
-
\overline H_{u(L),\Delta(L)}(L)
-
m(L).
\label{eq:exact-feasibility-margin}
\end{equation}

\begin{definition}[Exact admissible leverage set]
\begin{equation}
\mathcal L^{\mathrm{exact}}
=
\left\{L\in\mathcal L_{\mathrm{policy}}:\Psi(L)\geq0\right\},
\qquad
L_{\max}^{\mathrm{exact}}=\max\mathcal L^{\mathrm{exact}},
\label{eq:exact-leverage-envelope}
\end{equation}
where $\mathcal L_{\mathrm{policy}}$ is the finite set of leverage tiers permitted by product policy.
\end{definition}

The finite definition does not assume that $\Psi(L)$ is monotone. Fees, lot thresholds, changing execution policy, and concentration add-ons can create local non-monotonicity. An implementation should evaluate every permitted tier unless monotonicity has been proved for its exact parameterization.

\begin{proposition}[Exact debt-clearing leverage condition]
\label{prop:exact-leverage}
If $L\in\mathcal L^{\mathrm{exact}}$ and the assumptions of \Cref{thm:robust-clearing} hold at hard-flat, then the entire position quantity is sufficient to clear the debt and buffer inside the registered operating set.
\end{proposition}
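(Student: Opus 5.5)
The argument reduces \Cref{prop:exact-leverage} to \Cref{prop:planned-existence} and \Cref{thm:robust-clearing}, with $q=q_0(L)$, $K_u=K_0(L)$, $u=u(L)$, $\Delta=\Delta(L)$ and $m_u=m(L)$. First I will fix these identifications. Without trading between $t_0$ and $u(L)$, the settled quantity at hard-flat is $q_0(L)$ and the dedicated cash is $K_0(L)$. If reduce-only has already sold tokens and applied proceeds, I will note that $\underline B$ is evaluated on the top grid point $y_m$ of $\Qset_u(q)$. I will then argue that $y_m=q_0(L)$ up to lot rounding, so that $\underline B^{\Pi}_{u(L),\Delta(L)}(q_0(L))=\underline B^{\Pi}_{u,\Delta}(y_m)$. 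The debt-service envelope $\overline H_{u(L),\Delta(L)}(L)$ is by definition the same object $\overline H_{u,\Delta}$ used in \eqref{eq:robust-feasible-set}.

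Second, $L\in\mathcal L^{\mathrm{exact}}$ means $\Psi(L)\geq0$. Rearranging \eqref{eq:exact-feasibility-margin} gives exactly the feasibility inequality \eqref{eq:planned-feasibility}. By \Cref{prop:planned-existence}, the grid point $y_m$, which is the entire position, lies in $\widehat{\mathcal F}_{u,\Delta}(q)$. So the planned sale $\widehat x_u\leq y_m$ exists, and hypothesis (i) of \Cref{thm:robust-clearing} holds. The remaining hypotheses (ii)--(vi) are exactly what the statement assumes when it says ``the assumptions of \Cref{thm:robust-clearing} hold at hard-flat.''

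Third, I will invoke \Cref{thm:robust-clearing} under its buffer variant. Either the policy runs to the debt-plus-buffer target, or the whole cap settles. In both cases debt is extinguished by $u+\Delta$, and the remaining confirmed cash is at least $m(L)$. I will also argue the stronger reading directly. For any $\omega\in\Uop_u(\Delta)$, $B^{\mathrm{set},\Pi}_{u,\Delta}(q_0(L),\omega)\geq\underline B^{\Pi}_{u,\Delta}(q_0(L))$ and $H^{\Pi}_{u,u+\Delta}(\omega)\leq\overline H_{u,\Delta}$. Selling and settling all of $q_0(L)$ therefore yields cash $K_0(L)+B^{\mathrm{set}}\geq H+m(L)$. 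Hence the entire position covers debt plus buffer on every operating path. This is the sense of ``sufficient'' in the statement.

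The main obstacle is the first step: justifying that the cash and quantity at $u(L)$ coincide with the entry quantities $q_0(L)$ and $K_0(L)$, or at least dominate them in the right direction. Two effects could break this. Interest paid from $K$ before $u$ would shrink the cash; however, that accrual is already counted in $\overline H$ through $P_{u,t}$, so only post-$u$ flows matter. Partial reduce-only sales would change both $q$ and $K$. I will handle this by restricting to the canonical case where the loan is untouched before $u$. Alternatively, I will note that any pre-$u$ settled sale moves value from tokens to cash and debt repayment, which would need a monotonicity remark about the envelope. If that is not airtight, I will state the proposition under the canonical assumption that no sales occur between $t_0$ and $u(L)$.
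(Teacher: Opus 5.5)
Your proposal is correct and follows the paper's own proof. The paper's proof is also: $\Psi(L)\geq0$ is exactly the full-quantity robust feasibility inequality \eqref{eq:planned-feasibility}, so the planned feasible set is non-empty and \Cref{thm:robust-clearing} applies. You identify $(q,K_u,m_u)$ with $(q_0(L),K_0(L),m(L))$ and assume no sales before $u(L)$; the paper makes these identifications implicitly when it defines $\Psi$, so making them explicit is a clarification rather than a departure.
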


\begin{proof}
$\Psi(L)\geq0$ is exactly the full-quantity robust feasibility inequality. Therefore the planned feasible set is non-empty, and \Cref{thm:robust-clearing} applies.
\end{proof}

\subsection{Linear-execution benchmark}

For interpretation only, suppose
\begin{equation}
A(q)=a q,
\qquad
\underline B(q)=b q,
\qquad
K_0=0,
\end{equation}
with $a>0$, $b\geq0$, debt-growth factor $\rho_D\geq1$, and zero additional buffer. Then $q=LC/a$, debt at the horizon is $\rho_D(L-1)C$, and the clearance condition becomes
\begin{equation}
\frac{b}{a}L\geq\rho_D(L-1).
\end{equation}
Let $g=b/a$.

\begin{corollary}[Linear recovery-ratio benchmark]
\label{cor:linear-envelope}
If $g<\rho_D$, then
\begin{equation}
L\leq
L_{\max}^{\mathrm{lin}}
=
\frac{\rho_D}{\rho_D-g}.
\label{eq:linear-leverage-bound}
\end{equation}
If $g\geq\rho_D$, the simplified inequality does not bind and external policy caps determine leverage.
\end{corollary}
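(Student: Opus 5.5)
The plan is to treat the simplified clearance inequality $\frac{b}{a}L\geq\rho_D(L-1)$, that is $gL\geq\rho_D(L-1)$, as a linear inequality in $L$ and solve it directly. First I will confirm that the inequality is exactly $\Psi(L)\geq0$ from \eqref{eq:exact-feasibility-margin} under the benchmark specialization. With $A(q)=aq$, the entry rule \eqref{eq:entry-quantity} in its continuous form gives $q_0=LC/a$. Together with $K_0=0$, this yields $\underline B(q_0)=bLC/a=gLC$. The debt-service envelope is $\overline H=\rho_D(L-1)C$ and $m=0$, so
\begin{equation*}
\Psi(L)=C\bigl(gL-\rho_D(L-1)\bigr).
\end{equation*}
Since $C>0$, the condition $\Psi(L)\geq0$ is equivalent to $gL\geq\rho_D(L-1)$.

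Next I will rearrange the inequality to $\rho_D\geq(\rho_D-g)L$. The argument then splits into two cases according to the sign of $\rho_D-g$.

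\emph{Case $g<\rho_D$.} Here $\rho_D-g>0$, so dividing preserves the direction of the inequality and gives $L\leq\rho_D/(\rho_D-g)=L_{\max}^{\mathrm{lin}}$. I will also check that this bound is compatible with $L\geq1$. Because $g\geq0$, we have $\rho_D-g\leq\rho_D$, so $L_{\max}^{\mathrm{lin}}\geq1$ and the admissible interval $[1,L_{\max}^{\mathrm{lin}}]$ is non-empty.

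\emph{Case $g\geq\rho_D$.} Here $(\rho_D-g)L\leq0<\rho_D$ for every $L\geq1$, since $\rho_D\geq1>0$. The inequality therefore holds for all $L$, so it does not bind and only $\mathcal L_{\mathrm{policy}}$ caps leverage.

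There is no real obstacle; the only care needed is the sign of $\rho_D-g$ when dividing. I will also note in passing that the benchmark uses a continuous quantity rather than a lot grid, which is consistent with the remark after \eqref{eq:entry-cash}.
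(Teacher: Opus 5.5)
Your proposal is correct and follows the paper's own route. Like the paper, you specialize $\Psi(L)\geq0$ to $A(q)=aq$, $\underline B(q)=bq$, $K_0=0$, zero buffer and horizon debt $\rho_D(L-1)C$ to obtain $gL\geq\rho_D(L-1)$, and then solve that linear inequality in $L$; your explicit sign split on $\rho_D-g$ and the check $L_{\max}^{\mathrm{lin}}\geq1$ are simply a more careful version of the paper's ``equality gives'' step in the appendix.
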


The formula is not used for production sizing. Real $a$ and $b$ depend on quantity, shared depth, fees, partial fills, and time. The exact envelope in \eqref{eq:exact-leverage-envelope} is primary.

\subsection{Comparative statics}

In the linear benchmark, $L_{\max}^{\mathrm{lin}}$ increases in $g$ and decreases in $\rho_D$. In the exact model, the same directional intuition generally holds but is not guaranteed across discrete tiers. A deterioration in the lower proceeds envelope, increase in settlement horizon, or increase in buffer can only remove a tier if all other objects are fixed.

\subsection{Maintenance liquidation value}

At monitoring time $t<u$, define a shorter-horizon lower proceeds envelope $\underline V_t(q)$ under the liquidation policy. Maintenance health is
\begin{equation}
\hf_t(q)=\frac{K_t+\underline V_t(q)}{D_t+M_t(q)}.
\end{equation}
The position is liquidatable when $\hf_t\leq1$. A target $\overline H>1$ may be used after partial liquidation.

For a candidate partial sale $y$, let $\underline B_t^{\mathrm{part}}(y)$ be a lower settled-proceeds bound for that sale. Let
\begin{equation}
\underline V_t^{\mathrm{post}}(q-y\mid y)
=\inf_{\omega\in\mathcal U_t^{\mathrm{liq}}}
V_t^{\mathrm{post}}(q-y\mid y,\omega)
\label{eq:post-sale-residual-value}
\end{equation}
be the lower liquidation value of the residual quantity after the same scenario path has consumed the depth used by $y$ and applied any registered book response. It is not a fresh top-of-book valuation. Define the amount of confirmed cash applied to debt by
\begin{equation}
P_t(y)=\min\{D_t,K_t+\underline B_t^{\mathrm{part}}(y)\},
\end{equation}
and the conservative post-settlement state
\begin{align}
D_t'(y)&=D_t-P_t(y),\\
K_t'(y)&=K_t+\underline B_t^{\mathrm{part}}(y)-P_t(y),\\
q_t'(y)&=q-y.
\end{align}
A partial liquidation sells the minimum admissible $y$ such that
\begin{equation}
\frac{K_t'(y)+\underline V_t^{\mathrm{post}}(q_t'(y)\mid y)}
{D_t'(y)+M_t(q_t'(y))}
\geq \overline H.
\label{eq:partial-liquidation-condition}
\end{equation}
Using separate lower infima for sale cash and post-sale residual value is conservative even when their worst paths differ; conditioning the residual value on consumed depth prevents internal liquidity reuse. The actual debt application and settlement sequence is spelled out in \Cref{app:algorithms}. If the buffer has discontinuous tiers, exhaustive lot-grid evaluation is safer than assuming monotonicity.

\subsection{Depth, staleness, and confidence gates}

New leverage is prohibited when any of the following holds:
\begin{enumerate}[label=(G\arabic*)]
  \item no synchronized book or RFQ quote exists;
  \item order-book sequence or hash reconciliation fails;
  \item the snapshot age exceeds the registered threshold;
  \item the full-position lower proceeds envelope is below the debt-and-buffer threshold;
  \item execution authority or collateral control is unavailable;
  \item actual market state is not open;
  \item aggregate risk-bucket capacity in \Cref{sec:aggregate} is exhausted; or
  \item the remaining interval before scheduled close is below the hard-flat horizon plus safety buffer.
\end{enumerate}

The confidence score is a gating variable, not a substitute for proceeds. A high-confidence midpoint does not repay a loan.

\subsection{Scheduled and actual close}

Let
\begin{equation}
\Delta_{\mathrm{HF}}
=
\delta_{\mathrm{quote}}
+
\delta_{\mathrm{submit}}
+
\delta_{\mathrm{match}}
+
\delta_{\mathrm{retry}}
+
\delta_{\mathrm{settle}}
+
\delta_{\mathrm{clock}}
+
\delta_{\mathrm{early}},
\label{eq:hard-flat-buffer}
\end{equation}
where $\delta_{\mathrm{early}}$ is a policy allowance for early close or suspension risk. Scheduled hard-flat must satisfy
\begin{equation}
u\leq T_c^{\mathrm{sched}}-\Delta_{\mathrm{HF}}.
\label{eq:scheduled-hard-flat}
\end{equation}
This does not guarantee $u<T_c^{\mathrm{act}}$. If the venue can close administratively without advance notice, that event remains in $\Ufail$ unless covered by an enforceable close-window agreement.

\subsection{Time-to-close leverage compression}

A simple product cap is
\begin{equation}
L_{\max}(t)
=
\min\left\{
L_{\max}^{\mathrm{exact}}(t),
1+(L_{\mathrm{far}}-1)\chi(t)
\right\},
\label{eq:time-compression}
\end{equation}
where $\chi(t)$ decreases from one to zero between reduce-only and hard-flat. At $u$, no new leverage is allowed. This schedule reduces the volume that must be sold simultaneously but does not replace the exact feasibility test.

\subsection{Hard-flat shortfall surface}

For the linear benchmark, shortfall per unit user collateral is
\begin{equation}
\ell(L,g)
=
\pospart{\rho_D(L-1)-gL}.
\label{eq:linear-shortfall}
\end{equation}
The surface is useful for intuition and verifier tests. It is not a substitute for the nonlinear shared-book model.

\begin{figure}[H]
\centering
\includegraphics[width=0.88\textwidth]{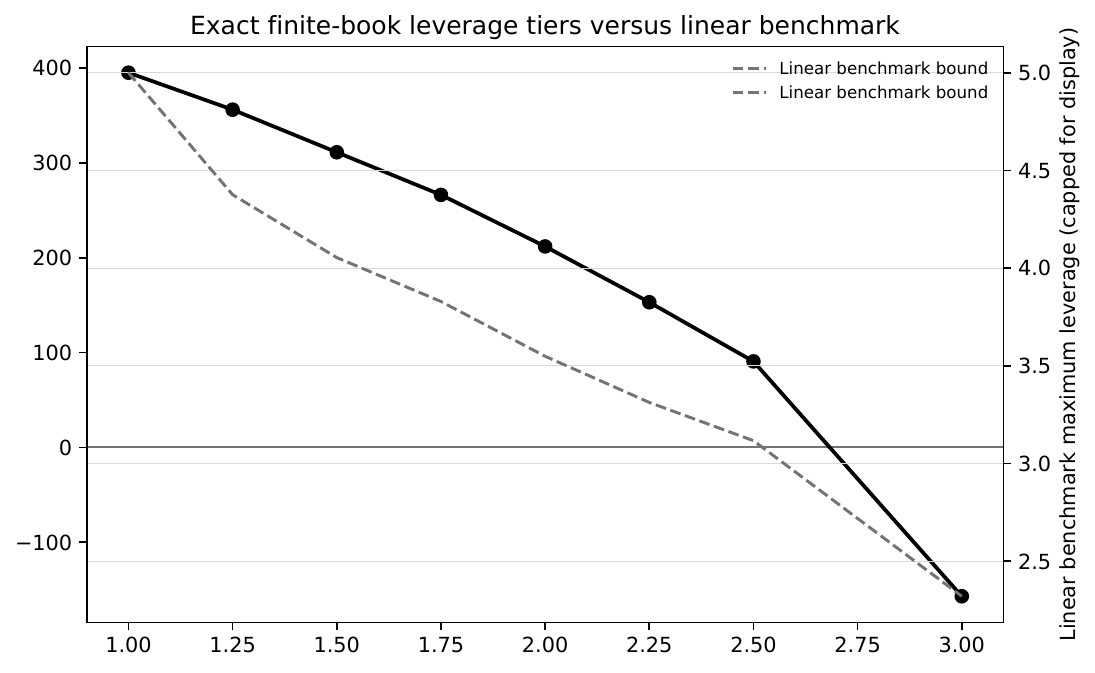}
\caption{Illustrative exact leverage tiers under a finite entry ask ladder and robust exit envelope. The scalar linear benchmark is shown only for comparison; admissibility follows the finite book-dependent feasibility margin.}
\label{fig:exact-leverage}
\end{figure}

\section{Aggregate Hard-Flat and Shared Liquidity}
\label{sec:aggregate}

Position-level feasibility is insufficient when several loans rely on the same bid book. If every position independently values itself from the top of the same ladder, the system reuses liquidity that can be consumed only once.

\subsection{Shared-book proceeds}

Consider $n$ positions in the same token and hard-flat window. Position $i$ has quantity $q_i$, dedicated cash $K_i$, and debt-service upper bound $\overline H_i$. Let
\begin{equation}
X=\sum_{i=1}^{n}x_i
\end{equation}
be aggregate sale quantity. Let $\mathcal B_{u,\Delta}^{\mathrm{set},\Pi}(X,\omega)$ be pathwise settled proceeds from executing aggregate quantity $X$ through the shared venue book and policy. Define
\begin{equation}
\underline{\mathcal B}_{u,\Delta}(X)
=
\inf_{\omega\in\mathcal U_{u,\mathrm{agg}}^{\mathrm{op}}(\Delta)}
\mathcal B_{u,\Delta}^{\mathrm{set},\Pi}(X,\omega).
\label{eq:aggregate-proceeds}
\end{equation}
This is not the sum of $n$ independently re-anchored curves.

\begin{proposition}[Top-of-book double counting]
\label{prop:double-counting}
Suppose a common cumulative proceeds curve $B$ has non-increasing marginal proceeds and $B(0)=0$. If each position is valued independently from the same initial book, then
\begin{equation}
\sum_{i=1}^{n}B(q_i)
\geq
B\!\left(\sum_{i=1}^{n}q_i\right),
\label{eq:double-counting}
\end{equation}
with strict inequality whenever the aggregate sale walks into worse marginal levels not reached by every position individually.
\end{proposition}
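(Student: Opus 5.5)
The plan is to use subadditivity of a concave-type function with $B(0)=0$. Non-increasing marginal proceeds is the discrete version of concavity. A concave function vanishing at the origin is subadditive: $B(a+b)\leq B(a)+B(b)$ for $a,b\geq0$. Induction on $n$ then gives \eqref{eq:double-counting}.

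\textbf{Step 1: the two-position case.} To avoid assuming differentiability, I would write $B$ as a cumulative sum (or integral) of marginal proceeds, $B(x)=\int_0^x \beta(s)\,\dd s$ with $\beta$ non-increasing. On a finite lot grid the integral becomes a sum of per-lot marginals in ladder order. Then
\[
B(a+b)-B(a)=\int_a^{a+b}\beta(s)\,\dd s=\int_0^{b}\beta(a+s)\,\dd s\leq\int_0^{b}\beta(s)\,\dd s=B(b),
\]
because $\beta(a+s)\leq\beta(s)$ pointwise. This is the exact statement that the second position, valued alone, is credited with the top-of-book marginals $\beta(s)$. In the aggregate sale, the same quantity actually receives the deeper marginals $\beta(a+s)$.

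\textbf{Step 2: induction.} Set $a=\sum_{i<n}q_i$ and $b=q_n$. The inductive hypothesis applied to the first $n-1$ positions then yields \eqref{eq:double-counting}.

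\textbf{Step 3: strictness.} Equality in Step 1 requires $\beta(a+s)=\beta(s)$ for almost every $s\in[0,b]$, or for every lot in the discrete case. If the aggregate sale walks into a level with strictly worse marginal proceeds than any level reached by some position individually, then on a set of positive measure (or on at least one lot) $\beta(a+s)<\beta(s)$. So at some induction step the inequality is strict. The other steps contribute weak inequalities, so the total inequality is strict.

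The main obstacle is making "walks into worse marginal levels not reached by every position individually" precise enough to give a positive-measure set of strict marginal decrease. I would interpret it as the existence of some $i$ and a lot interval within $[0,q_i]$ whose marginal exceeds the marginal over the corresponding interval of the aggregate walk. I would then order the induction so that this $i$ is the position added last, which makes Step 1 strict for that step.

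A secondary technical point is discontinuities from fixed fees or lot structure. I would handle these by defining $\beta$ per lot on the admissible grid, which requires only the stated monotonicity of marginals and not continuity.
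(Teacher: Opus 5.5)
Your proof is correct and follows essentially the same route as the paper. The paper's appendix writes $B(x)=\int_0^x p(z)\,\mathrm{d}z$ with $p$ non-increasing, bounds $B(x+y)-B(x)=\int_x^{x+y}p(z)\,\mathrm{d}z\le\int_0^y p(z)\,\mathrm{d}z=B(y)$, iterates, and treats a finite step book as the same sum over lots; your strictness step (adding last the position that does not individually reach the worse level) is a more explicit version of its one-line remark on strict inequality.
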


\begin{proof}
A cumulative proceeds curve with non-increasing marginal proceeds is subadditive on non-negative quantities: the marginal units in $B(x+y)$ are no better than the units already counted in $B(x)$ and a separately re-anchored $B(y)$. Repeating the argument gives the result.
\end{proof}

The proposition formalizes a system-level failure mode: every loan can appear individually covered while the combined liquidation is not.

\subsection{Pooled aggregate feasibility}

If the legal and accounting structure permits proceeds within a risk bucket to repay the bucket's loans as a pool, define
\begin{equation}
\widehat{\mathcal F}^{\mathrm{pool}}_{u,\Delta}
=
\left\{
\mathbf x:
\sum_i K_i+
\underline{\mathcal B}_{u,\Delta}\!\left(\sum_i x_i\right)
\geq
\sum_i\overline H_i+M_u^{\mathrm{agg}}
\right\},
\label{eq:pooled-feasible-set}
\end{equation}
where $M_u^{\mathrm{agg}}$ is a shared operational and concentration buffer.

\begin{theorem}[Aggregate pooled debt clearing]
\label{thm:aggregate-pooled}
If $\mathbf x\in\widehat{\mathcal F}^{\mathrm{pool}}_{u,\Delta}$, the realized path lies in the registered aggregate operating set, the shared execution settles by the horizon, and the bucket waterfall may legally pool cash across the included loans, then all bucket debt can be extinguished and at least $M_u^{\mathrm{agg}}$ remains.
\end{theorem}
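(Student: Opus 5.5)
The plan is to treat the risk bucket as a single pooled account and then reuse the argument of \Cref{thm:robust-clearing}. First, define the pooled dedicated cash $K^{\mathrm{pool}}=\sum_i K_i$. Define the pooled debt-service requirement $H^{\mathrm{pool}}_{u,t}(\omega)=\sum_i H^{\Pi}_{i,u,t}(\omega)$, where each $H_{i}$ is given by \eqref{eq:debt-service-requirement}. The pooled debt is $D^{\mathrm{pool}}_t=\sum_i D_{i,t}$. On a path in the registered aggregate operating set, each position's requirement is bounded by its own envelope, so $H^{\mathrm{pool}}_{u,u+\Delta}(\omega)\leq\sum_i\overline H_i$. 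The supremum of a sum is at most the sum of the suprema. I will state explicitly that each $\overline H_i$ is understood as a supremum over the aggregate operating set, or over a superset of it, so that this bound is legitimate. On the proceeds side, the aggregate lower envelope \eqref{eq:aggregate-proceeds} gives $\mathcal B^{\mathrm{set},\Pi}_{u,\Delta}(X,\omega)\geq\underline{\mathcal B}_{u,\Delta}(X)$ for $X=\sum_i x_i$. This bound comes directly from the shared curve, so no top-of-book reuse enters; that is exactly the point of \Cref{prop:double-counting}.

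Second, combine these bounds with membership $\mathbf x\in\widehat{\mathcal F}^{\mathrm{pool}}_{u,\Delta}$ to obtain
\begin{equation*}
K^{\mathrm{pool}}+\mathcal B^{\mathrm{set},\Pi}_{u,\Delta}(X,\omega)\geq\sum_i\overline H_i+M_u^{\mathrm{agg}}\geq H^{\mathrm{pool}}_{u,u+\Delta}(\omega)+M_u^{\mathrm{agg}}.
\end{equation*}
Under the hypothesis that the shared execution settles by the horizon, all of these proceeds are confirmed cash available to the pool by $u+\Delta$.

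Third, use the pooling waterfall to translate this inequality into loan extinction. Cash applied to any included loan reduces $D^{\mathrm{pool}}$ one for one, and the positive-part convention in \eqref{eq:debt-growth} stops each individual balance at zero. Total cash available is at least the total debt service $\sum_i(P_{i}+D_{i})$ plus $M_u^{\mathrm{agg}}$. Hence the waterfall can drive every $D_{i}$ to zero, and the surplus left over is at least $M_u^{\mathrm{agg}}$.

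The main obstacle is the step in which $H^{\mathrm{pool}}$ absorbs payment-timing effects. In a pooled waterfall, the order in which loans are paid changes each loan's accrued interest, so each $H_i$ depends on the allocation rule as well as on the path. I will handle this by requiring that the $\overline H_i$ be computed under the bucket's fixed allocation policy as part of $\Pi$. If that is not available, I will fall back on the conservative no-payment accrued balance noted after \eqref{eq:debt-service-upper-envelope}. That fallback bounds debt service under any payment order, provided early repayment cannot increase charges. Finally, I will note that the standing assumptions of \Cref{thm:robust-clearing} are implicitly carried into the aggregate setting, since they are what make this accounting enforceable: execution authority, collateral non-escape, settlement observability, no new borrowing, and an actual close after $u+\Delta$.
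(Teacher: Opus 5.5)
Your proposal is correct and takes the same route as the paper. The paper's proof combines the aggregate lower envelope $\underline{\mathcal B}_{u,\Delta}$ with the summed debt-service bounds $\sum_i\overline H_i$ to show that confirmed bucket cash covers total bucket debt plus $M_u^{\mathrm{agg}}$, then uses pooling authority to apply that cash to every receivable. Your extra care about which path set each $\overline H_i$ ranges over, and about allocation-dependent interest accrual in the pooled waterfall, makes explicit what the paper leaves implicit rather than changing the argument.
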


\begin{proof}
The aggregate lower envelope and debt-service upper bounds imply that confirmed bucket cash is at least total bucket debt plus the buffer. Pooling authority permits application to every included receivable.
\end{proof}

This theorem gives system solvency, not a unique allocation of residual user exposure.

\subsection{Segregated loans and deterministic execution priority}

If each loan must be repaid only from its own sold tokens and cash, the mechanism needs a deterministic priority order $\pi=(\pi_1,\ldots,\pi_n)$. Let
\begin{equation}
X_{j-1}=\sum_{k<j}x_{\pi_k}.
\end{equation}
The incremental lower proceeds allocated to position $\pi_j$ are defined directly from pathwise increments:
\begin{equation}
\underline B^{\pi_j\mid\pi_{<j}}_{u,\Delta}(x_{\pi_j})
=
\inf_{\omega\in\mathcal U_{u,\mathrm{agg}}^{\mathrm{op}}(\Delta)}
\left[
\mathcal B_{u,\Delta}^{\mathrm{set},\Pi}(X_{j-1}+x_{\pi_j},\omega)
-
\mathcal B_{u,\Delta}^{\mathrm{set},\Pi}(X_{j-1},\omega)
\right].
\label{eq:incremental-proceeds}
\end{equation}
Taking a difference of two independently minimized aggregate envelopes would not in general be a valid lower bound, because the minimizing paths can differ. Equation~\eqref{eq:incremental-proceeds} avoids that error.

\begin{definition}[Priority-feasible allocation]
An allocation $(\pi,\mathbf x)$ is priority-feasible if, for every $j$,
\begin{equation}
K_{\pi_j}
+
\underline B^{\pi_j\mid\pi_{<j}}_{u,\Delta}(x_{\pi_j})
\geq
\overline H_{\pi_j}+m_{\pi_j}.
\label{eq:priority-feasibility}
\end{equation}
\end{definition}

\begin{theorem}[Segregated aggregate clearing without depth reuse]
\label{thm:aggregate-segregated}
Under the operating-set, authority, and settlement assumptions of \Cref{thm:robust-clearing}, a priority-feasible allocation clears every loan in order $\pi$ without assigning the same unit of shared-book proceeds to more than one position.
\end{theorem}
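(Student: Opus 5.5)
The argument partitions the aggregate quantity axis into consecutive, disjoint intervals, one per position in priority order, and then repeats the single-position argument of \Cref{thm:robust-clearing} on each interval.

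\textbf{Construction of the partition.} Fix the realized path $\omega\in\mathcal U_{u,\mathrm{agg}}^{\mathrm{op}}(\Delta)$ and the priority order $\pi$. Set $X_0=0$ and $X_j=X_{j-1}+x_{\pi_j}$. The shared execution sells cumulative aggregate quantity in this order, so the units in $(X_{j-1},X_j]$ are the tokens of position $\pi_j$. The settled proceeds attributed to $\pi_j$ on the realized path are the pathwise increment
\[
\Delta_j(\omega)=\mathcal B_{u,\Delta}^{\mathrm{set},\Pi}(X_j,\omega)-\mathcal B_{u,\Delta}^{\mathrm{set},\Pi}(X_{j-1},\omega).
\]
These increments telescope:
\[
\sum_j\Delta_j(\omega)=\mathcal B_{u,\Delta}^{\mathrm{set},\Pi}(X_n,\omega).
\]
The intervals are disjoint and their union is $(0,X_n]$. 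Hence each unit of realized shared-book proceeds is assigned to exactly one position, which is the ``no depth reuse'' claim. It follows directly from the ordering and from the fact that $\mathcal B^{\mathrm{set},\Pi}$ is a single cumulative pathwise curve. This is exactly where the construction differs from the re-anchored sums of \Cref{prop:double-counting}.

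\textbf{Clearing each loan.} Because $\omega$ lies in the aggregate operating set, the definition \eqref{eq:incremental-proceeds} gives
\[
\Delta_j(\omega)\geq\underline B^{\pi_j\mid\pi_{<j}}_{u,\Delta}(x_{\pi_j}),
\]
since the infimum is taken over the same path for both terms of the difference. Priority feasibility \eqref{eq:priority-feasibility} then yields
\[
K_{\pi_j}+\Delta_j(\omega)\geq\overline H_{\pi_j}+m_{\pi_j}.
\]
Realized debt service for $\pi_j$ is at most $\overline H_{\pi_j}$ on operating paths. With segregated debt-first application, justified by \Cref{ass:lien} and settlement confirmation via \Cref{ass:settlement}, the argument of \Cref{thm:robust-clearing} then applies verbatim to position $\pi_j$. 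Its debt is extinguished by $u+\Delta$, with a buffer of at least $m_{\pi_j}$ if the whole slice settles. Induction over $j$ covers all positions. \Cref{ass:authority} ensures that the controller can actually execute the aggregate order in the stated priority.

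\textbf{Main obstacle.} The delicate step is the attribution itself. It must be checked that the realized fill sequence really is ordered so that position $\pi_j$'s slice occupies $(X_{j-1},X_j]$. It must also be checked that the early stopping of an earlier position, once its debt clears before its full $x_{\pi_j}$ sells, does not invalidate the increments for later positions. The plan is to treat the cap allocation $x_{\pi_j}$ as fixed for certification purposes. Any unsold remainder of an earlier slice is then simply retained as that position's residual spot. Later slices are re-indexed against the realized cumulative quantity, and the bound is argued to persist because the certificate is defined at the full caps. If this fails in general, the fallback is to require that every slice executes at its full cap, with surplus cash retained per position. This matches the buffer clause of \Cref{thm:robust-clearing}.
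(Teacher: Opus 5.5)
Your argument is the paper's proof. The pathwise increments along the frozen priority order telescope to the aggregate settled proceeds, so each marginal unit is assigned once. On the realized path each increment dominates its certified infimum, so priority feasibility covers $\overline H_{\pi_j}+m_{\pi_j}$, and \Cref{thm:robust-clearing} is applied position by position.

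On early stopping, which the paper's proof passes over, your fallback is the version that actually works: keep each slice pinned to its certified interval $(X_{j-1},X_j]$, i.e.\ run earlier slices to their full caps. Your primary plan of re-indexing a later slice to an earlier start would need non-increasing marginal settled proceeds, and the paper deliberately does not assume that.
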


\begin{proof}
For every realized path, pathwise increments telescope to the pathwise aggregate proceeds, so each marginal unit is assigned once. Each certified incremental lower bound is no larger than the realized increment on that path. Hence every position's inequality covers its debt and buffer after accounting for the quantity consumed by earlier positions. Apply \Cref{thm:robust-clearing} sequentially.
\end{proof}

\subsection{Canonical priority policy and fairness}

For priority selection only, define position $i$'s standalone robust coverage ratio
\begin{equation}
\Gamma_i(u)
=
\frac{K_i+\underline B^{\mathrm{standalone}}_{i,u,\Delta}(q_i)}
{\overline H_i+m_i},
\label{eq:robust-coverage-ratio}
\end{equation}
with $\Gamma_i=+\infty$ when the denominator is zero. The standalone numerator is evaluated on the same registered scenario version but is not used as an aggregate solvency value; shared-book feasibility continues to use \eqref{eq:incremental-proceeds}. A lower $\Gamma_i$ identifies a position closer to lender-principal shortfall under its own full liquidation.

The canonical emergency order is \emph{lowest robust coverage first}: ascending $\Gamma_i$, then earliest hard-flat deadline, then deterministic position identifier. The ordering is frozen in the aggregate certificate before orders are submitted. It avoids rewarding API latency or operator discretion and gives scarce best marginal depth to the positions closest to principal loss. No priority rule is welfare-neutral: later positions may retain less residual exposure or receive worse execution.

\begin{proposition}[Priority determinism]
\label{prop:priority-determinism}
For a finite position set with finite $\Gamma_i$, totally ordered deadline timestamps, and unique deterministic identifiers, the canonical lexicographic rule produces a unique execution order. If exact $\Gamma_i$ and deadline ties are grouped for pro-rata processing, the ordered sequence of tie cohorts is unique and any indivisible-lot residual is uniquely assigned by identifier.
\end{proposition}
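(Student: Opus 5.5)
The plan is to view the canonical rule as sorting by a lexicographic key and to show that this key defines a strict total order on the finite position set; a finite strict total order has exactly one increasing enumeration. For each position $i$ I would set $\kappa_i=(\Gamma_i,d_i,\mathrm{id}_i)$, where $d_i$ is the hard-flat deadline and $\mathrm{id}_i$ is the identifier, and compare keys lexicographically. Each component lives in a totally ordered set. For the first component this holds because $\Gamma_i$ is assumed finite, so the $+\infty$ convention does not come into play, and the finite $\Gamma_i$ are real numbers. The second component is totally ordered by hypothesis, and the identifiers are totally ordered by the deterministic identifier order. The lexicographic product of total orders is a total order, which is a routine check I would cite rather than redo.

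The key step is strictness, meaning $\kappa_i\neq\kappa_j$ whenever $i\neq j$. This follows immediately because the identifiers are unique, so two positions can never tie on all three components. Once strictness is in hand, the execution order is the unique bijection $\pi:\{1,\ldots,n\}\to\{1,\ldots,n\}$ with $\kappa_{\pi_1}<\cdots<\kappa_{\pi_n}$. I would argue uniqueness by induction on $n$: $\pi_1$ must be the unique minimum of a finite strictly totally ordered set, and removing it leaves an instance of the same problem with $n-1$ positions. I would also note that the key is frozen in the aggregate certificate before orders are submitted, so no later information can change the order. This is the step where I would stress that determinism relies on every input being $\mathcal F_u$-measurable and recorded.

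For the tie-cohort variant, I would define an equivalence relation by $i\sim j$ if and only if $(\Gamma_i,d_i)=(\Gamma_j,d_j)$. The cohorts are the equivalence classes. The map from a cohort to its shared pair is injective, so the same lexicographic argument applied to these pairs gives a unique ordering of the cohorts. Inside a cohort, pro-rata processing fixes the divisible shares, which are a deterministic function of the cohort's quantities. Any indivisible remainder lot is then handed out by increasing identifier, and this is unique by the previous paragraph applied to identifiers alone.

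The main obstacle is not the order theory but checking that the hypotheses rule out hidden nondeterminism. Specifically, I must confirm that "totally ordered deadline timestamps" and "unique identifiers" exclude equal keys, and that pro-rata rounding is itself a function rather than a choice. I would handle rounding by fixing a floor-then-distribute-by-identifier convention, and I would state this explicitly as part of the rule.
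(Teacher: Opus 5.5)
Your proposal is correct and takes essentially the same route as the paper. The key $(\Gamma_i,\text{deadline}_i,\text{identifier}_i)$ is a strict total order once identifiers are unique; grouping on the first two components gives a uniquely ordered partition into tie cohorts; and identifier order settles any indivisible residual within a cohort. The paper states this tersely, and you add the uniqueness-of-enumeration induction and an explicit floor-then-distribute rounding convention.
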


\begin{proof}
Lexicographic order on the tuple $(\Gamma_i,\text{deadline}_i,\text{identifier}_i)$ is a total order once identifiers are unique. Grouping the first two equal components produces a unique ordered partition, and the identifier order resolves any discrete residual within a cohort.
\end{proof}

In the normal operating region, admission control requires aggregate feasibility before leverage is issued. If the registered shared-book inequalities hold, every admitted loan clears and priority affects residual exposure rather than solvency. If the realized path leaves $\Uop$, the same frozen order governs recovery. Exact ties may be processed as one cohort with sale targets proportional to debt-service requirement; indivisible-lot residuals are assigned by position identifier. Reserve capital, if needed after aggregate execution, follows the pro-rata principal-shortfall rule in \Cref{sec:impossibility}.

\subsection{Aggregate open-interest capacity}

For a risk bucket with position vector $\mathbf q$, define the full-liquidation robust coverage margin
\begin{equation}
\Psi_{\mathrm{agg}}(\mathbf q)
=
\sum_i K_i+
\underline{\mathcal B}_{u,\Delta}\!\left(\sum_i q_i\right)
-
\sum_i\overline H_i
-
M_u^{\mathrm{agg}}.
\label{eq:aggregate-margin}
\end{equation}
New leverage is admitted only if the post-trade vector keeps $\Psi_{\mathrm{agg}}\geq0$ under the registered close scenario and the selected priority rule remains feasible where loan segregation requires it.

A scalar OI cap is a lossy summary:
\begin{equation}
\mathrm{OI}_{\max}
=
\sup\left\{
\mathrm{OI}:\Psi_{\mathrm{agg}}(\mathbf q(\mathrm{OI}))\geq0
\right\}.
\label{eq:aggregate-oi-cap}
\end{equation}
The cap must be recomputed when depth, fees, settlement horizon, or concentration changes.

\subsection{Cross-market clusters}

For several tokens or correlated markets, replace the scalar aggregate curve with
\begin{equation}
\underline{\mathcal B}_{u,\Delta}(\mathbf x)
\end{equation}
that captures joint execution under a common news shock, venue suspension, or shared market maker. Summing independent single-market expected losses is not sufficient. The scenario set must include simultaneous book withdrawal and common settlement delays.

\begin{figure}[H]
\centering
\includegraphics[width=0.88\textwidth]{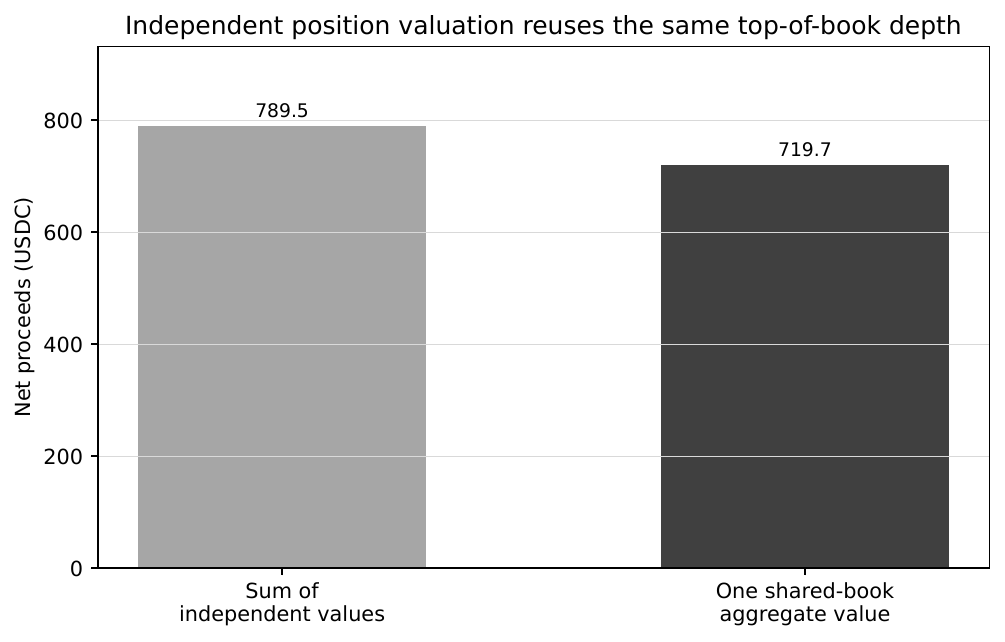}
\caption{Shared-book liquidity cannot be reused. The sum of three independently re-anchored position values exceeds the proceeds from selling the combined quantity through one finite bid ladder. Aggregate capacity must be computed on the joint flow.}
\label{fig:aggregate-liquidity}
\end{figure}

\subsection{Implication for reserves}

Reserve calculations in \Cref{sec:impossibility} use aggregate cluster proceeds. Position-by-position shortfalls may be reported for attribution, but the total reserve requirement must be based on one shared execution path per scenario, not on several copies of the same book.

\section{Resolution, Dispute, and Redemption}
\label{sec:finality}

Debt-free finality does not eliminate oracle or redemption states. It prevents those states from carrying the ordinary loan.

\subsection{Resolution state space}

A venue-specific lifecycle may contain more states, but the canonical abstraction is
\begin{equation}
\state{OPEN}
\rightarrow
\state{CLOSED}
\rightarrow
\state{PENDING\_RESOLUTION}
\rightarrow
\begin{cases}
\state{FINAL},\\
\state{DELAYED},\\
\state{DISPUTED},\\
\state{VOIDED},\\
\state{CANCELLED}.
\end{cases}
\label{eq:resolution-state-space}
\end{equation}
A delayed or disputed market may later return to pending resolution, become final with an override, or be voided.

\begin{definition}[Final payout event]
The payout vector is final at $\tau_f$ when the outcome-token system accepts a vector $\boldsymbol\pi$ for redemption and ordinary challenge or review no longer changes that vector.
\end{definition}

A venue API status is not by itself finality if the on-chain or contractual payout system has not accepted the vector.

\subsection{A proposal is not final accounting}

Let $\widehat{\boldsymbol\pi}_{\tau_p}$ be a provisional proposal. It may be confirmed, overridden, or voided.

\begin{proposition}[No provisional settlement]
\label{prop:no-provisional}
Before $\tau_f$, the mechanism must not use $\widehat{\boldsymbol\pi}_{\tau_p}$ to post final user wealth, distribute final redemption cash, or retrospectively reopen an extinguished loan.
\end{proposition}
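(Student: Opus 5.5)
The statement is normative ("must not"), so I will prove it as a consistency claim. I will show that each of the three prohibited uses of $\widehat{\boldsymbol\pi}_{\tau_p}$ would violate an invariant already established, or a definition already fixed. The argument goes by contradiction, one case per prohibited action. The common fact underlying all three cases comes from the definition of the final payout event. Before $\tau_f$, the vector that the outcome-token system will accept for redemption is not yet determined. It may still be confirmed, overridden (in which case it differs from $\widehat{\boldsymbol\pi}_{\tau_p}$), or voided (in which case it may be, e.g., $(\tfrac12,\tfrac12)$). Hence there exist admissible continuation paths on which the final $\boldsymbol\pi\neq\widehat{\boldsymbol\pi}_{\tau_p}$.

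\emph{Posting final user wealth.} By \eqref{eq:claim-value-final}, final claim value is $Z_\sigma+q_{\mathrm{spot}}\pi_h$. The ledger entry would be computed with $\widehat\pi_h$. On an override path its value would be wrong by $q_{\mathrm{spot}}(\widehat\pi_h-\pi_h)\neq0$ whenever $q_{\mathrm{spot}}>0$. Correcting it would require a later compensating entry. Under \Cref{ass:ledger}, entries are immutable and balanced, so the error cannot be silently fixed. The identity itself would survive by \Cref{prop:ledger-invariant}, but the user-claim account would have represented wealth that the payout system never recognizes. This case would be ruled out by policy (P7) together with the definition of finality.

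\emph{Distributing redemption cash.} Redemption is paid from outcome-token collateral (\Cref{ass:collateralization}), and that collateral is available only against an accepted final vector. A distribution computed from $\widehat{\boldsymbol\pi}_{\tau_p}$ would therefore be funded by some other account. That account would be either the operator's or the lender's, which creates exactly the separate operator guarantee excluded in \Cref{thm:debt-free-finality}.

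\emph{Reopening an extinguished loan.} This would place positive debt in a state of $\mathcal G$ (e.g.\ \state{PENDING\_FINALITY} or \state{DISPUTED}). That contradicts \Cref{thm:state-invariant}, whose no-borrow transition rule forbids creating debt inside $\mathcal G$. It also violates \Cref{ass:no-rehyp}. Moreover, by \Cref{thm:debt-free-finality,thm:dispute-invariance}, loan credit is invariant to $\boldsymbol\pi$ and to $\Delta_f$, so a proposal carries no information that could justify any change to the receivable.

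The main obstacle is the first case. It would fail if the only problem with posting on a provisional vector were that the number might be wrong. I expect to handle this by exhibiting an explicit override or void path and arguing that immutable posting plus state-transition atomicity makes a correct reversal impossible without the compensating-entry machinery. I will treat this as the formal content of "must not", while noting that the proposition is really a specification constraint justified by these invariants.
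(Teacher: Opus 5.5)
Your proposal is correct and follows essentially the same route as the paper. The paper's proof consists of the three observations you expand: the proposal is not the authoritative vector and may be overridden or voided; acting on it can create transfers inconsistent with the collateral actually redeemed; and the loan already settled at $\sigma$ has no basis for reopening. The ``main obstacle'' you foresee in the first case is not one, because the paper settles that case from the definition of finality alone. You need neither the appeal to (P7), which merely restates the proposition, nor the immutability argument, which adds little since the ledger explicitly permits correction by compensating entries.
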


\begin{proof}
By definition, the proposal is not the authoritative payout vector and may change. Treating it as final can create transfers inconsistent with the actual collateral redemption. The extinguished loan is already settled at $\sigma$ and has no basis for reopening.
\end{proof}

The proposal may be displayed as provisional information and used by a separate oracle-monitoring process.

\subsection{Disputed and delayed states}

When the market is disputed or delayed:
\begin{enumerate}[label=(\roman*)]
  \item ordinary financed positions must already have $D=0$ or be in an explicit exception state;
  \item no final PnL is posted;
  \item no new interest accrues on the extinguished loan;
  \item historical last trade is not presented as redeemable value;
  \item the residual token remains a claim against the outcome collateral system; and
  \item the user interface distinguishes claim value, provisional proposal, and available cash.
\end{enumerate}

A long dispute changes user opportunity cost but, under \Cref{thm:dispute-invariance}, does not change credit-principal loss through the extinguished loan.

\subsection{Void and cancellation}

A void is modeled through the actual payout vector, not through cost-basis reimbursement. For held token $h$, final claim value is always
\begin{equation}
q_{\mathrm{spot}}\pi_h.
\end{equation}
If a venue uses equal payouts, the formula produces that result. If it uses another vector, the mechanism follows the vector. A cancellation that returns collateral through a separate process is represented by the economically equivalent final vector and redemption event.

\subsection{Claim finality and cash availability}

At $\tau_f$, the user has a final claim value
\begin{equation}
V_{\tau_f}^{\mathrm{claim}}=Z_{\sigma}+q_{\mathrm{spot}}\pi_h.
\end{equation}
Cash is available only after confirmed redemption at $\tau_r$:
\begin{equation}
W_{\tau_r}^{\mathrm{cash}}
=Z_{\sigma}+\operatorname{Redeem}_{\tau_r}(q_{\mathrm{spot}},\boldsymbol\pi)-c_{\mathrm{red}},
\label{eq:cash-after-redemption}
\end{equation}
where $c_{\mathrm{red}}$ is the confirmed redemption cost if any.

This distinction answers a common ambiguity. Asynchronous redemption does not invalidate debt-free finality; it delays conversion of the user's final claim into spendable cash. If the operator guarantees redemption timing or amount, that guarantee is a separate liability.

\subsection{Oracle watcher and challenge policy}

The reference implementation treats oracle monitoring as a separate service. It:
\begin{enumerate}[label=(O\arabic*)]
  \item records the resolution text and authoritative sources before listing;
  \item observes proposals and payout vectors;
  \item compares them with the registered criteria;
  \item opens an incident when the proposal appears inconsistent;
  \item prepares evidence and challenge instructions; and
  \item records the final on-chain or contractual outcome.
\end{enumerate}

The base mechanism does not allow an unreviewed model or end user to spend shared challenge capital automatically. Challenge bonds are segregated from the lender reserve because they support oracle governance rather than loan losses.

\subsection{Dispute-safe user disclosure}

Before entry, the user should see:
\begin{itemize}
  \item scheduled reduce-only and hard-flat times;
  \item that actual close can occur earlier;
  \item whether residual spot is the default or full close is selected;
  \item that dispute can lock the residual claim for an unknown period;
  \item that provisional results are not final payouts;
  \item that void payment follows the venue payout vector, not purchase price; and
  \item that redemption may remain pending after finality.
\end{itemize}

These disclosures are part of mechanism correctness because they define the economic object the user is buying. They are not a substitute for technical enforcement.

\section{Failure Boundaries and Reserve Sufficiency}
\label{sec:impossibility}

The operating set supports a conditional guarantee. This section states what happens outside it and corrects reserve calculations for shared liquidity.

\subsection{Operating set versus failure set}

At decision time $u$, $\Uop_u(\Delta)$ contains paths for which the controller claims the robust certificate. A typical operating set may bound:
\begin{itemize}
  \item cumulative bid cancellation and price deterioration;
  \item selected $T_{k,h,s}$ book transformations;
  \item per-attempt partial fill;
  \item fees and gas;
  \item retry count;
  \item settlement latency up to $\Delta$; and
  \item ordinary chain confirmation behavior.
\end{itemize}

The registered stress set $\Ustress_u(\Delta)$ contains $\Uop_u(\Delta)$ and a broader finite catalogue of adverse transformations, correlated hard-flat flow, control degradation, and settlement exceptions. Inclusion in $\Ustress$ means that a path is tested and reported; it does not imply that the robust certificate applies. $\Ufail_u(\Delta)=\Omega\setminus\Uop_u(\Delta)$ contains at least:
\begin{itemize}
  \item $T_c^{\mathrm{act}}\leq u+\Delta$;
  \item no executable bid quantity;
  \item loss or revocation of liquidation authority;
  \item borrower removal of collateral;
  \item persistent settlement failure;
  \item chain or contract failure beyond the confirmation assumption; and
  \item aggregate liquidation flow exceeding the registered shared-book capacity.
\end{itemize}

A path can move from operating to failure status after the decision. The controller must then stop presenting the robust certificate and enter recovery.

\subsection{No universal backend-only guarantee}

\begin{theorem}[No universal backend-only zero-shortfall guarantee]
\label{thm:no-universal}
Consider a mechanism with leverage $L>1$ that relies only on user collateral, the financed outcome tokens, and backend instructions to an external venue. Suppose no independent collateral, enforceable liquidity put, or third-party guarantee exists, and the mechanism admits a post-entry state with positive uncovered debt $D_t>K_t$. If admissible paths include any of:
\begin{enumerate}[label=(\alph*)]
  \item market closure before a debt-clearing sale settles;
  \item zero settled proceeds for every sale quantity while debt is positive;
  \item loss of enforceable authority to sell the pledged token; or
  \item removal of pledged collateral before debt application,
\end{enumerate}
then zero lender shortfall cannot be guaranteed over all admissible payout vectors and paths.
\end{theorem}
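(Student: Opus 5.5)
The argument is by construction of a single adversarial path. To refute a universal guarantee, it is enough to exhibit one admissible path and one admissible payout vector that produce strictly positive lender shortfall. I will first fix the post-entry state the hypothesis supplies: debt $D_t>0$ with $D_t>K_t$, dedicated cash $K_t$, and financed tokens $q_t$. By hypothesis, no independent collateral, liquidity put, or guarantee exists. The only assets that can ever reach the loan are therefore $K_t$, settled proceeds from selling the financed tokens, and the redemption value $q_t\pi_h$ of tokens still held at finality.

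Next I will treat each failure mode (a)--(d) separately and show that each one prevents any settled sale proceeds from reaching the loan.
\begin{itemize}
  \item Under (a), risk-reducing orders stop being accepted before a debt-clearing sale settles. Since only settled proceeds reduce debt (\Cref{ass:settlement}), the proceeds applied are bounded by whatever partial settlement occurred. I will take the path on which none has settled.
  \item Under (b), proceeds are zero by assumption.
  \item Under (c), the controller cannot submit the sale.
  \item Under (d), the tokens have left the execution policy before debt application, so neither their sale proceeds nor their redemption value is available to the lender.
\end{itemize}
In all four cases, the debt entering finality is at least $D_t-K_t>0$ after the dedicated cash has been applied debt-first. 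Accrual under \eqref{eq:debt-growth} with $r\geq0$ can only increase it. In case (d), the token claim is gone entirely.

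I will then pair this path with an adverse payout vector. In the binary setting, the canonical states $(1,0)$ and $(0,1)$ are both admissible, so the held token $h$ can pay $\pi_h=0$. Invoking \Cref{prop:positive-debt-impossibility} with surviving debt $D\geq D_t$ and cash $K\leq K_t$ (nothing has been added) gives shortfall $\pospart{D-K}>0$. Hence there is an admissible (path, payout) pair with strictly positive lender loss, which proves the theorem. I will remark that $L>1$ is exactly what makes $D_0=(L-1)C>0$ possible, and that the hypothesis of a reachable state with $D_t>K_t$ is what rules out the trivial case in which cash alone always covers the debt.

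The main obstacle is being careful about what ``admissible'' means and about partial execution before the failure. If, say, the market closes after some proceeds have already settled, those proceeds might clear the debt. I will handle this by letting the adversary choose the failure to occur at the very start of the hard-flat window, on the registered failure set $\Ufail$. The theorem quantifies over all admissible paths, and its hypothesis places such a path in the family, so this choice is permitted. I will also note that the residual-token redemption cannot rescue the lender because we pick $\pi_h=0$. Case (d) does not need the adverse payout at all, since the tokens have been removed regardless of the payout.
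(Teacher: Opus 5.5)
Your proposal is correct and follows essentially the same route as the paper: take a failure path from the uncovered-debt state on which no settled sale cash reaches the loan, pair it with $\pi_h=0$, and invoke \Cref{prop:positive-debt-impossibility}. Placing the failure at the start of the window plays the role of the paper's appendix construction, in which two paths are indistinguishable to the algorithm until the branching event; your case-by-case treatment, including the observation that case (d) needs no adverse payout, is a harmless refinement of the paper's one-line argument.
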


\begin{proof}
Because the admitted state has $D_t>K_t$, the loan is not already cash-covered. In each listed path, the controller can be prevented from converting the financed token into sufficient dedicated cash. Since the held token may pay zero, \Cref{prop:positive-debt-impossibility} produces a positive shortfall on such a path.
\end{proof}

\begin{corollary}[Additional structure required]
A universal pathwise guarantee with $L>1$ requires at least one of: full independent collateral, an enforceable liquidity or settlement guarantee, a third-party credit guarantee, a venue-recognized lien and liquidation right plus an admissible-path restriction, or exclusion of the failure paths by contract.
\end{corollary}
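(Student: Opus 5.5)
The statement is the corollary following \Cref{thm:no-universal}, so I would argue by contraposition against that theorem. Suppose a mechanism with $L>1$ delivers a universal pathwise zero-shortfall guarantee over all admissible payout vectors and paths, yet has none of the five listed structures. Then it is a backend-only mechanism in the sense of \Cref{thm:no-universal}: it has no full independent collateral, no enforceable liquidity or settlement guarantee, and no third-party credit guarantee. It also has no contractual exclusion of the failure paths, so the admissible path family still contains at least one of (a)--(d). Finally, it lacks the combination of a venue lien with a path restriction. Without the lien and liquidation right, path (c) or (d) stays admissible. Without the path restriction, path (a) or (b) stays admissible.

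The first step is to confirm the remaining hypothesis of \Cref{thm:no-universal}: some post-entry state has $D_t>K_t$. I would use the entry budget \eqref{eq:position-budget-debt}. With $L>1$ we have $D_0=(L-1)C>0$. Residual entry cash satisfies $K_0=LC-A^{\mathrm{set}}_{t_0}(q_0)$, and this is below $D_0$ whenever the acquisition spends more than $C$, which is the point of leverage. If instead $K_0\geq D_0$ held at every state, the loan would be fully cash-covered. That is exactly the first listed alternative, full independent collateral, and the leverage would be nominal.

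The second step applies \Cref{thm:no-universal}. On an admissible failure path the token cannot be converted to cash, and on the payout $\pi_h=0$ \Cref{prop:positive-debt-impossibility} gives shortfall $[D-K]^+>0$. This contradicts the assumed universal guarantee.

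The main obstacle is making the disjunction exhaustive without circularity. The corollary's alternatives have to map exactly onto negations of the theorem's hypotheses. Independent collateral and guarantees negate the "backend-only" premise. The lien-plus-restriction item and contractual exclusion together negate "admissible paths include (a)--(d)". I would state this correspondence explicitly, treating a lien alone as insufficient because (a) and (b) survive it. The "nominal leverage" degenerate case is already handled above by folding it into full collateral.
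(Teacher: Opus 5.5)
Your proposal is correct and matches the paper. The paper gives no separate proof and treats this corollary as the contrapositive of \Cref{thm:no-universal}: the collateral and guarantee alternatives negate its no-external-support hypothesis, and the lien-plus-restriction and contractual-exclusion alternatives remove the failure paths (a)--(d), exactly as you map them. Your explicit check of the theorem's $D_t>K_t$ hypothesis, which the corollary silently drops, is a worthwhile addition. However, absorb the always-cash-covered case by reading ``full independent collateral'' as collateral covering the uncovered amount $\pospart{D-K}$, as the paper's earlier balance-sheet corollary does, rather than equating dedicated cash $K$ with independent collateral, which the paper keeps distinct.
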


The theorem does not say the product cannot operate. It says the product must distinguish a certified operating region from externally capitalized tail states.

\subsection{Unexpected close shortfall}

If actual close occurs at $T_c^{\mathrm{act}}<\sigma$ and no further sale is possible, recovery at finality is at most dedicated cash plus token payout. Shortfall is
\begin{equation}
\ell^{\mathrm{close}}
=
\pospart{D_{T_c^{\mathrm{act}}}-K_{T_c^{\mathrm{act}}}-q\pi_h-G},
\label{eq:unexpected-close-shortfall}
\end{equation}
where $G$ is independent guaranteed recovery. In the adverse payout state with $G=0$, the uncovered debt is at risk.

\subsection{Scenario shortfall with shared execution}

Let $\mathcal C_k$ be a risk cluster sharing a token, venue, close window, oracle, chain, or liquidity provider. For scenario $s$, let $\mathcal B_k^{(s)}(\mathbf q_k)$ be aggregate recoverable settled proceeds for the whole cluster after the scenario's execution and settlement assumptions. Define
\begin{equation}
\ell_k(s)
=
\pospart{
\sum_{i\in\mathcal C_k}D_i(s)
-
\sum_{i\in\mathcal C_k}K_i(s)
-
\mathcal B_k^{(s)}(\mathbf q_k)
-
G_k(s)
}.
\label{eq:cluster-shortfall}
\end{equation}

This replaces the sum of independently re-anchored position proceeds. Position-level loss attribution may be computed after the aggregate execution using the registered priority rule.

\subsection{Finite-scenario reserve sufficiency}

Let $\Sset$ be a finite registered set of scenarios and let $\mathcal R$ be a segregated reserve legally available for the modeled credit losses.

\begin{theorem}[Aggregate scenario-set reserve sufficiency]
\label{thm:reserve-sufficiency}
The reserve covers every modeled aggregate shortfall in $\Sset$ if
\begin{equation}
\mathcal R
\geq
\sup_{s\in\Sset}
\sum_k\ell_k(s).
\label{eq:reserve-bound}
\end{equation}
\end{theorem}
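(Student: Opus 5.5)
The argument is a direct supremum bound, carried out scenario by scenario. Fix any registered scenario $s\in\Sset$. By the definition in \eqref{eq:cluster-shortfall}, the modeled aggregate credit loss in scenario $s$ is the sum over clusters $\sum_k\ell_k(s)$. Each $\ell_k(s)$ is already computed from one shared execution path for the whole cluster, with recoverable proceeds $\mathcal B_k^{(s)}(\mathbf q_k)$, independent recovery $G_k(s)$ and dedicated cash subtracted first. So the total is the amount of principal left uncovered after every non-reserve source has been applied, and it counts no top-of-book depth twice, in line with \Cref{prop:double-counting}. By the definition of the supremum, $\sum_k\ell_k(s)\leq\sup_{s'\in\Sset}\sum_k\ell_k(s')\leq\mathcal R$. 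Hence in scenario $s$ the reserve, applied to the cluster shortfalls, can pay each $\ell_k(s)$ in full, and total drawings do not exceed $\mathcal R$. Because $s$ was arbitrary, this holds for every scenario in $\Sset$.

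Two bookkeeping points need stating explicitly. First, I would note that $\Sset$ is finite, so the supremum is a maximum and is attained. Each $\ell_k(s)$ is a positive part of finite quantities, and there are finitely many clusters, so the right-hand side of \eqref{eq:reserve-bound} is a finite non-negative number. Second, I would check that the reserve is used once per scenario, not once per cluster per scenario. Scenarios are alternative paths, not cumulative ones, so the requirement is a supremum over scenarios of the within-scenario sum. Summing over clusters inside a scenario is needed because clusters can fail together in one scenario, for example through a common oracle or venue. That is why the bound carries $\sum_k$ inside the $\sup$ rather than a sum of per-cluster suprema, which would be larger and still sufficient. Segregation and legal availability of $\mathcal R$ mean that no other claimant draws on the reserve first. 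Position-level allocation of the covered amount, for instance through the frozen priority order of \Cref{prop:priority-determinism} or pro rata, redistributes a fixed total and does not change the aggregate.

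The main obstacle is scope, not inequality. The theorem covers only \emph{modeled} shortfalls. I would say plainly that the conclusion is conditional on each $\ell_k(s)$ correctly reflecting the scenario's execution and settlement path, and that it says nothing about paths in $\Ufail\setminus\Sset$. This matches the earlier separation of registered stress sets from the operating-set certificate, and the statement contains no claim beyond it.
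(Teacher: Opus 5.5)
Your proof is correct and matches the paper's: fix a scenario $s\in\Sset$, bound its total cluster shortfall $\sum_k\ell_k(s)$ by the supremum and hence by $\mathcal R$, then use segregation and legal availability of the reserve to pay that amount. Your additional remarks (finiteness of $\Sset$, one reserve draw per scenario rather than per cluster, and the scope caveat for unmodeled scenarios) agree with the paper's surrounding discussion, but the inequality does not depend on them.
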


\begin{proof}
For every registered scenario, the right-hand side dominates the total cluster shortfall. Availability of the segregated reserve permits payment of that amount.
\end{proof}

The theorem says nothing about scenarios omitted from $\Sset$, wrong execution models, unavailable reserve assets, or legal subordination.

\subsection{Minimum scenario catalogue}

The registered catalogue should include:
\begin{enumerate}[label=(S\arabic*)]
  \item ordinary hard-flat;
  \item spread widening and depth cancellation;
  \item adverse price movement before first fill;
  \item partial fill followed by worse-book retry;
  \item settlement delay with maximum debt accrual;
  \item matched trade followed by settlement failure;
  \item actual close before scheduled close;
  \item complete loss of executable liquidity;
  \item signer or authorization failure;
  \item borrower collateral-withdrawal attempt;
  \item simultaneous hard-flat across the largest shared book; and
  \item correlated venue, oracle, RPC, or chain impairment.
\end{enumerate}

The catalogue must be frozen before position approval and versioned. Adding a scenario after an incident is learning, not retrospective coverage.

\subsection{Reserve waterfall and allocation}

A transparent waterfall is
\begin{equation}
\begin{aligned}
\text{dedicated cash and collateral}
&\rightarrow \text{settled sale proceeds}
\rightarrow \text{position buffer}\\
&\rightarrow \text{risk-bucket reserve}
\rightarrow \text{global reserve}
\rightarrow \text{operator equity}.
\end{aligned}
\label{eq:waterfall}
\end{equation}
Socialized loss to unrelated profitable users is excluded from the base mechanism. If introduced later, it becomes an explicit contingent liability.

After shared-book execution and position attribution, let $\ell_i\geq0$ be the residual lender-principal shortfall of position $i$ at one reserve stage and let $R\geq0$ be the capital available at that stage. The canonical allocation is pro rata to principal shortfall:
\begin{equation}
r_i(R,\boldsymbol\ell)
=
\begin{cases}
\ell_i, & \sum_j\ell_j\leq R,\\[3pt]
R\,\dfrac{\ell_i}{\sum_j\ell_j}, & \sum_j\ell_j>R,
\end{cases}
\label{eq:pro-rata-reserve}
\end{equation}
with $r_i=0$ when $\sum_j\ell_j=0$. The same rule is applied separately at the risk-bucket and global-reserve stages; later stages receive only the residual shortfall.

\begin{proposition}[Reserve-allocation consistency]
\label{prop:reserve-allocation}
For a finite non-negative shortfall vector and reserve $R\geq0$, allocation \eqref{eq:pro-rata-reserve} is unique, satisfies $0\leq r_i\leq\ell_i$ for every $i$, and obeys
\begin{equation}
\sum_i r_i=\min\left\{R,\sum_i\ell_i\right\}.
\end{equation}
If the reserve is insufficient, every positive shortfall receives the same coverage fraction $R/\sum_j\ell_j$.
\end{proposition}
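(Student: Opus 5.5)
The plan is a direct case analysis on the two branches of \eqref{eq:pro-rata-reserve}, plus the degenerate case. Write $S=\sum_j\ell_j$. Since the vector is finite and non-negative, $S\in[0,\infty)$ is well defined. I would split into three exhaustive, mutually exclusive cases: (a) $S=0$; (b) $0<S\leq R$; (c) $S>R$. Case (c) forces $S>0$, so the quotient $\ell_i/S$ is defined there. Uniqueness then needs no separate argument. The rule is a function of $(R,\boldsymbol\ell)$ given by an explicit formula on each case, and the cases partition the domain. The one point to check is that the two prescriptions agree on their overlap, namely when $S=0\leq R$, where both the first branch and the convention give $r_i=0=\ell_i$. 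So the allocation is single-valued.

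Next I would verify the bounds $0\leq r_i\leq\ell_i$ in each case. In (a), every $\ell_i=0$ because the entries are non-negative with zero sum, so $r_i=0=\ell_i$. In (b), $r_i=\ell_i\geq0$ holds trivially. In (c), $r_i=(R/S)\ell_i$ is non-negative because $R\geq0$ and $\ell_i\geq0$. It is at most $\ell_i$ because $R/S<1$.

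For the sum identity, I would compute $\sum_i r_i$ in each case. In (a) the sum is $0=\min\{R,0\}$. In (b) the sum is $S=\min\{R,S\}$. In (c) the sum is $(R/S)\cdot S=R=\min\{R,S\}$. The final claim holds by the formula in case (c): every $i$ with $\ell_i>0$ has coverage fraction $r_i/\ell_i=R/S$, which is independent of $i$.

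I do not expect a real obstacle; the statement is elementary. The only step needing care is the degenerate case $S=0$. There the pro-rata expression is undefined, and uniqueness depends on the stated convention $r_i=0$ agreeing with the first branch. I would handle that case first so that the remaining divisions are legitimate.
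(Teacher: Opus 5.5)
Your proposal is correct and follows the paper's proof. The paper also treats the fully covered branch as immediate, then in the insufficient branch obtains non-negativity from $R,\ell_i\geq0$, the bound $r_i\leq\ell_i$ from $R<\sum_j\ell_j$, and the sum identity by summing the proportional terms. Your explicit check that the $\sum_j\ell_j=0$ convention agrees with the first branch (so the rule is single-valued) is a small piece of care the paper leaves implicit.
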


\begin{proof}
The fully covered branch is immediate. In the insufficient branch, non-negativity follows from $R,\ell_i\geq0$; because $R<\sum_j\ell_j$, each allocation is below $\ell_i$; and summing the proportional terms gives $R$. The formula fixes every component uniquely.
\end{proof}

\subsection{Reserve-supported open interest}

An approximate scalar rule for bucket OI is
\begin{equation}
\mathrm{OI}_{\max}^{\mathrm{reserve}}
=
\frac{\mathcal R}{c_R\lambda},
\label{eq:reserve-oi}
\end{equation}
where $\lambda$ is a conservative cluster stress-loss rate and $c_R\geq1$ a coverage multiple. The exact aggregate book constraint in \eqref{eq:aggregate-oi-cap} remains primary; reserve capacity cannot justify admitting more ordinary operating-set exposure than can be deleveraged.

\subsection{Capital cost and pricing}

Expected contribution can be written as
\begin{equation}
\mathrm{CM}
=
F_{\mathrm{protocol}}
+(r_{\mathrm{borrow}}-r_{\mathrm{capital}})D\Delta t
+R_{\mathrm{partner}}
-F_{\mathrm{venue}}
-\E[\ell]
-C_{\mathrm{ops}}
-C_{\mathrm{reserve}}.
\label{eq:contribution-margin}
\end{equation}
A positive borrow spread does not imply a viable product if venue fees, reserve capital, and correlated shortfall dominate.

\subsection{Why a reserve is not a proof}

Three quantities must be reported separately:
\begin{enumerate}[label=(\roman*)]
  \item robust debt-clearing feasibility before reserve;
  \item gross shortfall after dedicated position assets and guarantees; and
  \item residual loss after the reserve waterfall.
\end{enumerate}
A single ``covered'' label would hide whether safety came from the mechanism or from external capital after mechanism failure.

\begin{figure}[H]
\centering
\includegraphics[width=0.88\textwidth]{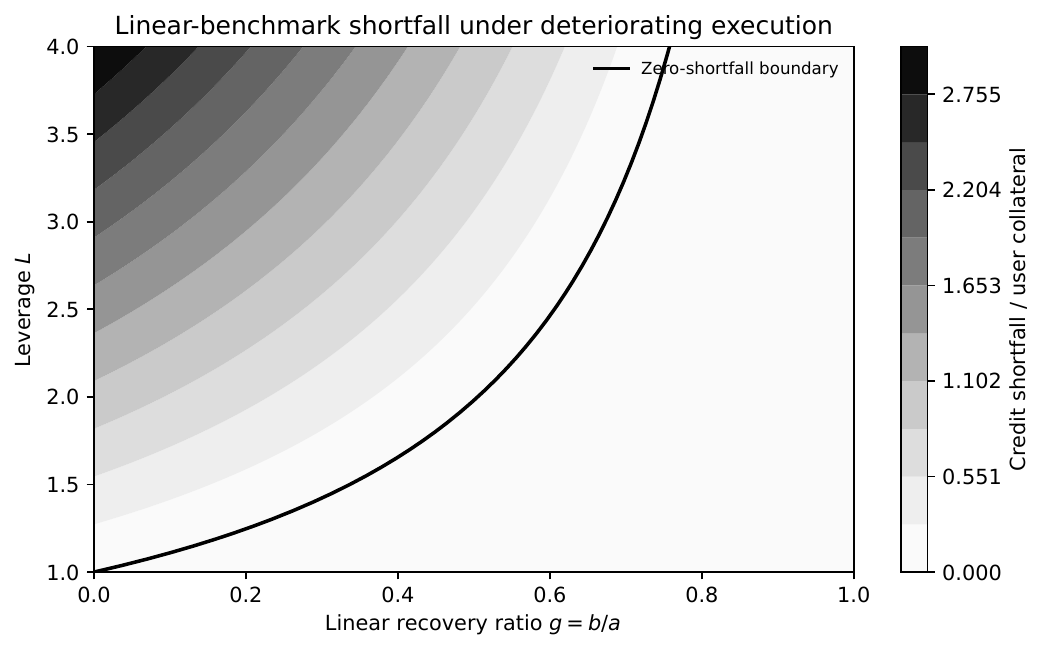}
\caption{Linear-benchmark hard-flat shortfall as leverage rises and executable recovery deteriorates. The figure is an intuition surface; actual reserve sizing uses nonlinear shared-book scenario proceeds.}
\label{fig:shortfall-surface}
\end{figure}

\section{Deterministic Verification Without External Data}
\label{sec:deterministic}

This revision does not estimate execution distributions. It verifies the formal mechanism on a registered deterministic scenario set that exposes the distinction between quoted, matched, settled, and aggregate proceeds.

\subsection{Verification objectives}

The accompanying script checks:
\begin{enumerate}[label=(V\arabic*)]
  \item exact finite-book acquisition and sale arithmetic;
  \item construction of the lower settled-proceeds envelope;
  \item planned robust sale before execution;
  \item realized audit minimum after settlement;
  \item debt-service upper bounds under settlement delay;
  \item conservative overshoot on favorable paths;
  \item debt-free payout and dispute invariance;
  \item detection of thin-depth and zero-liquidity infeasibility;
  \item exact leverage-tier feasibility versus the linear benchmark;
  \item nested operating and stress-set classification;
  \item adversarial $T_{k,h,s}$ book transformations;
  \item aggregate shared-book execution without top-of-book reuse;
  \item lowest-$\Gamma_i$ priority ordering; and
  \item pro-rata reserve allocation in shortfall scenarios.
\end{enumerate}

The suite is algebraic verification, not empirical evidence that any venue will remain inside the registered operating set.

\subsection{Registered position}

The base position uses
\begin{align}
C&=500, & L&=2, & N&=1000,\\
a_{\mathrm{eff}}&=0.42, & q_0&=2380.95238095,\\
r&=15\%\text{ per year}, & D_u&=500e^{0.15\cdot7/365}.
\end{align}
The ex-ante operating set includes three settled-proceeds paths: favorable, base, and adverse-but-operating. The broader stress registry includes those paths, deterministic $T_{k,h,s}$ transformations, a thin-depth path, zero liquidity, and control-capability failures. Only the three operating paths support the robust certificate.

Settlement horizons of 5, 30, and 120 seconds are represented through debt-service upper envelopes and scenario-specific charges. The numbers are illustrative and are not attributed to PredictStreet or any other venue.

\subsection{Robust sale result}

Let $\underline B$ be the pointwise minimum of the three operating settled-proceeds curves. The verifier computes
\begin{equation}
\widehat x_u
=
\min\left\{x:
\underline B(x)\geq\overline H_{u,\Delta}+m_u
\right\}.
\end{equation}
For every operating path, sale of at most $\widehat x_u$ clears the maximum registered debt and buffer. On favorable paths the audit minimum is smaller, and the difference is reported as conservative quantity overshoot.

\begin{center}
\small
\begin{tabular}{@{}lrrrr@{}}
\toprule
Path & Audit minimum & Planned cap & Residual after audit & Status \\
\midrule
Favorable operating & 1,311.2140 & 1,606.2949 & 1,069.7384 & debt cleared \\
Base operating & 1,409.2884 & 1,606.2949 & 971.6640 & debt cleared \\
Adverse operating & 1,589.1399 & 1,606.2949 & 791.8125 & debt cleared \\
Thin depth & --- & --- & 1,680.9524 & shortfall 299.2882 \\
Zero liquidity & --- & --- & 2,380.9524 & shortfall 502.6907 \\
\bottomrule
\end{tabular}
\end{center}
The table is generated from the same deterministic registry distributed in CSV and JSON\@. The values verify the controller implementation; because the inputs are author-specified, they are not empirical evidence about venue behavior.

\subsection{Exact leverage tiers}

A finite entry ask ladder is combined with the adverse operating exit envelope. For leverage tiers
\begin{equation}
\mathcal L_{\mathrm{policy}}=\{1,1.25,1.5,1.75,2,2.25,2.5,3\},
\end{equation}
the verifier computes $q_0(L)$, residual entry cash, debt-service upper bound, robust full-sale proceeds, buffer, and $\Psi(L)$. The largest non-negative tier is $L_{\max}^{\mathrm{exact}}$. The linear recovery-ratio formula is calculated alongside it and labeled as a benchmark only.

\subsection{Aggregate liquidity test}

Three positions share one finite bid book. The verifier calculates:
\begin{enumerate}[label=(\roman*)]
  \item the sum of three independently re-anchored liquidation values;
  \item the single aggregate proceeds curve for the combined quantity;
  \item $\Gamma_i$ for every position;
  \item the frozen lowest-$\Gamma_i$ priority order; and
  \item whether every loan clears under the shared-book path without depth reuse.
\end{enumerate}
The independent sum is strictly larger than aggregate proceeds, demonstrating the double-counting error in \Cref{prop:double-counting}. A separate insufficient-capital fixture verifies the pro-rata reserve rule in \eqref{eq:pro-rata-reserve}.

\subsection{Finality checks}

For every successful operating scenario, the verifier evaluates
\begin{equation}
\pi_h\in\left\{0,\frac12,1\right\},
\qquad
\Delta_f\in\{0,1,7,30,365\}\text{ days},
\end{equation}
and several redemption latencies. Loan-channel credit loss remains zero after confirmed debt extinction; user claim value changes with $\pi_h$.

\subsection{Machine-check summary}

The release report lists each invariant individually and fails the build if any required check is false. Negative controls are required to remain negative: thin depth, zero liquidity, premature close, signer loss, collateral escape, and persistent settlement failure must not be mislabeled as successful robust deleveraging.

\subsection{Interpretation}

Deterministic verification proves that the code implements the stated algebra on the registered fixtures. It does not prove:
\begin{itemize}
  \item that the operating set contains the true venue distribution;
  \item that market makers will leave quotes in place during stress;
  \item that settlement will meet the chosen horizon;
  \item that the signer or lien is enforceable; or
  \item that aggregate OI limits are commercially attractive.
\end{itemize}
Those are empirical, contractual, and implementation questions. In particular, calibration and out-of-sample validation of $\Uop$ and $\Ustress$ are reserved for a separate empirical study rather than inferred from the deterministic fixtures.

\section{Reference Architecture and Venue Capability Boundary}
\label{sec:architecture}

The formal mechanism can be implemented without operating a new public CLOB, but not without a credit ledger, execution authority, and settlement-grade reconciliation.

\subsection{Architecture overview}

\begin{figure}[H]
\centering
\resizebox{0.98\textwidth}{!}{%
\begin{tikzpicture}[
node distance=0.75cm and 1.0cm,
every node/.style={font=\small},
box/.style={draw, rounded corners, align=center, minimum height=0.8cm, minimum width=2.8cm},
arrow/.style={-{Latex[length=2mm]}, thick}
]
\node[box] (ui) {Trading terminal\\and portfolio};
\node[box, below=of ui] (api) {Application API\\auth, quotes, positions};
\node[box, below left=of api] (risk) {Risk engine\\robust envelopes\\aggregate capacity};
\node[box, below=of api] (ledger) {Double-entry ledger\\outbox and audit};
\node[box, below right=of api] (exec) {Execution orchestrator\\hard-flat policy};
\node[box, below=of risk] (market) {Market-data engine\\REST/WS/chain resync};
\node[box, below=of ledger] (workers) {Lifecycle, settlement,\\finality and redemption workers};
\node[box, below=of exec] (signer) {Restricted signer\\KMS/MPC or margin vault};
\node[box, below=1.0cm of workers, minimum width=4.6cm] (venue) {Underlying event venue\\order book, vaults, settlement, oracle};

\draw[arrow] (ui) -- (api);
\draw[arrow] (api) -- (risk);
\draw[arrow] (api) -- (ledger);
\draw[arrow] (api) -- (exec);
\draw[arrow] (risk) -- (market);
\draw[arrow] (ledger) -- (workers);
\draw[arrow] (exec) -- (signer);
\draw[arrow] (market) -- (venue);
\draw[arrow] (workers) -- (venue);
\draw[arrow] (signer) -- (venue);
\end{tikzpicture}%
}
\caption{Reference architecture. The venue provides spot execution and final settlement; \Axient supplies credit, robust execution control, aggregate risk limits, accounting, and state transitions.}
\label{fig:architecture}
\end{figure}

\subsection{Modular-monolith deployment}

The reference implementation should use a modular monolith with separate API and worker processes rather than prematurely distributed microservices. Core modules are:
\begin{itemize}
  \item accounts and controlled subaccounts;
  \item venue adapter and market-data reconstruction;
  \item exact-decimal risk engine;
  \item order and settlement orchestrator;
  \item immutable double-entry ledger;
  \item hard-flat and liquidation workers;
  \item oracle/finality and redemption watcher;
  \item reserve and incident accounting; and
  \item admin and audit console.
\end{itemize}
PostgreSQL is the system of record; a transactional outbox drives asynchronous work. Redis may provide cache and locks but is not authoritative financial state.

\subsection{Illustrative technical stack}

A practical implementation uses TypeScript with strict typing for API and workers, PostgreSQL with fixed-precision numeric columns, Redis for ephemeral coordination, a decimal arithmetic library for off-chain amounts, and a chain client such as viem for EIP-712 and event indexing. Foundry is appropriate for any custom contracts. OpenTelemetry, structured logs, metrics, and Sentry-class error reporting support operations. Every external write is idempotent.

No JavaScript binary floating-point number is used for prices, fees, quantities, or balances. Chain amounts are integers in base units; database amounts are fixed-precision decimals.

\subsection{Venue adapter}

The formal model maps to an interface with methods for market discovery, scheduled close, market status, order-book snapshots and streams, balances, positions, order placement and cancellation, matched fills, settlement confirmation, final payout vector, redemption, and reconciliation. A deterministic simulated adapter implements the same interface for tests.

The adapter exposes the four execution objects separately:
\begin{itemize}
  \item \texttt{quoteBookProceeds};
  \item \texttt{matchedFills};
  \item \texttt{settledTransfers};
  \item \texttt{confirmedRedemption}.
\end{itemize}
No downstream service may infer settlement from a match-only event.

\subsection{Market-data reconstruction}

For a sequence-numbered book stream, the canonical loop is:
\begin{enumerate}[label=\arabic*.]
  \item buffer incoming frames;
  \item fetch a REST snapshot;
  \item apply contiguous deltas by sequence;
  \item mark the book stale on a gap or unexpected silence;
  \item fetch a fresh snapshot; and
  \item block new leveraged quotes until synchronization succeeds.
\end{enumerate}
The risk engine stores the snapshot version and uncertainty-set version used for every quote.

\subsection{Restricted signer}

The signer does not accept arbitrary calldata. It accepts typed intents for approved markets and risk-reducing operations, checks chain, verifying contract, wallet, quantity cap, expiry, nonce, idempotency key, and current risk approval, and writes an audit event before signing.

For a controlled-subaccount implementation, the signer can be KMS or MPC\@. For a trust-minimized version, a venue-recognized margin vault should enforce sale authority, withdrawal lock, debt priority, and post-repayment release on-chain.

\subsection{Current PredictStreet capability mapping}

Table~\ref{tab:predictstreet-capability} is a non-normative reading of public PredictStreet documentation as of July 13, 2026. It is not part of the theorem and should be re-verified before integration.

\begin{table}[H]
\centering
\small
\begin{tabular}{@{}>{\raggedright\arraybackslash}p{0.20\textwidth}>{\raggedright\arraybackslash}p{0.30\textwidth}>{\raggedright\arraybackslash}p{0.40\textwidth}@{}}
\toprule
Capability & Public interface & \Axient implication \\
\midrule
Order-book observation & REST market order book plus live \texttt{token\_book} snapshots and sequence-numbered updates \citep{predictstreet2026marketoverview,predictstreet2026websocket} & Supports $B_u^{\mathrm{book}}$ and resynchronization, but not a guarantee of future settled proceeds. \\
Scheduled close & Market objects expose \texttt{closesAt} and \texttt{kickoff} \citep{predictstreet2026marketoverview} & Supports $T_c^{\mathrm{sched}}$; actual suspension or early close remains distinct. \\
Lifecycle push & Book, settlement, and platform-status channels are live; the condition-lifecycle channel is documented as pending \citep{predictstreet2026websocket} & Poll REST status and reconcile chain events; do not rely on one lifecycle stream. \\
Match versus settlement & Orders can be filled before on-chain settlement; settled trades and failures have separate states \citep{predictstreet2026orderlifecycle} & Use chain-confirmed settlement or accounting-grade trade status for $B^{\mathrm{set}}$. \\
Automatic hard-flat & Every acting wallet must sign EIP-712 writes; multi-wallet partners still need each acting key \citep{predictstreet2026apikeys} & A user-only self-custodial signer cannot be forced to close. A controlled implementation requires MPC-managed subaccounts or venue-supported delegation. \\
Collateral lien & One user vault holds USDC and outcome tokens; emergency withdrawal exists after a timelock \citep{predictstreet2026vaults} & Standard vault does not publicly encode \Axient debt priority or liquidator role. Trust-minimized lending requires a new recognized margin vault or equivalent lien. \\
Dispute and void & Binary markets have proposal/challenge and disputed states; void is a payout-vector outcome \citep{predictstreet2026challenges,predictstreet2026voiddelay} & Compatible with debt-free residual claims; leverage must already be extinguished. \\
Redemption & Gasless redemption progresses asynchronously to confirmed or failure \citep{predictstreet2026redeem} & Model $\tau_r$ separately from $\tau_f$. \\
\bottomrule
\end{tabular}
\caption{Public PredictStreet capability map, accessed July 13, 2026.}
\label{tab:predictstreet-capability}
\end{table}

The table supports a controlled-account implementation but not a claim that the current public vault is a permissionless secured-lending primitive.

\subsection{Smart-contract boundary}

The simulator and partner-integration implementation can remain backend-led because the underlying venue already supplies outcome tokens and settlement. Before meaningful external capital is accepted, useful custom contracts include a restricted capital vault, reserve vault, and fee router with public balances, timelocks, role separation, and allocation caps. These contracts improve transparency but do not create a lien over outcome tokens held in an unrelated venue vault.

A full margin vault is the critical later contract. It must be recognized by the venue's exchange and support:
\begin{enumerate}[label=(\roman*)]
  \item user beneficial ownership;
  \item lender lien while debt is positive;
  \item delegated risk-reducing sale authority;
  \item withdrawal prohibition during debt;
  \item deterministic debt-first proceeds application; and
  \item automatic release after repayment.
\end{enumerate}
Without venue recognition, a custom contract cannot substitute for the venue's signer-to-vault verification or matching operator.

\subsection{Operational modes}

The deployment modes are
\begin{equation}
\state{PAPER}
\rightarrow
\state{SIMULATED}
\rightarrow
\state{PARTNER\_STAGING}
\rightarrow
\state{LIVE\_WHITELIST}
\rightarrow
\state{LIVE\_PUBLIC}.
\end{equation}
Each transition requires explicit gates for reconciliation, signer control, settlement latency, shared-book capacity, reserve, and incident recovery. Public access is not a default consequence of successful simulation.

\section{Security Properties and Trust Model}
\label{sec:security}

The mechanism has mathematical, software, contractual, and venue-level dependencies. Security claims must identify which layer enforces each property.

\subsection{Actors}

Relevant actors are the user, capital provider, \Axient operator, signer administrator, underlying venue operator, market maker, oracle proposer/challenger, chain validators, smart-contract administrator, and reserve custodian. One legal entity may fill several roles, but the risk analysis keeps them separate.

\subsection{Safety properties}

The reference implementation targets:
\begin{enumerate}[label=(S\arabic*)]
  \item no debt in ordinary finality states;
  \item no debt reduction from match-only events;
  \item no duplicate financial posting;
  \item no withdrawal or transfer of pledged assets while debt is positive;
  \item no arbitrary signer operation outside policy;
  \item no position-level reuse of shared-book liquidity;
  \item no final payout accounting from a provisional proposal;
  \item no hardcoded void payout; and
  \item explicit failure-state transition when the robust certificate no longer applies.
\end{enumerate}

\subsection{Liveness properties}

The system should eventually:
\begin{enumerate}[label=(L\arabic*)]
  \item resynchronize a stale book or disable new leverage;
  \item settle, retry, or classify every matched trade;
  \item clear debt, declare shortfall, or invoke a backstop by the hard-flat deadline;
  \item reconcile ledger and venue balances;
  \item ingest a final payout vector; and
  \item confirm redemption or expose a permanent failure state.
\end{enumerate}
Safety takes priority over liveness: failure to obtain a fresh book blocks new leverage rather than using stale marks.

\subsection{Threats and controls}

\begin{longtable}{@{}p{0.25\textwidth}p{0.64\textwidth}@{}}
\toprule
Threat & Principal controls \\
\midrule
\endhead
Book manipulation or quote withdrawal & Robust lower envelope, snapshot versioning, depth caps, aggregate OI limits, maker/RFQ diversification, no midpoint credit. \\
Settlement delay or failure & Match/settlement separation, debt-service upper envelope, finite horizon, pending accounts, retries, failure-state transition. \\
Signer compromise & Typed intents, market and quantity allowlists, KMS/MPC policy, dual approval for configuration, key rotation, immutable audit. \\
Borrower refuses liquidation & Controlled signer or enforceable delegated sale right; otherwise leverage is not offered. \\
Borrower removes collateral & Withdrawal lock/lien or controlled subaccount; continuous balance reconciliation; emergency-path monitoring. \\
Premature market closure & Early buffer, status polling, platform-status feed, partner SLA where available, reserve; no universal guarantee claimed. \\
Shared-book run & Aggregate capacity, deterministic priority, simultaneous-close scenario, market-level circuit breaker. \\
Oracle error or dispute & Debt extinction before finality, payout-vector ingestion, challenge process, no provisional settlement. \\
Redemption failure & Separate $\tau_r$, retry and direct on-chain fallback where available, incident state, no claim of immediate cash. \\
Ledger corruption & Append-only journal, idempotency, database constraints, independent reconciliation, backups and restore tests. \\
Operator insolvency & Segregated assets/reserves, role separation, on-chain transparency where possible; legal segregation remains outside pure code. \\
\bottomrule
\end{longtable}

\subsection{Manipulation and strategic liquidity withdrawal}

The operating uncertainty set should not assume passive independent noise. Market makers may withdraw bids when they anticipate hard-flat flow. The controller therefore includes adversarial depth reduction and simultaneous liquidation in its scenarios. No finite deterministic set can cover every strategic response. Markets that require ordinary use of the reserve to close positions are ineligible for leverage rather than ``insured.''

Leverage also amplifies informed-trading and outcome-manipulation rents documented in the companion research \citep{nechepurenko2026manipulation}. Debt-free finality protects the loan after hard-flat; it does not make the underlying event fair or manipulation-resistant.

\subsection{Trust-minimization ladder}

\begin{center}
\begin{tabular}{@{}>{\raggedright\arraybackslash}p{0.09\textwidth}>{\raggedright\arraybackslash}p{0.27\textwidth}>{\raggedright\arraybackslash}p{0.51\textwidth}@{}}
\toprule
Tier & Architecture & Trust statement \\
\midrule
T0 & Deterministic simulator & No real assets; validates code and algebra only. \\
T1 & Controlled subaccounts, backend ledger & Physically backed but custodial/semi-custodial; users trust operator accounting, signer, and reserve. \\
T2 & On-chain capital/reserve vaults plus controlled venue accounts & Capital and reserve become publicly auditable; outcome-token lien remains operational rather than venue-enforced. \\
T3 & Venue-recognized margin vault & On-chain lien, delegated liquidation, withdrawal lock, and debt-first waterfall materially reduce operator trust. \\
T4 & Multi-venue, independently executable claims & Strongest portability; requires standardized venue and settlement interfaces not assumed here. \\
\bottomrule
\end{tabular}
\end{center}

The base reference implementation is T1. It should be described as wallet-native or on-chain-settled where accurate, not as fully permissionless or trustless.

\subsection{Smart-contract invariants for later phases}

A margin vault should enforce:
\begin{equation}
D>0\Longrightarrow
\begin{cases}
\text{withdrawable pledged quantity}=0,\\
\text{risk-increasing approval}=0,\\
\text{risk-reducing executor enabled},\\
\text{sale cash applied debt-first}.
\end{cases}
\end{equation}
After confirmed repayment,
\begin{equation}
D=0\Longrightarrow\text{residual assets releasable to the user}.
\end{equation}
Formal verification of those contract invariants would complement, not replace, venue and oracle risk analysis.

\subsection{Incident classes}

P0 incidents include loss of signer control, ledger imbalance, unauthorized withdrawal, double settlement, or reserve insolvency. P1 incidents include unexpected close with debt, prolonged settlement pending, aggregate capacity breach, and incorrect payout ingestion. Every incident records affected accounts, external event identifiers, financial exposure, containment, recovery, and postmortem actions.

\section{Limitations}
\label{sec:limitations}

\subsection{No empirical calibration of the operating set}

The lower settled-proceeds envelope, $\Uop$, and the broader $\Ustress$ registry are author-specified in the deterministic verifier. The paper does not estimate how often a real venue leaves the operating set, how settlement latency is distributed, or how market makers respond to anticipated hard-flat flow. Empirical construction, calibration, target-coverage selection, and out-of-sample falsification of these sets are reserved for a separate study; they are the most important empirical inputs for deployment.

\subsection{Robustness depends on set design}

A robust theorem can be technically correct and economically weak if $\Uop$ is narrow, implausible, or selected after observing outcomes. A production policy must pre-register both the operating-set construction and the stress-transform catalogue, evaluate coverage on a disjoint holdout window, and report every breach without retroactively enlarging the set. The present paper supplies the mathematical interface and deterministic test registry, not a calibrated venue model.

\subsection{No optimal execution theorem}

The hard-flat policy certifies debt clearance; it does not minimize expected cost, market impact, user disutility, or information leakage. More sophisticated recourse may preserve additional residual exposure but introduces model and implementation complexity.

\subsection{Venue control remains external}

The mechanism does not control listing, early suspension, order acceptance, matching priority, settlement batching, oracle administration, or redemption. A partner contract can narrow some risks but cannot make an external venue permissionless.

\subsection{Control-and-lien assumptions are substantive}

A controlled subaccount satisfies execution authority operationally but introduces custody and operator risk. A standard user vault without lender lien or delegated liquidation does not satisfy the trust-minimized version of the assumptions. The paper does not solve the legal insolvency treatment of operationally controlled collateral.

\subsection{Scheduled close is not actual close}

A published close time supports planning but does not rule out early suspension or platform freeze. The robust theorem requires actual tradability through the horizon; premature closure remains a failure path unless an enforceable guarantee exists.

\subsection{Aggregate liquidity can be strategic}

The shared-book model prevents mechanical double counting but does not model equilibrium response. Market makers may cancel, front-run, or reprice when aggregate hard-flat flow is predictable. The scenario set can stress this behavior but cannot prove a game-theoretic equilibrium.

\subsection{No cross-margin or portfolio netting}

The base mechanism uses isolated risk buckets. Payoff-algebraic offsets and portfolio margin may improve capital efficiency, but they complicate aggregate execution, priority, and finality and are outside scope.

\subsection{No welfare theorem}

Debt-free finality can protect lender principal while imposing forced-sale cost and reducing user residual exposure. The paper does not prove that the mechanism maximizes trader welfare or social information aggregation.

\subsection{Dispute invariance is role-specific}

The result applies to lender credit-principal exposure through the extinguished loan. An operator can remain exposed through custody, SLA, legal, reputational, oracle-governance, or redemption-guarantee roles.

\subsection{Reserve sufficiency is scenario-conditional}

A reserve bound is only as complete as the scenario set, shared-book model, asset availability, and legal segregation. It is not a universal solvency theorem.

\subsection{Deterministic verification is not production proof}

Passing fixtures demonstrates code/formula consistency. It does not establish API uptime, contract correctness, key security, settlement finality, or commercial liquidity.

\subsection{Manipulation and informed trading remain}

The mechanism does not prevent insiders from trading, actors from influencing event outcomes, or traders from manipulating visible book states. It limits loan exposure after successful hard-flat; it does not make the market epistemically clean.

\subsection{Legal characterization is outside scope}

The instrument may be characterized as margin lending over spot outcome tokens, a derivative, gaming, or a combination depending on jurisdiction and implementation. This paper is not legal advice.

\subsection{Authorial conflict of interest}

The author is developing \Axient and benefits from interest in the mechanism. The paper mitigates but cannot eliminate this conflict through explicit assumptions, negative controls, source release, and a separation between theorem, verifier, and deployment claim.

\subsection{What would falsify the practical thesis}

The practical thesis would be weakened or falsified if controlled evidence shows that, at commercially meaningful size:
\begin{enumerate}[label=(\roman*)]
  \item the registered lower envelope is breached too often to support positive unit economics;
  \item aggregate bid capacity is routinely below financed debt;
  \item actual close or settlement failure frequently precedes hard-flat;
  \item required signer/lien control is unavailable under acceptable trust; or
  \item fees and reserve capital exceed user willingness to pay for leverage.
\end{enumerate}
The formal debt-free-finality invariant would remain true after confirmed repayment, but the product architecture would not be practically useful.

\section{Conclusion}
\label{sec:conclusion}

\Axient separates a leveraged event position into two maturities. The loan ends when settled sale proceeds extinguish debt; the residual outcome claim ends only when the payout vector is final and redemption confirms. This separation removes terminal payout and dispute duration from the lender's loan channel after confirmed conversion.

The central revision is ex ante. A visible book and a realized settlement curve are not the same object. The planned sale is therefore selected from a lower envelope of future settled proceeds over a registered operating uncertainty set, against an upper envelope of debt. The realized minimum sale remains useful as an audit and residual-exposure measure, but it no longer circularly determines the earlier order.

The same discipline applies at system scale. Multiple positions cannot each reuse the same top-of-book liquidity. Aggregate OI, lowest-$\Gamma_i$ hard-flat priority, and pro-rata principal-shortfall reserve allocation are computed from shared execution paths. The exact book-dependent leverage envelope is primary; scalar recovery-ratio formulas are only benchmarks.

The resulting guarantees are deliberately conditional. Inside the registered operating set, with enforceable execution authority, collateral non-escape, and settlement by the horizon, robust auto-deleverage clears debt. Outside that set, no backend-only mechanism with leverage above one can promise zero shortfall without external collateral, liquidity guarantees, or reserves. This boundary is a feature of the specification, not a defect to be hidden.

The next research step is empirical rather than algebraic: estimate settled-proceeds envelopes, actual-close risk, shared-book capacity, and strategic liquidity response on a partner venue, pre-register coverage thresholds, and test whether commercially useful leverage remains after all constraints. Until then, the paper supports a proof of mechanism and intent, not a claim of production safety.

\appendix
\section{Notation, Timelines, and State Glossary}
\label{app:notation}

This appendix consolidates the notation used in the main text. Quantities called ``quoted,'' ``matched,'' ``settled,'' and ``redeemed'' are intentionally distinct.

\subsection{Core symbols}

\begin{longtable}{@{}p{0.20\textwidth}p{0.70\textwidth}@{}}
\toprule
Symbol & Definition \\
\midrule
\endfirsthead
\toprule
Symbol & Definition \\
\midrule
\endhead
$(\Omega,\F,(\F_t),\Prob)$ & Filtered probability space carrying venue, execution, settlement, oracle, and account observations. \\
$E$ & Binary real-world event. \\
$h\in\{\yes,\no\}$ & Outcome token held by the position. Long-NO is the physically backed implementation of short-YES. \\
$\boldsymbol\pi=(\pi_{\yes},\pi_{\no})$ & Final payout vector, with entries in $[0,1]$ and $\pi_{\yes}+\pi_{\no}=1$. \\
$t_0$ & Confirmed position-open time. \\
$\sigma_R$ & Reduce-only transition time. \\
$u$ & Hard-flat decision and execution-start time. \\
$\nu$ & Time of the last matcher fill used by the hard-flat execution sequence. \\
$\sigma$ & First confirmed time at which settled proceeds have been applied and debt is zero; $\sigma=\infty$ on failure to extinguish debt. \\
$T_c^{\mathrm{sched}}$ & Venue-published scheduled close time. \\
$T_c^{\mathrm{act}}$ & Actual time after which risk-reducing orders are no longer accepted. \\
$\tau_p$ & First provisional oracle proposal time. \\
$\tau_f$ & Time at which the final payout vector is authoritative for redemption. \\
$\tau_r$ & Time at which residual outcome-token redemption is confirmed. \\
$C$ & User collateral contributed at entry. \\
$L$ & Gross leverage, equal to acquisition budget divided by user collateral. \\
$N=LC$ & Gross acquisition budget. \\
$D_0=(L-1)C$ & Initial borrowed principal. \\
$D_t$ & Outstanding debt before extinction. \\
$r_t$ & Borrowing-rate process. \\
$C_t^{\mathrm{debt}}$ & Charges contractually added to the receivable. \\
$K_t$ & Dedicated confirmed cash available to the position and loan waterfall. \\
$q_0(L)$ & Outcome-token quantity acquired and settled at entry for leverage tier $L$. \\
$q$ & Settled outcome-token quantity at a hard-flat decision time. \\
$q_{\mathrm{spot}}$ & Residual fully funded outcome-token quantity after debt extinction. \\
$A_{t_0}^{\mathrm{set}}(q)$ & Confirmed cumulative acquisition cost of $q$ tokens, including allocated entry costs. \\
$B_u^{\mathrm{book}}(x)$ & Net proceeds implied by the observed bid book at decision time $u$; a quote, not settled cash. \\
$B_{u,\Delta}^{\mathrm{match},\Pi}(x,\omega)$ & Net proceeds recorded by matched fills under execution policy $\Pi$ by horizon $u+\Delta$. \\
$B_{u,\Delta}^{\mathrm{set},\Pi}(x,\omega)$ & Net proceeds confirmed and available for debt application by $u+\Delta$. \\
$B_{\sigma}^{\mathrm{audit}}(x,\omega)$ & Cumulative confirmed proceeds of the first $x$ units in the realized fill sequence; computed after settlement. \\
$\Uop_u(\Delta)$ & Registered operating uncertainty set over the hard-flat horizon; the robust debt-clearing certificate applies only on this set. \\
$\Ustress_u(\Delta)$ & Broader registered stress set containing $\Uop_u(\Delta)$ plus named adversarial and operational scenarios used for validation and reserve design; stress inclusion is not a guarantee. \\
$\Ufail_u(\Delta)$ & Full failure set, defined as $\Omega\setminus\Uop_u(\Delta)$; includes premature close, zero liquidity, lost authority, or persistent settlement failure. \\
$T_{k,h,s}(\mathcal L_u)$ & Adversarial bid-book transform: remove the best $k$ levels, remove depth fraction $h$, and expand bid distance from the reference mid by factor $s$. \\
$\underline B_{u,\Delta}^{\Pi}(x)$ & Lower settled-proceeds envelope over $\Uop_u(\Delta)$. \\
$\overline H_{u,\Delta}$ & Upper bound on debt by the settlement horizon over $\Uop_u(\Delta)$. \\
$\Qset_u(q)$ & Finite admissible cumulative sale grid generated by balances, token precision, lot rules, and policy. \\
$m_u$ & Explicit hard-flat buffer in addition to the debt-service upper bound. \\
$\widehat{\mathcal F}_{u,\Delta}(q)$ & Ex-ante robust debt-clearing feasible set. \\
$\widehat x_u$ & Minimum quantity certified ex ante to clear debt and buffer over the operating set. \\
$\mathcal F_{\sigma}^{\mathrm{audit}}$ & Realized post-settlement debt-clearing feasible set. \\
$x_{\sigma}^{\ast}$ & Minimum realized quantity whose confirmed proceeds suffice to clear debt. \\
$Z_{\sigma}$ & Confirmed cash remaining after debt application. \\
$\underline V_t(q)$ & Conservative executable liquidation value used for monitoring. \\
$E_t(q)$ & Executable equity $K_t+\underline V_t(q)-D_t$. \\
$M_t(q)$ & Maintenance and forward-looking execution buffer. \\
$\hf_t(q)$ & Health factor $(K_t+\underline V_t(q))/(D_t+M_t(q))$. \\
$\Psi(L)$ & Exact finite-book feasibility margin for candidate leverage tier $L$. \\
$\mathcal L_{\mathrm{policy}}$ & Finite set of product-permitted leverage tiers. \\
$L_{\max}^{\mathrm{exact}}$ & Largest tier with non-negative exact robust feasibility margin. \\
$a,b,g=b/a$ & Effective entry unit cost, stressed exit unit recovery, and scalar recovery ratio in the linear benchmark only. \\
$\mathcal U_{u,\mathrm{agg}}^{\mathrm{op}}(\Delta)$ & Registered aggregate operating set for positions sharing one execution book or risk bucket. \\
$\mathcal U_{u,\mathrm{agg}}^{\mathrm{stress}}(\Delta)$ & Broader aggregate stress set used for correlated hard-flat, common-delay, and shared-liquidity tests. \\
$\underline{\mathcal B}_{u,\Delta}(X)$ & Shared lower settled-proceeds envelope for aggregate quantity $X$ through one common book. \\
$\Gamma_i(u)$ & Standalone robust coverage ratio used only for deterministic priority ordering; aggregate solvency still uses shared-book increments. \\
$\Psi_{\mathrm{agg}}(\mathbf q)$ & Aggregate risk-bucket coverage margin after shared-book execution and buffers. \\
$\ell_k(s)$ & Aggregate credit shortfall of risk cluster $k$ in scenario $s$. \\
$r_i(R,\boldsymbol\ell)$ & Reserve allocation to position $i$ under the pro-rata principal-shortfall rule. \\
$\mathcal R$ & Segregated reserve legally and operationally available to absorb modeled shortfalls. \\
$\Sset$ & Registered finite reserve stress-scenario set. \\
$V_{\tau_f}^{\mathrm{claim}}$ & User's final claim value before redemption. \\
$W_{\tau_r}^{\mathrm{cash}}$ & Confirmed user cash after redemption. \\
\bottomrule
\end{longtable}

\subsection{Economic clocks}

The successful path has the ordering
\begin{equation}
 t_0<\sigma_R\le u\le\nu\le\sigma<T_c^{\mathrm{act}}\le\tau_p\le\tau_f\le\tau_r.
\end{equation}
The mechanism does not assume that $T_c^{\mathrm{act}}$, $\nu$, $\sigma$, $\tau_f$, or $\tau_r$ are deterministic. A scheduled timestamp is a planning input; a settlement or finality time is a stopping time defined by observed state.

\subsection{Canonical account states}

\begin{longtable}{@{}p{0.31\textwidth}p{0.59\textwidth}@{}}
\toprule
State & Meaning and debt rule \\
\midrule
\endfirsthead
\toprule
State & Meaning and debt rule \\
\midrule
\endhead
\state{PENDING\_OPEN} & Entry has been submitted, matched, or is awaiting settlement. A final financed position does not yet exist. \\
\state{OPEN} & Settled leveraged position. Positive debt is permitted subject to risk and aggregate-capacity limits. \\
\state{REDUCE\_ONLY} & Risk may be reduced or collateral added; no risk-increasing action is permitted. Positive debt may remain. \\
\state{DELEVERAGING} & Hard-flat orders and settlement are in progress. Positive debt remains until confirmed application. \\
\state{EVENT\_CLOSE\_EXCEPTION} & Actual venue closure occurred before ordinary debt extinction. Positive debt may remain and the position is outside the robust certificate. \\
\state{CREDIT\_SHORTFALL} & Settled proceeds and dedicated cash are insufficient. The uncovered amount is explicit and enters recovery or reserve accounting. \\
\state{DEBT\_FREE\_SPOT} & Debt is zero; residual outcome tokens are fully funded spot claims. \\
\state{PENDING\_FINALITY} & Trading is closed and the payout vector is not final. Ordinary positions must have zero debt. \\
\state{DELAYED} & Resolution is delayed because the source or process cannot safely finalize. Ordinary positions must have zero debt. \\
\state{DISPUTED} & A provisional result is challenged. Ordinary positions must have zero debt. \\
\state{FINAL} & Payout vector is final. Redemption may still be pending. \\
\state{REDEMPTION\_PENDING} & A redemption request has not yet reached final confirmation. \\
\state{REDEEMED} & Outcome token has been converted into confirmed collateral and user accounting is complete. \\
\bottomrule
\end{longtable}

\subsection{Interpretation convention}

A symbol indexed by ``book'' is an observable quote; ``match'' is a matcher record; ``set'' is confirmed collateral available to the debt waterfall; ``audit'' is a post-settlement reconstruction; and ``redeem'' is final conversion of an outcome claim into cash. No theorem permits one category to be substituted silently for another.

\section{Supplementary Proofs and Extensions}
\label{app:proofs}

\subsection{Monotonicity of lower settled-proceeds envelopes}

\begin{lemma}[Envelope monotonicity on the admissible grid]
\label{lem:envelope-monotone}
Suppose that for every $\omega\in\Uop_u(\Delta)$, the settled-proceeds function $B_{u,\Delta}^{\mathrm{set},\Pi}(\cdot,\omega)$ is non-decreasing on the common admissible grid $\Qset_u(q)$. Then the lower envelope $\underline B_{u,\Delta}^{\Pi}$ in \eqref{eq:lower-proceeds-envelope} is non-decreasing on that grid.
\end{lemma}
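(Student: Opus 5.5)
The argument is the standard fact that a pointwise infimum of non-decreasing functions is non-decreasing, specialised to the finite grid $\Qset_u(q)=\{y_0<y_1<\dots<y_m\}$. Fix two grid points $x\le x'$ in $\Qset_u(q)$. For each path $\omega\in\Uop_u(\Delta)$ the hypothesis gives
\begin{equation*}
B_{u,\Delta}^{\mathrm{set},\Pi}(x,\omega)\le B_{u,\Delta}^{\mathrm{set},\Pi}(x',\omega).
\end{equation*}
By definition \eqref{eq:lower-proceeds-envelope}, $\underline B_{u,\Delta}^{\Pi}(x)$ is a lower bound for the left-hand side over all such $\omega$. Hence $\underline B_{u,\Delta}^{\Pi}(x)\le B_{u,\Delta}^{\mathrm{set},\Pi}(x',\omega)$ for every $\omega\in\Uop_u(\Delta)$. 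So $\underline B_{u,\Delta}^{\Pi}(x)$ is a lower bound of the set whose infimum defines $\underline B_{u,\Delta}^{\Pi}(x')$. Since the infimum is the greatest lower bound, $\underline B_{u,\Delta}^{\Pi}(x)\le\underline B_{u,\Delta}^{\Pi}(x')$.

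Two points need explicit attention. The first is well-definedness. If $\Uop_u(\Delta)$ is nonempty, then settled proceeds are non-negative, because they are confirmed net cash receipts. This presupposes that net proceeds are defined so that the envelope is bounded below, for instance by $0$ when fees cannot exceed receipts. Under that convention each infimum is a finite real number. If fees could make proceeds negative, I would simply work in the extended reals $[-\infty,\infty)$, where the same greatest-lower-bound argument applies verbatim. If $\Uop_u(\Delta)=\varnothing$, the envelope is identically $+\infty$ and monotonicity holds trivially. I would state this convention rather than assume it silently.

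The second point, and the only real subtlety, is that the hypothesis must hold on the \emph{common} grid $\Qset_u(q)$ for every path. The argument compares the same two quantities $x,x'$ across all $\omega$. If different paths used different admissible grids, the infimum at $x'$ could range over paths where $x$ is not even admissible, and the comparison would break. The lemma's assumption rules this out, so no extra work is needed. I would remark that this is exactly what justifies the monotonicity step invoked in \Cref{prop:planned-existence}: the lemma supplies the ordering of $\underline B_{u,\Delta}^{\Pi}(y_j)$ in $j$ that was used there. No continuity or concavity is required, since the argument is purely order-theoretic.
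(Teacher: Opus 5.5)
Your argument is correct and matches the paper's proof: fix admissible $x\le x'$, apply the pathwise inequality for every $\omega\in\Uop_u(\Delta)$, and observe that taking the infimum over the same path set preserves it. Your remarks on extended-real values, the empty operating set, and the role of the common grid are harmless clarifications that the paper leaves implicit.
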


\begin{proof}
For admissible $x\le y$, every path satisfies
$B_{u,\Delta}^{\mathrm{set},\Pi}(x,\omega)\le B_{u,\Delta}^{\mathrm{set},\Pi}(y,\omega)$. Taking the infimum over the same path set preserves the inequality.
\end{proof}

This lemma is the only monotonicity needed for the finite-grid existence result in \Cref{prop:planned-existence}. Concavity, continuity, differentiability, and a parametric distribution are unnecessary.

\subsection{Stopping-time interpretation of debt extinction}

Define
\begin{equation}
\sigma
=
\inf\left\{t\ge u:
\text{confirmed dedicated cash has been applied and }D_t=0
\right\}.
\label{eq:stopping-time-sigma}
\end{equation}
If settlement confirmations, journal postings, and debt balances are $\F_t$-observable, the event $\{\sigma\le t\}$ is $\F_t$-measurable. Thus $\sigma$ is a stopping time. On a path where debt is never extinguished, the convention $\sigma=\infty$ keeps the state explicit rather than assigning a fictitious leverage maturity.

The definition also clarifies why match time $\nu$ is not leverage maturity. A match may remain pending, fail, or settle for a different net amount after charges. The account reaches the debt-free state only at \eqref{eq:stopping-time-sigma}.

\subsection{Discrete-lot certification overshoot}

Let the admissible robust sale grid be $0=y_0<\cdots<y_m$ and let $\widehat x_u=y_j$ be the minimum certified quantity. Define the marginal lower-envelope proceeds
\begin{equation}
\Delta\underline B_j
=
\underline B_{u,\Delta}^{\Pi}(y_j)
-
\underline B_{u,\Delta}^{\Pi}(y_{j-1}).
\end{equation}

\begin{lemma}[Robust certificate overshoot]
\label{lem:robust-overshoot}
If $j\ge1$, then the certified cash surplus above the debt bound and buffer satisfies
\begin{equation}
0\le
K_u+\underline B_{u,\Delta}^{\Pi}(y_j)
-\overline H_{u,\Delta}-m_u
<\Delta\underline B_j.
\end{equation}
\end{lemma}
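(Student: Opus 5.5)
The argument uses only two facts: membership of $y_j$ in the feasible set and non-membership of $y_{j-1}$.

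Write $\Theta=\overline H_{u,\Delta}+m_u$ for the threshold and $S_j=K_u+\underline B_{u,\Delta}^{\Pi}(y_j)-\Theta$ for the certified surplus.

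\textbf{Lower bound.} By the definition of the planned robust sale, $\widehat x_u=y_j$ belongs to $\widehat{\mathcal F}_{u,\Delta}(q)$ in \eqref{eq:robust-feasible-set}. This is exactly the inequality $K_u+\underline B_{u,\Delta}^{\Pi}(y_j)\geq\Theta$, so $S_j\geq0$ follows immediately.

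\textbf{Upper bound.} Since $j\geq1$, the grid point $y_{j-1}$ exists and satisfies $y_{j-1}<y_j=\widehat x_u$. By \Cref{prop:certification-minimality}, or directly by minimality in \eqref{eq:planned-sale}, $y_{j-1}\notin\widehat{\mathcal F}_{u,\Delta}(q)$. Because $y_{j-1}\in\Qset_u(q)$, the only way it can fail to be feasible is that the defining inequality fails strictly:
\[
K_u+\underline B_{u,\Delta}^{\Pi}(y_{j-1})<\Theta.
\]
Subtracting this from the identity $S_j=K_u+\underline B_{u,\Delta}^{\Pi}(y_j)-\Theta$ gives
\[
S_j<\underline B_{u,\Delta}^{\Pi}(y_j)-\underline B_{u,\Delta}^{\Pi}(y_{j-1})=\Delta\underline B_j.
\]
Combining the two bounds yields the claimed chain $0\le S_j<\Delta\underline B_j$. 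As a by-product, $\Delta\underline B_j>0$.

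\textbf{Possible obstacles.} No monotonicity of the envelope is needed here, so \Cref{lem:envelope-monotone} is not required. The step that needs care is confirming that infeasibility of $y_{j-1}$ really means strict failure of the inequality, rather than exclusion from the grid. This holds because $y_{j-1}$ is by construction an element of $\Qset_u(q)$.

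The other point to watch is whether the infimum in \eqref{eq:lower-proceeds-envelope} could be $-\infty$ or otherwise not finite, which would make the differences meaningless. On the finite grid with non-negative settled proceeds, both envelope values are finite reals, so the subtractions are legitimate.
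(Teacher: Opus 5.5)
Your proof is correct and matches the paper's argument. Feasibility of $y_j$ gives the lower bound, and minimality forces $K_u+\underline B_{u,\Delta}^{\Pi}(y_{j-1})<\overline H_{u,\Delta}+m_u$, which rearranges to the strict upper bound once $\Delta\underline B_j$ is added; your side remarks (no envelope monotonicity needed, strictness coming from the $\geq$ in the feasibility definition, finiteness of the envelope values) are reasonable clarifications rather than a different route.
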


\begin{proof}
Feasibility of $y_j$ gives the weak lower bound. Minimality implies
$K_u+\underline B_{u,\Delta}^{\Pi}(y_{j-1})<\overline H_{u,\Delta}+m_u$. Add $\Delta\underline B_j$ to both sides and rearrange.
\end{proof}

An analogous pathwise statement holds for the realized audit minimum $x_\sigma^\ast$ on the cumulative settled-fill grid. Pending matched batches require an additional overshoot bound because cancellation may not prevent their later settlement.

\subsection{Detailed robust-clearing argument}

The proof of \Cref{thm:robust-clearing} can be written as a chain of inequalities for the full certified cap. For the realized operating path $\omega$,
\begin{align}
K_u+B_{u,\Delta}^{\mathrm{set},\Pi}(\widehat x_u,\omega)
&\ge
K_u+\underline B_{u,\Delta}^{\Pi}(\widehat x_u) \\
&\ge
\overline H_{u,\Delta}+m_u \\
&\ge
D_{u+\Delta}(\omega)+m_u.
\label{eq:robust-proof-chain}
\end{align}
Thus the full cap is sufficient to cover realized debt and the registered buffer. The live controller need not always settle the full cap. If it stops at the first confirmed prefix that drives debt to zero, that prefix guarantees only non-negative residual cash; it need not preserve $m_u$. The buffer guarantee requires either an explicit debt-plus-buffer stopping rule or settlement of the complete certified cap. \Cref{ass:authority,ass:lien} make the risk-reducing action and debt-first application enforceable; \Cref{ass:settlement} identifies the value eligible for the left-hand side. The first time the journaled cash application drives debt to zero is the stopping time $\sigma\le u+\Delta$.

The theorem is therefore an implication of three independently auditable claims: a conservative proceeds certificate, a conservative debt certificate, and enforceable control over the financed asset. Failure of any one moves the position outside the theorem's scope.

\subsection{Realized residual-exposure maximality with cash}

Suppose a realized debt-clearing sale $x$ leaves surplus cash
\begin{equation}
Z(x)=K_u+B_\sigma^{\mathrm{audit}}(x)-D_{\sigma^-}\ge0.
\end{equation}
The objective ``preserve the largest event exposure'' orders solutions by residual quantity $q-x$, not by surplus cash. Under that objective, \Cref{prop:residual-maximality} follows immediately from minimum quantity. If instead the user values cash and event exposure through a utility function $U(q-x,Z(x))$, the minimum-quantity rule need not be utility-optimal. The paper makes no welfare-optimality claim.

\subsection{Exact leverage on a finite policy set}

\begin{proposition}[Existence of the exact tier cap]
\label{prop:finite-tier-cap}
If $\mathcal L_{\mathrm{policy}}$ is finite and contains at least one $L$ for which $\Psi(L)\ge0$, then $L_{\max}^{\mathrm{exact}}$ in \eqref{eq:exact-leverage-envelope} exists and is unique as the largest numerical tier. If no tier is feasible, no new leveraged position is admitted.
\end{proposition}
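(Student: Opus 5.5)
The argument is elementary order theory on a finite subset of the reals, in the same pattern as \Cref{prop:planned-existence}. Set $\mathcal L^{\mathrm{exact}}=\{L\in\mathcal L_{\mathrm{policy}}:\Psi(L)\ge0\}$ as in \eqref{eq:exact-leverage-envelope}.

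First I check that $\Psi(L)$ is a well-defined real number for every tier $L\in\mathcal L_{\mathrm{policy}}$, so that the predicate $\Psi(L)\ge0$ is meaningful. This uses the following facts.
\begin{itemize}
\item $q_0(L)$ and $K_0(L)$ exist by \eqref{eq:entry-quantity}--\eqref{eq:entry-cash}. The buy grid is finite and contains $0$ with $A_{t_0}^{\mathrm{set}}(0)=0\le LC$ by \Cref{ass:acquisition}, so the maximum is attained.
\item The lower envelope $\underline B$ and the upper envelope $\overline H$ are taken as given, finite registry outputs for that tier.
\item If some envelope is infinite or undefined, that tier simply fails the test. This does not affect the argument below.
\end{itemize}

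Next, $\mathcal L^{\mathrm{exact}}$ is a subset of the finite set $\mathcal L_{\mathrm{policy}}\subset[1,\infty)$, so it is finite. By hypothesis it is non-empty. A non-empty finite subset of $\mathbb R$ is totally ordered by $\le$ and has a maximum: induct on cardinality, comparing each new element with the current maximum. Uniqueness follows from antisymmetry of $\le$. If two elements $L,L'$ were both maximal, then $L\le L'$ and $L'\le L$, so $L=L'$.

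I stress that no monotonicity of $\Psi$ is used anywhere, consistent with the remark after the definition. The cap is the largest feasible tier, not a threshold below which every tier is feasible. This point deserves one sentence, since it is the only place a reader might expect a stronger claim.

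For the second sentence of the statement: if $\Psi(L)<0$ for every tier, then $\mathcal L^{\mathrm{exact}}=\varnothing$ and $L_{\max}^{\mathrm{exact}}$ is undefined. Admission then fails for three reasons.
\begin{itemize}
\item The admission rule issues leverage only at a tier in $\mathcal L^{\mathrm{exact}}$, and this set is empty.
\item Gate (G4) independently blocks the position, because the full-position lower proceeds envelope is below the debt-and-buffer threshold.
\item By \Cref{prop:planned-existence}, the full-quantity inequality failing means no planned sale exists at hard-flat. This is exactly the certification that \Cref{prop:exact-leverage} would otherwise supply.
\end{itemize}

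I expect no real obstacle. The only care needed is the well-definedness of $\Psi$ on every tier, and stating the non-admission clause as a consequence of the policy rule rather than as a mathematical theorem.
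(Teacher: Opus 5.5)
Your proof is correct and is essentially the paper's argument: the feasible tiers form a non-empty subset of a finite totally ordered set, so a unique maximum exists, and no monotonicity of $\Psi$ is needed. The paper's proof does not argue the non-admission clause at all, treating it as a policy consequence as you do. Your extra well-definedness checks and your first two non-admission reasons are harmless. The third reason is looser than you claim, however: $\Psi(L)$ is an entry-time projection, whereas \Cref{prop:planned-existence} concerns the realized $(q,K_u,\overline H_{u,\Delta},m_u)$ and the grid maximum $y_m\le q$, so it does not literally show that no planned sale will exist at $u$.
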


\begin{proof}
The feasible subset of a finite totally ordered set is finite. Every non-empty finite totally ordered set has a unique maximum.
\end{proof}

The result does not require $\Psi(L)$ to be globally monotone. Entry ladders, fee tiers, discrete quantities, and depth cliffs can create local non-monotonicity. A reference implementation should evaluate every approved tier or establish monotonicity before using bisection.

\subsection{Linear-execution benchmark}

Under proportional entry cost $A(q)=aq$, proportional robust exit proceeds $\underline B(q)=bq$, no residual entry cash, debt growth factor $\rho_D\ge1$, and zero explicit buffer, candidate leverage $L$ is feasible exactly when
\begin{equation}
gL\ge\rho_D(L-1),
\qquad g=\frac{b}{a}.
\end{equation}
When $0\le g<\rho_D$, equality gives
\begin{equation}
L_{\max}^{\mathrm{lin}}=\frac{\rho_D}{\rho_D-g}.
\end{equation}
With a buffer of $mC$, the corresponding bound is
\begin{equation}
L\le\frac{\rho_D-m}{\rho_D-g}
\end{equation}
when the numerator and denominator are positive. These formulas are algebraic benchmarks only. The exact controller uses \eqref{eq:exact-feasibility-margin} and shared-book constraints.

\subsection{Top-of-book double counting from marginal proceeds}

Let $p(z)$ be a non-increasing non-negative marginal net proceeds curve and
\begin{equation}
B(x)=\int_0^x p(z)\,\dd z.
\end{equation}
For $x,y\ge0$,
\begin{align}
B(x+y)-B(x)
&=\int_x^{x+y}p(z)\,\dd z \\
&\le\int_0^y p(z)\,\dd z=B(y).
\end{align}
Hence $B(x+y)\le B(x)+B(y)$. Iteration gives \eqref{eq:double-counting}. On a finite step book, the same proof is a sum over marginal lots.

The inequality is strict when some of the aggregate marginal units trade at a worse net price than the separately re-anchored units. This is the typical case when the combined sale walks beyond the best levels.

\subsection{Priority-allocation telescoping}

For priority order $\pi$, let $X_j=\sum_{k\le j}x_{\pi_k}$ and define the pathwise increment
\begin{equation}
g_j(\omega)
=
\mathcal B_{u,\Delta}^{\mathrm{set},\Pi}(X_j,\omega)
-
\mathcal B_{u,\Delta}^{\mathrm{set},\Pi}(X_{j-1},\omega).
\end{equation}
For every path,
\begin{equation}
\sum_{j=1}^{n}g_j(\omega)
=
\mathcal B_{u,\Delta}^{\mathrm{set},\Pi}(X_n,\omega),
\end{equation}
with $X_0=0$. The certified incremental amount is $\underline g_j=\inf_{\omega}g_j(\omega)$ as in \eqref{eq:incremental-proceeds}. Therefore
\begin{equation}
\sum_{j=1}^{n}\underline g_j
\leq
\inf_{\omega}
\mathcal B_{u,\Delta}^{\mathrm{set},\Pi}(X_n,\omega)
=
\underline{\mathcal B}_{u,\Delta}(X_n).
\end{equation}
Thus the lower-bound allocation is conservative and assigns every unit of pathwise aggregate proceeds at most once. The priority-feasibility inequalities in \eqref{eq:priority-feasibility} determine whether each segregated loan clears under that shared execution family.

\subsection{Adversarial construction for the impossibility boundary}

Fix any backend algorithm that permits $L>1$ and therefore creates $D_0>0$. Consider two paths that are identical up to the instant before the first debt-clearing order can become irrevocably effective:
\begin{enumerate}[label=(\roman*)]
  \item on the operating path, the venue remains open, the signer remains available, and sufficient bids settle;
  \item on the failure path, the venue closes, all bids disappear, the signer authority is revoked, or the pledged asset is removed.
\end{enumerate}
The algorithm cannot distinguish the paths before the branching event. On the failure path no dedicated sale cash is produced. If the held token later pays zero and no independent collateral or guarantee covers the balance, the lender shortfall is positive by \Cref{prop:positive-debt-impossibility}. This proves \Cref{thm:no-universal} without assuming any particular quote process.

\subsection{State invariant with exception states}

Let
\begin{equation}
\mathcal B=
\{\state{OPEN},\state{REDUCE\_ONLY},\state{DELEVERAGING},
\state{EVENT\_CLOSE\_EXCEPTION},\state{CREDIT\_SHORTFALL}\}
\end{equation}
be the states in which positive debt may exist. A failed hard-flat transition is
\begin{equation}
\state{DELEVERAGING}\longrightarrow\state{CREDIT\_SHORTFALL}\in\mathcal B,
\end{equation}
not a transition into an ordinary finality state. The underlying market may already be pending resolution, but the account state must preserve the unresolved credit exception. This prevents a lifecycle label from masking a positive receivable.

\subsection{Reserve sufficiency under several collateral assets}

If cluster shortfalls are denominated in several assets, let $p_j(s)$ be the conservative conversion value of asset $j$ into the reserve numeraire in scenario $s$. Define
\begin{equation}
\ell_k^{\mathrm{num}}(s)=\sum_j p_j(s)\ell_{kj}(s).
\end{equation}
The reserve condition becomes
\begin{equation}
\mathcal R
\ge
\sup_{s\in\Sset}
\sum_k\sum_j p_j(s)\ell_{kj}(s).
\end{equation}
The conversion values must themselves be stressed. Treating depegging collateral at par can make a formal reserve inequality economically meaningless.

\section{Reference Algorithms}
\label{app:algorithms}

The pseudocode is normative at the invariant level and illustrative at the API level. An implementation may use different queues, languages, or venue calls, but it should preserve the separation among quote, match, settlement, debt application, finality, and redemption.

\subsection{Entry admission and exact leverage tiers}

\begin{lstlisting}[language={},caption={Exact finite-book entry admission}]
function admitEntry(request, account, market):
    assert market.state == OPEN
    assert marketData.isSynchronized(market)
    assert liquidationAuthority.isEnforceable(account)
    assert collateralPolicy.preventsEscape(account)

    candidates = sortAscending(product.allowedLeverageTiers)
    feasible = []

    for L in candidates:
        budget = request.collateral * L
        entry = venue.settledBuyQuote(request.heldToken, budget)
        q0 = entry.quantity
        K0 = budget - entry.confirmedCost

        u = lifecycle.hardFlatDecisionTime(market, L)
        Delta = risk.registeredSettlementHorizon(market, L)
        Uop = risk.operatingSet(market, request.heldToken, u, Delta)
        Ustress = risk.registeredStressSet(market, request.heldToken, u, Delta)
        assert Uop is subset of Ustress
        lowerExit = risk.lowerSettledProceedsEnvelope(
            policy = hardFlatPolicy,
            scenarios = Uop,
            quantity = q0
        )
        debtUpper = credit.upperDebtEnvelope(
            principal = (L - 1) * request.collateral,
            horizon = u + Delta
        )
        buffer = risk.hardFlatBuffer(account, market, q0, L)

        psi = K0 + lowerExit - debtUpper - buffer
        aggregatePsi = risk.aggregateCoverageAfterHypotheticalEntry(
            account, market, q0, debtUpper, buffer
        )

        if psi >= 0 and aggregatePsi >= 0:
            feasible.append(L)

    assert request.leverage in feasible
    return buildTransparentEntryPreview(request, feasible)

# The maximum admissible leverage is max(feasible), not a midpoint ratio.
# The current book is a quote input, not a settlement guarantee.
# The loan activates only after the entry acquisition settles.
\end{lstlisting}

\subsection{Robust hard-flat planner}

\begin{lstlisting}[language={},caption={Ex-ante robust sale planner}]
function planHardFlat(position, decisionTime):
    assert position.state in {REDUCE_ONLY, DELEVERAGING}
    assert position.debt > 0
    assert venue.acceptsRiskReducingOrders(position.market)
    assert signer.authorizedForRiskReduction(position.account)
    assert collateral.isLockedToDebtWaterfall(position)

    Delta = risk.registeredSettlementHorizon(position.market)
    Uop = risk.freezeOperatingSet(
        position, decisionTime, Delta, currentBookVersion
    )
    Ustress = risk.freezeStressSet(
        position, decisionTime, Delta, currentBookVersion
    )
    assert Uop is subset of Ustress
    lowerCurve = risk.lowerSettledProceedsCurve(
        executionPolicy, Uop, position.quantity
    )
    debtUpper = credit.upperDebtEnvelope(
        position.debt, decisionTime, Delta, Uop
    )
    target = debtUpper + risk.hardFlatBuffer(position, Uop)

    xHat = first admissible x such that
        position.dedicatedCash + lowerCurve(x) >= target

    if no xHat exists:
        transition(position, CREDIT_SHORTFALL)
        openIncident(position, ROBUST_SET_INFEASIBLE)
        return INFEASIBLE

    persistImmutableCertificate(
        positionId = position.id,
        decisionTime,
        horizon = Delta,
        operatingSetVersion = Uop.version,
        stressSetVersion = Ustress.version,
        bookVersion = currentBookVersion,
        debtUpper,
        target,
        plannedSaleCap = xHat
    )
    return xHat
\end{lstlisting}

The certificate is frozen before execution. Refreshing the book may produce a new certificate, but the implementation must not overwrite the old one or claim retrospectively that an adverse path was inside the earlier operating set.

\subsection{Recourse execution and settlement loop}

\begin{lstlisting}[language={},caption={Hard-flat execution with settlement-confirmed stopping}]
function executeHardFlat(positionId, certificateId):
    lock positionId
    position = loadForUpdate(positionId)
    certificate = loadCertificate(certificateId)
    assert certificate.positionId == positionId
    transition(position, DELEVERAGING)
    cancelRiskIncreasingOrders(position)
    unlock positionId

    remainingCap = certificate.plannedSaleCap

    while position.debt > 0 and remainingCap > 0:
        if venue.actualState(position.market) != OPEN:
            transition(position, EVENT_CLOSE_EXCEPTION)
            openIncident(position, PREMATURE_CLOSE)
            return FAILURE_SET

        intent = executionPolicy.nextRiskReducingIntent(
            position,
            remainingCap,
            freshSynchronizedBook(),
            certificate
        )
        signedOrder = signer.signTypedIntent(intent)
        order = venue.submit(signedOrder, deterministicId(intent))

        wait until order has a terminal settlement classification or retry deadline

        for each newly confirmed settlement event:
            applySettlementAtomically(position, event)
            remainingCap -= event.settledQuantity

        for each settlement failure:
            postCompensatingJournal(event)
            markFailureForReconciliation(event)

        if position.debt == 0:
            cancelUnmatchedRemainder()
            transition(position, DEBT_FREE_SPOT)
            return SUCCESS

        if risk.realizedPathNoLongerSatisfiesCertificate(certificate):
            openIncident(position, OPERATING_SET_BREACH)
            executeFailurePolicy(position)
            return FAILURE_SET

    if position.debt > 0:
        transition(position, CREDIT_SHORTFALL)
        executeReserveAndRecoveryWaterfall(position)
        return SHORTFALL
\end{lstlisting}

A match does not decrement debt. Only \texttt{applySettlementAtomically} can do so.

\subsection{Atomic settlement and debt application}

\begin{lstlisting}[language={},caption={Idempotent settlement handler}]
function applySettlementAtomically(position, settlementEvent):
    if journal.contains(settlementEvent.externalId):
        return ALREADY_APPLIED

    assert settlementEvent.chain == configuredChain
    assert settlementEvent.order belongs to position
    assert settlementEvent.assetTransfers match signed intent

    atomic database transaction:
        move pendingOutcomeQuantity to confirmedSoldQuantity
        credit confirmedNetCash by settlementEvent.netCash
        post venue and protocol fees separately

        available = position.dedicatedCash + settlementEvent.netCash
        payment = min(position.debt, available)
        debit confirmedNetCash / dedicatedCash by payment
        credit loanReceivable by payment
        position.debt -= payment
        position.dedicatedCash = available - payment

        insert unique journal key settlementEvent.externalId
        persist confirmed chain / venue reference

        if position.debt == 0:
            record debtExtinguishmentTime = settlementEvent.confirmedAt

    enqueue independentBalanceReconciliation(position.account)
\end{lstlisting}

No negative debt is manufactured. Cash received after extinction is surplus user cash or reserve cash according to the pre-registered waterfall.

\subsection{Aggregate shared-book planner}

\begin{lstlisting}[language={},caption={Priority-feasible aggregate hard-flat planning}]
function planAggregateHardFlat(riskBucket):
    positions = loadDebtPositions(riskBucket)
    for position in positions:
        position.robustCoverage = (
            position.dedicatedCash
            + risk.standaloneLowerFullSaleValue(position)
        ) / (position.debtUpper + position.buffer)

    priority = sortBy(
        robustCoverage ascending,
        hardFlatDeadline ascending,
        positionId ascending
    )

    sharedScenarioFamily = risk.sharedPathwiseProceedsFamily(
        riskBucket, sum(position.quantity), registeredAggregateScenarios
    )

    consumed = 0
    allocation = []

    for position in priority:
        target = position.debtUpper + position.buffer - position.dedicatedCash
        incrementalLowerCurve = risk.incrementalLowerProceedsCurve(
            sharedScenarioFamily,
            consumedBefore = consumed,
            maxPositionQuantity = position.quantity
        )
        x = first admissible quantity such that
            incrementalLowerCurve(x) >= target

        if no x exists within position.quantity:
            return AGGREGATE_INFEASIBLE

        allocation.append(position.id, x, consumed, incrementalLowerCurve.hash)
        consumed += x

    persistAggregateCertificate(
        bucketVersion,
        scenarioVersion,
        priority,
        allocation,
        sharedScenarioFamily.hash
    )
    return allocation
\end{lstlisting}

This algorithm prevents each position from reusing the top of the same book. The standalone coverage ratio determines ordering only; it is never summed as a solvency measure. If the legal structure permits pooled repayment, the planner may solve one aggregate inequality instead, but the pooling rule must be explicit.

\subsection{Pro-rata principal-shortfall reserve allocation}

\begin{lstlisting}[language={},caption={Deterministic reserve allocation after shared-book execution}]
function allocateReserve(shortfalls, reserveAmount):
    assert every shortfall >= 0
    total = sum(shortfalls)

    if total == 0:
        return zero allocation

    if total <= reserveAmount:
        return allocation[i] = shortfalls[i] for every i

    for position i:
        allocation[i] = reserveAmount * shortfalls[i] / total
        assert 0 <= allocation[i] <= shortfalls[i]

    assert sum(allocation) == reserveAmount
    persistImmutableReserveAllocation(shortfalls, reserveAmount, allocation)
    return allocation
\end{lstlisting}

The rule is applied independently at each waterfall stage. Rounding residuals caused by token or collateral precision are assigned by deterministic position identifier and may not increase total allocation beyond the available reserve.

\subsection{Market lifecycle and finality worker}

\begin{lstlisting}[language={},caption={Lifecycle, dispute, and redemption worker}]
function processMarket(marketId):
    restState = venue.getMarketState(marketId)
    chainState = oracleIndexer.getState(marketId)
    local = loadMarket(marketId)

    if restState == OPEN and now >= local.reduceOnlyAt:
        transitionMarket(local, REDUCE_ONLY)
        rejectRiskIncreasingOrders(local)

    if restState == OPEN and now >= local.hardFlatDecisionAt:
        enqueueAggregateHardFlat(local.riskBucket)

    if restState in {CLOSED, PAUSED, SUSPENDED}:
        rejectAllNewOrders(local)
        if any ordinary position has debt > 0:
            transitionAffectedPositions(EVENT_CLOSE_EXCEPTION)
            openIncident(local, CLOSED_WITH_DEBT)
        else:
            transitionMarket(local, PENDING_FINALITY)

    if chainState == PROPOSED:
        storeProvisionalVector(chainState)
        doNotPostFinalPnL()

    if chainState in {DELAYED, DISPUTED}:
        assert all ordinary positions have debt == 0
        transitionMarket(local, chainState)

    if chainState == FINAL:
        storeFinalPayoutVector(chainState)
        transitionMarket(local, FINAL)
        enqueueRedemptions(local)

    for each redemption job:
        if chain confirmation observed:
            postRedemptionCashIdempotently(job)
            transitionPosition(job.position, REDEEMED)
        else if permanently failed:
            transitionPosition(job.position, REDEMPTION_EXCEPTION)
            openIncident(job.position, REDEMPTION_FAILURE)
\end{lstlisting}

\subsection{Idempotency and audit rule}

Every command that can move financial state has a deterministic or caller-supplied idempotency key over
\begin{equation}
(\text{account},\text{position},\text{operation},\text{attempt group},\text{external reference}).
\end{equation}
Retries may create new venue order identifiers within one attempt group. Settlement and redemption events use their final venue or chain identifier as unique journal keys. Risk certificates, scenario versions, priority orders, and configuration changes are append-only audit records.

\section{Deterministic Scenario Registry}
\label{app:scenarios}

This appendix fixes every numerical input used by the verifier in \Cref{sec:deterministic}. The registry is author-specified, not calibrated to a venue, and must not be read as observed market data. All financial arithmetic uses decimal rather than binary floating-point values.

\subsection{Robust-sale illustration}

The single-position robust-sale example uses an already effective entry unit cost rather than replaying an entry ask ladder:
\begin{align}
C&=500, & L&=2, & N&=1000,\\
a_{\mathrm{eff}}&=0.42, & q&=N/a_{\mathrm{eff}}=2380.95238095,\\
r&=0.15\text{ per year}, & \Delta t&=7/365,\\
D_u&=500\exp(0.15\cdot7/365)=501.44042702.
\end{align}
The largest registered operating settlement horizon and charge produce
\begin{equation}
\overline H_{u,\Delta}=502.69071323.
\end{equation}
The explicit hard-flat buffer is $m_u=5$, so the lower-envelope target is
\begin{equation}
\overline H_{u,\Delta}+m_u=507.69071323.
\end{equation}

\subsection{Operating settled-proceeds paths}

Each path is a finite gross bid ladder plus a proportional sale fee and settlement charge. The lower envelope is the pointwise minimum of the three settled-proceeds curves.

\begin{table}[H]
\centering
\small
\begin{tabular}{@{}llrrr@{}}
\toprule
Path & Level & Quantity & Gross bid & Sale fee \\
\midrule
Favorable & 1 & 800 & 0.390 & 0.4\% \\
          & 2 & 900 & 0.375 &  \\
          & 3 & 1,200 & 0.355 &  \\
\addlinespace
Base      & 1 & 600 & 0.370 & 1.0\% \\
          & 2 & 700 & 0.355 &  \\
          & 3 & 800 & 0.335 &  \\
          & 4 & 1,000 & 0.300 &  \\
\addlinespace
Adverse operating & 1 & 500 & 0.340 & 1.2\% \\
          & 2 & 700 & 0.320 &  \\
          & 3 & 900 & 0.295 &  \\
          & 4 & 1,000 & 0.260 &  \\
\bottomrule
\end{tabular}
\caption{Registered operating bid ladders. Settlement delays are 5, 30, and 120 seconds; fixed settlement charges are 0.25, 0.60, and 1.25 USDC respectively.}
\end{table}

The resulting planned sale and realized audit quantities are:

\begin{table}[H]
\centering
\small
\begin{tabular}{@{}lrrrrr@{}}
\toprule
Path & Debt & Audit minimum & Planned cap & Residual & Shortfall \\
\midrule
Favorable operating & 501.6904 & 1,311.2140 & 1,606.2949 & 1,069.7384 & 0 \\
Base operating      & 502.0405 & 1,409.2884 & 1,606.2949 & 971.6640 & 0 \\
Adverse operating   & 502.6907 & 1,589.1399 & 1,606.2949 & 791.8125 & 0 \\
\bottomrule
\end{tabular}
\caption{Ex-ante planned cap versus realized post-settlement audit minimum. The planned cap is fixed before the path is known; the audit minimum varies with the realized path.}
\label{tab:deterministic-scenarios}
\end{table}

If the entire planned cap were already matched before cancellation, quantity overshoot relative to the audit minimum would be 295.08087912, 197.00650360, and 17.15501269 tokens respectively. A recourse controller that waits for settlement-confirmed prefixes can reduce overshoot, subject to venue latency and cancellation semantics.

\subsection{Nested uncertainty registry and adversarial transforms}

The deterministic registry labels the three favorable/base/adverse paths as $\Uop$. The registered stress set $\Ustress$ contains all three operating paths plus the transformation and failure fixtures below. The verifier requires $\Uop\subseteq\Ustress$ and treats every $\Ustress\setminus\Uop$ path as non-certified.

For a base bid ladder and reference mid $m_u=0.40$, the transform fixtures are:
\begin{center}
\begin{tabular}{@{}lrrr@{}}
\toprule
Fixture & $k$ & $h$ & $s$ \\
\midrule
Identity & 0 & 0 & 1.00 \\
Top-level removal & 1 & 0 & 1.00 \\
25\% depth haircut & 0 & 0.25 & 1.00 \\
Spread expansion & 0 & 0 & 1.50 \\
Combined adversarial & 1 & 0.35 & 1.50 \\
Complete withdrawal & all & 1.00 & 3.00 \\
\bottomrule
\end{tabular}
\end{center}
For every positive test quantity, the verifier requires the transformed proceeds not to exceed the identity proceeds. Complete withdrawal must produce zero executable quantity and zero proceeds. These are deterministic stress transformations, not empirically calibrated probabilities.

\subsection{Failure-set paths}

The thin-depth failure ladder contains 350 tokens at 0.315 and 350 at 0.275 with a 1.5\% sale fee. Maximum confirmed proceeds are 203.40250000, producing shortfall 299.28821323 against debt 502.69071323. The zero-liquidity path has an empty bid ladder and shortfall 502.69071323. The verifier requires both paths to remain classified as failures; it is an error to label them debt-cleared.

\subsection{Control-capability failure paths}

Four additional negative controls are not price-book scenarios. They represent failures of conditions required by \Cref{thm:robust-clearing}: premature venue close, loss of signing authority, collateral escape from the debt-first control boundary, and persistent settlement failure. For each path, the verifier requires
\begin{equation}
\omega\notin\Uop_u(\Delta)
\qquad\text{and}\qquad
\state{DEBT\_FREE\_SPOT}\text{ is not an admissible ordinary transition}.
\end{equation}
The required outputs are explicit exception states rather than a fabricated successful hard-flat. These controls verify classification logic only; they do not assign probabilities to the failures.

\subsection{Exact leverage-tier illustration}

The leverage-tier test uses a separate finite ask ladder so that entry impact grows with size. Gross ask levels are:

\begin{center}
\begin{tabular}{@{}rrr@{}}
\toprule
Level & Quantity & Gross ask \\
\midrule
1 & 700 & 0.390 \\
2 & 800 & 0.410 \\
3 & 1,000 & 0.445 \\
4 & 1,500 & 0.500 \\
5 & 2,500 & 0.600 \\
\bottomrule
\end{tabular}
\end{center}
The proportional entry fee is 1.0\%. The adverse operating exit curve supplies the robust full-sale proceeds. The buffer is
\begin{equation}
m(L)=2.50+0.005\,LC.
\end{equation}
The finite policy set is
\begin{equation}
\mathcal L_{\mathrm{policy}}
=
\{1,1.25,1.5,1.75,2,2.25,2.5,3\}.
\end{equation}

\begin{table}[H]
\centering
\scriptsize
\begin{tabular}{@{}rrrrrrc@{}}
\toprule
$L$ & Entry quantity & Effective entry cost & Debt upper & Robust proceeds & $\Psi(L)$ & Feasible \\
\midrule
1.00 & 1,241.58415842 & 0.40271132 & 1.25000000 & 401.39211881 & 395.14211881 & yes \\
1.25 & 1,540.02669930 & 0.40583712 & 126.61017831 & 488.37618178 & 356.14100347 & yes \\
1.50 & 1,818.14439871 & 0.41250849 & 251.97035661 & 569.43636645 & 311.21600983 & yes \\
1.75 & 2,096.26209812 & 0.41740964 & 377.33053492 & 650.49655112 & 266.29101620 & yes \\
2.00 & 2,374.37979753 & 0.42116261 & 502.69071323 & 722.06868239 & 211.87796916 & yes \\
2.25 & 2,635.72277228 & 0.42682789 & 628.05089154 & 789.20246574 & 153.02657421 & yes \\
2.50 & 2,883.24752475 & 0.43353891 & 753.41106984 & 852.78662416 & 90.62555432 & yes \\
3.00 & 3,378.29702970 & 0.44401069 & 1,004.13142646 & 857.09000000 & $-157.04142646$ & no \\
\bottomrule
\end{tabular}
\caption{Exact finite-book leverage test. The largest feasible registered tier is $2.50\times$; $3.00\times$ fails despite the scalar benchmark being calculated alongside it.}
\label{tab:leverage-values}
\end{table}

The quantity at $L=2$ differs from the robust-sale illustration because this subsection explicitly walks a finite entry ask ladder, whereas the first subsection fixes an effective entry cost. The two fixtures test different parts of the implementation.

\subsection{Aggregate shared-book registry}

Three positions share one bid ladder:
\begin{center}
\begin{tabular}{@{}rrr@{}}
\toprule
Level & Quantity & Gross bid \\
\midrule
1 & 500 & 0.380 \\
2 & 700 & 0.350 \\
3 & 900 & 0.310 \\
4 & 1,200 & 0.260 \\
\bottomrule
\end{tabular}
\end{center}
The sale fee is 1.0\%. Position inputs are
\begin{center}
\begin{tabular}{@{}lrrrrr@{}}
\toprule
Position & Quantity & Cash & Debt & Buffer & Deadline rank \\
\midrule
P1 & 600 & 0 & 170 & 10 & 3 \\
P2 & 700 & 5 & 190 & 10 & 1 \\
P3 & 850 & 0 & 185 & 10 & 2 \\
\bottomrule
\end{tabular}
\end{center}
The sum of independently re-anchored values is 789.52500000 USDC, while selling the combined quantity through one shared book yields 719.73000000 USDC. The mechanical overstatement is therefore 69.79500000 USDC\@.

The verifier computes $\Gamma_i$, freezes the lowest-coverage-first ordering, and walks the book only once. All three loans must clear in this fixture without reusing depth. A deliberately increased aggregate-debt variant is required to fail. A separate reserve fixture uses shortfalls $(120,80,200)$ and reserve $R=250$, producing pro-rata allocations $(75,50,125)$ and residual shortfalls $(45,30,75)$.

\subsection{Payout, dispute, and redemption grid}

For every successful operating path, the verifier evaluates
\begin{equation}
\pi_h\in\left\{0,\frac12,1\right\},
\qquad
\tau_f-\sigma\in\{0,1,7,30,365\}\text{ days},
\end{equation}
and redemption delays of 0, 1, 7, and 30 days. Loan-channel credit loss remains zero after confirmed debt extinction. User claim value varies linearly with $\pi_h$. Time value and opportunity cost are intentionally excluded from this accounting check.

\subsection{Linear shortfall surface}

The intuition surface in \Cref{fig:shortfall-surface} uses debt-growth factor $\rho_D=1.01$, leverage $L\in[1,4]$, and scalar recovery ratio $g\in[0,1]$:
\begin{equation}
\ell(L,g)=\max\{\rho_D(L-1)-gL,0\}.
\end{equation}
This surface is not used for exact tier admission or reserve sizing.

\subsection{Why no random simulation is included}

A Monte Carlo exercise would require unobserved assumptions about quote cancellation, strategic maker response, early close, settlement failure, signer availability, and correlations. Without venue-specific calibration, such a distribution would create apparent precision rather than evidence. The deterministic registry instead verifies the algebra and makes failure boundaries explicit. A later empirical study can replace the author-specified operating set with pre-registered, data-estimated envelopes.

\section{Reproducibility, Versioning, and Release Package}
\label{app:reproducibility}

\subsection{Package structure}

The complete release is organized as follows:

\begin{lstlisting}[language={}]
axient-debt-free-finality-r0.3.1/
  main.pdf
  main.tex
  main.bbl
  references.bib
  sections/
  appendices/
  figures/
  code/
    generate_deterministic_evaluation.py
  outputs/
    robust_sale_scenarios.csv
    exact_leverage_tiers.csv
    aggregate_liquidity.csv
    stress_transformations.csv
    reserve_waterfall.csv
    control_failure_paths.csv
    verification_report.json
  REVIEW_RESPONSE_r0.3.1.md
  REVISION_NOTES_r0.3.1.md
  README.md
  README_RU.md
  requirements.txt
  ARXIV_SUBMISSION_METADATA.md
  FORESIGHTFLOW_PUBLICATION_PAGE.md
  publication_metadata.json
  abstract.txt
  arxiv_abstract.txt
  CHANGELOG.md
  CITATION.cff
  QA/
  LICENSE
  LICENSES/
  MANIFEST.sha256
\end{lstlisting}

The clean arXiv source archive contains only the manuscript sources, bibliography output, figures required by the manuscript, a short build note, and a cryptographic manifest. The complete package additionally contains verification code, generated outputs, reviewer-response material, metadata, and quality-assurance reports.

\subsection{Reproduction commands}

From the package root:

\begin{lstlisting}[language={}]
python code/generate_deterministic_evaluation.py
pdflatex -interaction=nonstopmode -halt-on-error main.tex
bibtex main
pdflatex -interaction=nonstopmode -halt-on-error main.tex
pdflatex -interaction=nonstopmode -halt-on-error main.tex
\end{lstlisting}

The Python program makes no network request. It regenerates the numerical outputs and all manuscript figures derived from the deterministic fixtures. The LaTeX build consumes only local files.

\subsection{Arithmetic and determinism}

Core financial calculations use Python's \texttt{Decimal} type at precision 50. Binary floating point is used only after values are converted for plotting. Scenario inputs, output tables, and verification results are serialized as decimal strings. The program uses no random seed because the registered scenario set is deterministic.

The current verifier records 249 named checks. The build is successful only if every required check is true. Checks include:
\begin{itemize}
  \item lower-envelope minimality immediately to the left and at the planned solution;
  \item audit feasibility and audit-minimum dominance by the planned cap on every operating path;
  \item quantity conservation;
  \item payout, dispute-duration, and redemption-latency invariance after debt extinction;
  \item failure classification for thin depth and zero liquidity;
  \item exclusion of premature close, signer loss, collateral escape, and persistent settlement failure from the robust operating set;
  \item exact leverage-tier acceptance and rejection;
  \item the linear benchmark identity as a separate algebraic check;
  \item top-of-book double-counting detection;
  \item nested $\Uop\subseteq\Ustress$ classification;
  \item monotonic adversity of registered $T_{k,h,s}$ transformations;
  \item lowest-$\Gamma_i$ aggregate execution without depth reuse;
  \item pro-rata reserve allocation consistency; and
  \item deliberate rejection of an aggregate over-capacity fixture.
\end{itemize}

Negative controls are first-class outputs. A verifier that returns \state{DEBT\_FREE\_SPOT} for the thin-depth or zero-liquidity path, or permits an ordinary debt-free transition after a required control capability has failed, is incorrect even if every operating fixture passes.

\subsection{What the verifier does not establish}

The release does not estimate:
\begin{itemize}
  \item the probability that a real path belongs to $\Uop$;
  \item a venue-specific distribution of settlement latency;
  \item actual-close hazard relative to a published close timestamp;
  \item strategic quote withdrawal caused by the mechanism's own hard-flat flow;
  \item the legal enforceability of a controlled signer or lender lien; or
  \item the economic cost of capital and reserves at production scale.
\end{itemize}
Those quantities require partner data, contractual review, and a separate pre-registered empirical study. They are not inferred from the deterministic fixtures in this paper.

\subsection{Revision discipline}

Version r0.3.1 is a final minor revision following the major r0.3.0 formal response. In addition to retaining all r0.3.0 changes, it:
\begin{enumerate}[label=(\roman*)]
  \item preserves the ex-ante planned-sale and post-settlement audit distinction introduced in r0.3.0;
  \item defines the nested $\Uop\subseteq\Ustress$ taxonomy;
  \item formalizes adversarial $T_{k,h,s}$ book transformations;
  \item fixes the canonical lowest-$\Gamma_i$ priority rule;
  \item adds the pro-rata principal-shortfall reserve allocation;
  \item updates the deterministic verifier for aggregate liquidity and stress transformations; and
  \item states in the abstract that empirical calibration and out-of-sample validation of the uncertainty sets are separate research tasks.
\end{enumerate}
The accompanying response memos distinguish accepted mathematical criticisms from factual API assertions that did not match the cited public documentation and record the final minor-review changes.

Future revisions must classify changes as mathematical correction, policy change, venue-adapter assumption, empirical calibration, implementation/security change, or editorial clarification. A theorem must not be silently modified to fit later data. If empirical evidence rejects the practical operating set, the deployment claim and policy must change while the conditional mathematical statement remains scoped to its assumptions.

\subsection{Suggested empirical protocol}

A venue-specific follow-up should pre-register:
\begin{itemize}
  \item market-selection and sample-adequacy gates;
  \item reconstruction and reconciliation rules for book, match, settlement, and balance data;
  \item the construction and target coverage of $\Uop$, plus the separate $\Ustress$ catalogue;
  \item actual-close and settlement-latency estimators;
  \item fees, gas, retries, and failed-settlement treatment;
  \item position-level and aggregate OI configurations;
  \item deterministic priority or pooling rules;
  \item comparison mechanisms: debt held through finality, full pre-close liquidation, and minimum debt-clearing conversion;
  \item primary outcomes: hard-flat completion, lender shortfall frequency and severity, residual user exposure, reserve draw, and operating-set breach rate; and
  \item falsifiability floors fixed before replay.
\end{itemize}
This follows the pre-registration and negative-result discipline of the earlier ForesightFlow Event-Linked Perpetuals work \citep{nechepurenko2026resolutionaware,nechepurenko2026fillside}.

\subsection{Licensing and archival identity}

The manuscript text is released under CC BY 4.0 and the verification code under the MIT License. The package includes \texttt{CITATION.cff}, arXiv metadata, ForesightFlow publication metadata, and SHA-256 manifests. The project page linked in the manuscript is non-archival; the PDF, source archive, and repository release are the citable research objects.

\section*{Acknowledgments}
The author acknowledges the ForesightFlow research programme for the empirical and theoretical foundation on event-linked perpetuals, informed-flow detection, and prediction-market microstructure on which this mechanism builds. This paper is a standalone mechanism-design study informed by, but not numbered within, the four-paper ForesightFlow Event-Linked Perpetuals programme. The manuscript text is intended for distribution under CC BY 4.0 and the accompanying verification code under the MIT License.

\paragraph{Computational assistance disclosure.}
Automated language and code-assistance tools, including generative language models, were used during manuscript preparation for editorial review, consistency checks, LaTeX typesetting support, and development and testing of deterministic verification scripts. The author specified and reviewed the research question, mechanism, assumptions, formal statements, proofs, numerical configurations, interpretation, and final manuscript, and takes responsibility for the complete work and accompanying materials. These tools were not treated as evidentiary sources and are not credited with authorship.

\bibliographystyle{plainnat}
\bibliography{references}

\end{document}